\documentclass[11pt,a4paper]{article}


\usepackage[english]{babel}
\usepackage[utf8]{inputenc}

\usepackage{amsmath,amsfonts,amssymb,amsthm}
\usepackage{amsthm}
\usepackage{graphicx,color}
\usepackage{boxedminipage}
\usepackage{framed}
\usepackage{thmtools}
\usepackage{thm-restate}
\usepackage{xspace}
\usepackage{cite}
\usepackage{geometry}
\geometry{
a4paper,
lmargin = 1in,
tmargin = 1in,
rmargin = 1in,
bmargin = 1in,
headheight = 0pt,
headsep = 0pt,
footskip = 6mm,
}

\usepackage{fdsymbol}

\newenvironment{subproof}[1][\proofname]{%
  \begin{proof}[#1]%
}{%
  \end{proof}%
}

 \usepackage[pdftex, plainpages = false, pdfpagelabels, 
                 bookmarks=false,
                 bookmarksopen = true,
                 bookmarksnumbered = true,
                 breaklinks = true,
                 linktocpage,
                 pagebackref,
                 colorlinks = true,  
                 linkcolor = blue,
                 urlcolor  = blue,
                 citecolor = red,
                 anchorcolor = green,
                 hyperindex = true,
                 hyperfigures
                 ]{hyperref} 
 \usepackage{xifthen}
 \usepackage{tabularx}
\usepackage{tikz} 
 \usetikzlibrary{calc}

\DeclareMathOperator{\operatorClassNP}{{\sf NP}}
\newcommand{\classNP}{\ensuremath{\operatorClassNP}}
\DeclareMathOperator{\operatorClassCoNP}{{\sf coNP}}
\newcommand{\classCoNP}{\ensuremath{\operatorClassCoNP}}
\DeclareMathOperator{\operatorClassFPT}{\sf {FPT}\xspace}
\newcommand{\classFPT}{\ensuremath{\operatorClassFPT}\xspace}
\DeclareMathOperator{\operatorClassW}{{\sf W}}
\newcommand{\classW}[1]{\ensuremath{\operatorClassW[#1]}}

\DeclareMathOperator{\operatorClassXP}{{\sf XP}\xspace}
\newcommand{\classXP}{\ensuremath{\operatorClassXP}\xspace}



\newcommand{\Oh}{\mathcal{O}}

\newcommand{\bran}[1]{branchable\xspace}

\newtheorem{theorem}{Theorem}
\newtheorem{lemma}{Lemma}

\newtheorem{observation}{Observation}

\theoremstyle{definition}
\newtheorem{reduction}{Reduction Rule}[section]


\newcommand{\pname}{\textsc}
\newcommand{\ProblemFormat}[1]{\pname{#1}}
\newcommand{\ProblemIndex}[1]{\index{problem!\ProblemFormat{#1}}}
\newcommand{\ProblemName}[1]{\ProblemFormat{#1}\ProblemIndex{#1}{}\xspace}

\newcommand{\probCGClong}{\ProblemName{Clustering to Given Connectivities}}

\newcommand{\probCGC}{\ProblemName{CGC}}

\newcommand{\probWCGClong}{\ProblemName{Clustering to Given Weighted Connectivities}}
\newcommand{\probWCGC}{\ProblemName{CGWC}}

\newcommand{\probAWCGC}{\ProblemName{Annotated CGWC}}

\newcommand{\probBAWCGC}{\ProblemName{Border A-CGWC}}

\makeatletter

\makeatother

 
%
%
%
%

\newlength{\RoundedBoxWidth}
\newsavebox{\GrayRoundedBox}
\newenvironment{GrayBox}[1]%
   {\setlength{\RoundedBoxWidth}{.93\textwidth}
    \def\boxheading{#1}
    \begin{lrbox}{\GrayRoundedBox}
       \begin{minipage}{\RoundedBoxWidth}}%
   {   \end{minipage}
    \end{lrbox}
    \begin{center}
    \begin{tikzpicture}%
       \node(Text)[draw=black!20,fill=white,rounded corners,%
             inner sep=2ex,text width=\RoundedBoxWidth]%
             {\usebox{\GrayRoundedBox}};
        \coordinate(x) at (current bounding box.north west);
        \node [draw=white,rectangle,inner sep=3pt,anchor=north west,fill=white] 
        at ($(x)+(6pt,.75em)$) {\boxheading};
    \end{tikzpicture}
    \end{center}}     

\newenvironment{defproblemx}[2][]{\noindent\ignorespaces%
                                \FrameSep=6pt%
                                \parindent=0pt%
                \vspace*{-1.5em}
                \ifthenelse{\isempty{#1}}{%
                  \begin{GrayBox}{\textsc{#2}}%
                }{%
                  \begin{GrayBox}{\textsc{#2} parameterized by~{#1}}%
                }
                \begin{tabular*}{\textwidth}{@{\hspace{.1em}} >{\itshape} p{1.8cm} p{0.8\textwidth} @{}}%
            }{
                \end{tabular*}%
                \end{GrayBox}%
                \ignorespacesafterend
            }

\newcommand{\defproblema}[3]{
  \begin{defproblemx}{#1}
    Input:  & #2 \\
    Task: & #3
  \end{defproblemx}
}%


\pagestyle{plain}

\begin{document}

\title{Clustering to Given Connectivities\thanks{The first author have been supported by the Research Council of Norway via the project ``CLASSIS''. The second author have been supported by the project ``ESIGMA" (ANR-17-CE40-0028). }}

\author{
Petr A. Golovach\thanks{
Department of Informatics, University of Bergen, Norway. }
\and
Dimitrios M. Thilikos\thanks{AlGCo project-team, LIRMM, Université de Montpellier, CNRS, France.}~\thanks{Department of Mathematics National and Kapodistrian University of Athens, Greece.}
}

\date{}

\maketitle

\begin{abstract}
\noindent We define a general variant of the graph clustering problem where
the criterion of density for the clusters is (high) connectivity. In \probCGClong, 
we are given an $n$-vertex graph $G$, an integer $k$, and a sequence $\Lambda=\langle \lambda_{1},\ldots,\lambda_{t}\rangle$ of positive integers and we ask whether it is possible to remove at most $k$ edges from $G$ such that the resulting connected 
components are {\sl exactly} $t$ and their corresponding edge connectivities are lower-bounded by 
the numbers in $\Lambda$.  
We prove that this problem,  parameterized by $k$, is fixed parameter tractable i.e., can be solved by an $f(k)\cdot n^{O(1)}$-step algorithm, for some function $f$
that depends only on the parameter $k$. Our algorithm
uses the recursive understanding technique that is especially adapted so to 
deal with the fact that, in out setting, we do not impose any restriction
to the connectivity demands  in $\Lambda$.  
\end{abstract}

\section{Introduction}\label{sec:intro}

Clustering,
 deals with  grouping the elements   
of a data set,  based on some similarity measure between them.
As a general computational procedure, clustering is fundamental in several scientific 
fields including machine learning, information retrieval, bioinformatics, data compression, and pattern recognition (see~\cite{Xu15acomp,Berkhin06asur,Wong15ashor}).
In many such  applications, data sets are organized and/or represented by graphs  that naturally express relations between entities.
A  {\em graph clustering problem} asks for a partition of 
the vertices of a graph into vertex sets, called  {\em clusters},  so that 
each cluster enjoys some desirable characteristics 
of ``density'' or ``good interconnectivity'', while having few
edges  between the clusters (see~\cite{Schaeffer07grap,Bocker13clus}
for related surveys).\medskip

\noindent{\bf Parameterizations of graph clustering problems.}
As a general problem on graphs, graph clustering has many variants. Most of them depend on the 
{\sl density} criterion that is imposed on the clusters and, in most of the cases, they are {\sf NP}-complete.
However, in many real-world instances,
one may expect that the number of edges between clusters 
is much smaller than the size of the graph.
This initiated the research for parameterized algorithms 
for graph clustering problems. Here the aim is to investigate when the problem is \classFPT (Fixed Parameter Tractable), 
when parameterized by the number $k$ of edges between clusters, i.e.  it admits a $f(k)\cdot n^{O(1)}$ step algorithm, also called $\classFPT$-algorithm (see~\cite{CyganFKLMPPS15,FlumGrohebook,Niedermeier06,DoFe99} for  textbooks on parameterized algorithms and the corresponding parameterized complexity class hierarchy). More general parameterizations may also involve  $k$ edit operations to the desired cluster property.

In the most strict sense, one may demand that all vertices
in a cluster are pair-wise connected,  i.e., they form a clique. This 
corresponds to the {\sc Cluster Deletion} problem and its more general version
{\sc Cluster Editing} -- also known as {\sc Correlation Clustering} -- where we ask for the minimum edge additions or deletions that can transform a graph to a collection of cliques. 
{\sc Cluster Editing} was introduced
by Ben-Dor, Shamir, and Yakhini in~\cite{BenDorSY99clus} 
 in the context of  computational biology and, independently, by Bansal,
Blum, and Chawla~\cite{Bansal04corr} motivated by machine learning problems related to document clustering (see also~\cite{SST04}). Algorithmic research on these problems and their variants  is extensive, see~\cite{ACN05TechReport,AroraBKSH05onno,AlonMNM06quad,Edachery99grap,GiotisG06corr,SST04}. Moreover their standard parameterizations are \classFPT and there is a long list of improvements 
on the running times of the corresponding \classFPT-algorithms ~\cite{Bocker09exac,Protti09appl,Bocker12agol,BockerBBT08afix,GrammGHN05grap,FellowsLRS07effi,Guo09amor,Chen11a2ke,BockerD11even,Cao2012clus}.

In most practical cases, in a good clustering, it is
not  necessary that clusters induce cliques. This gives rise to several 
difference measures of density or  connectivity.
In this direction, Heggernes et al., in~\cite{HeggernesLNPT10gene},  
introduced the {\sc $(p,q)$-Cluster Graph Recognition}
problem where clusters are cliques that 
may miss at most $p$ edges (also called $\gamma$-quasi cliques)~\cite{PattilloYB13oncl,PattilloVBB13onth}). This problem was 
 generalized in~\cite{LokshtanovM13clus},
where, given a function $\mu$ and a parameter $p$, each cluster $C$ should satisfy $\mu(C)\leq p$, and was proved to be {\sf FTP} for several instantiations of $\mu$.
In~\cite{HuffnerKLN14part}, Hüffner et al. 
introduced the  {\sc Highly Connected Deletion}
problem, where each cluster $C$ should induce a highly connected graph (i.e., have edge connectivity bigger than $|C|/2$ -- see also~\cite{HartuvS00,HuffnerKS15find}) and proved that this problem is  \classFPT. Algorithmic improvements 
and variants of this problem where recently studied by Bliznets and Karpov in~\cite{BliznetsK17para}.
In~\cite{GuoKKU11edit}, Guo et al. studied the problems 
{\sc $s$-Defective Clique Editing}, {\sc Average-$s$-Plex Delection}, and {\sc $\mu$-Clique Deletion} where 
each cluster $S$ is demanded to be a clique missing $s$ edges, a graph of average degree at least $|C|-s$, or a graph with average density $s$, respectively (the two first variants are \classFPT, while this is not expected for the last one).
In~\cite{ShahinpourB13dist}  clusters are of diameter at most $s$ ($s$-clubs), in~\cite{BevernMN12appr,BalasundaramBH11cliq,GuoKNU10,MoserNS12exac} every vertex of a 
cluster should have an edge to all but at most $s-1$ other vertices of it ($s$-plexes).
In~\cite{FominKPPV14tigh}, Fomin et al. considered the case where the number of clusters to be obtained is {\sl exactly} $p$ and proved that this version is also in \classFPT.
Other Parameterizations  of {\sc  Cluster Editing}  where introduced and shown to be \classFPT   in \cite{DornfelderGKW14onth,KomusiewiczU11,BetzlerGKN11aver,WuC13bala,Bodlaender10clus}). \medskip

\noindent{\bf Our results.}
Here we adopt connectivity as a general density criterion for the clusters (following the line of~\cite{HuffnerKS15find,BliznetsK17para}). 
We study a general variant of graph clustering where
we pre-specify both the number of clusters (as done in~\cite{FominKPPV14tigh})  
but also the connectivities of the graphs induced by them.
  More specifically, we consider the following clustering problem.

\defproblema{\probCGClong}%
{An $n$-vertex graph $G$, a $t$-tuple $\Lambda=\langle \lambda_{1},\ldots,\lambda_{t}\rangle$ of positive integers, $\lambda_1\leq\ldots\leq \lambda_t$, and a nonnegative integer $k$.}%
{Decide whether there is a set $F\subseteq E(G)$ with $|F|\leq k$ such that $G-F$ has $t$ connected components $G_1,\ldots, G_t$ where each $G_i$ is edge $\lambda_i$-connected for $i\in\{1,\ldots,t\}$.}

The above problem can be seen as a generalization of the well-known \textsc{$t$-Cut} problem, asking for a partition of a graph into exactly $t$ nonempty components such that the total number of  edges between the components is at most $k$. Indeed, this problem is \probCGClong for $\lambda_1=\ldots=\lambda_t=1$.
As it was observed by Goldschmidt and Hochbaum in~\cite{GoldschmidtH94}, \textsc{$t$-Cut}  is \classNP-hard if $t$ is a part of the input. This immediately implies the \classNP-hardness of \probCGClong. Therefore, we are interested in the parameterized complexity of \probCGClong. 
The main results of Goldschmidt and Hochbaum~\cite{GoldschmidtH94} is that \textsc{$t$-Cut} can be solved in time $\Oh(n^{t^2})$, that is, the problem is polynomial for any \emph{fixed} $t$. In other words, \textsc{$t$-Cut} belongs in the parameterized complexity class  \classXP when parameterized by $t$. This results was shown to be tight in the sense that we cannot expect an \classFPT algorithm for this problem unless the basic conjectures of the Parameterized Complexity theory fail by Downey et al. who proved in \cite{DowneyEFPR03} that the problem is \classW{1}-hard when parameterized by $t$. The situation changes if we parameterize the problem by $k$. By the celebrated result of Kawarabayashi and Thorup~\cite{KawarabayashiT11}, \textsc{$t$-Cut} is \classFPT when parameterized by $k$. 
%
%
%
%
%

In this paper, we prove that \probCGClong is \classFPT when parameterized by $k$. Actually we consider  the edge weighted version of the problem (see Page~\pageref{pageweight}) where 
the parameter is the sum of the weights of the edges between clusters 
and the edge connectivities are now defined in terms of edge weighted cuts.
%
%
For our proofs we follow the \emph{recursive understanding} technique introduced by Chitnis et al.~\cite{ChitnisCHPP16} (see also~\cite{GroheKMW11find})
combined with the \emph{random separation} technique introduced by  Cai, Chan and Chan in~\cite{CaiCC06}.
Already in~\cite{ChitnisCHPP16}, Chitnis et al. demonstrated that this technique is a powerful tool for the design of  \classFPT-algorithms for various cut problems.
  This technique was further developed by Cygan et al. in~\cite{CyganLPPS14} for proving that the \textsc{Minimum Bisection} problem is \classFPT (see also~\cite{KimPST17para,LokshtanovRSZ18redu,FominLPSW17full,GolovachHLM17find} for other recent applications of this technique  on the design of \classFPT-algorithms). Nevertheless, we stress  that for \probCGClong the application  for the recursive understanding technique becomes quite 
  nonstandard and demands additional work due to the fact that neither $t$ nor the connectivity constraints $\lambda_1,\ldots,\lambda_t$ are  restricted by any constant or any function of the parameter $k$. 
    Towards dealing with the diverse  connectivities, we  deal with  special annotated/weighted versions 
of the problem and introduce adequate connectivity mimicking encodings 
in order to make recursive understanding possible. 

\paragraph{Paper organization.} In Section~\ref{sec:defs} we give some preliminary definitions. In Section~\ref{sec:connect} we present 
the main part of our algorithm for the special case  where the input graph is connected. For this, we introduce all concepts
and results that support the applicability of the  recursive understanding  technique. We stress that, at this point, the connectivity assumption  is important as this makes it easier to control the  diverse connectivities 
of the clusters. In Section~\ref{sec:general} we deal with the general non-connected case. The algorithm in the general  case is based on a series of observations on the way connectivities are  distributed in the connected components of $G$.
Finally, in Section~\ref{sec:concl}, we provide some conclusions, open problems, and further directions of research.

\section{Preliminaries}\label{sec:defs}
We consider  finite undirected simple graphs. We use $n$ to denote the number of vertices and $m$ the number of edges of the considered graphs unless it creates confusion. 
 For $U\subseteq V(G)$, we write $G[U]$ to denote the subgraph of $G$ induced by $U$. 
We write $G-U$ to denote the graph $G[V(G)\setminus U]$; for a single-element $U=\{u\}$, we write $G-u$. Similarly, for a set of edges $S$, $G-S$ denotes the graph obtained from $G$ by the deletion of the edges of $S$; we write $G-e$ instead of $G-\{e\}$ for single-element sets.
A set of vertices $U\subseteq V(G)$ is \emph{connected} if $G[U]$ is a connected graph.
For a vertex $v$, we denote by $N_G(v)$ the \emph{(open) neighborhood} of $v$ in  $G$, i.e., the set of vertices that are adjacent to $v$ in $G$. For a set $U\subseteq V(G)$, $N_G(U)=(\bigcup_{v\in U}N_G(v))\setminus U$. 
We denote by $N_G[v]=N_G(v)\cup\{v\}$ the \emph{closed neighborhood} of $v$; respectively, $N_G[U]=\bigcup_{v\in U}N_G[v]$.
The \emph{degree} of a vertex $v$ is $d_G(v)=|N_G(v)|$. 
For disjoint subsets $A,B\subseteq V(G)$, $E_G(A,B)$ denotes the set of edges with one end-vertex in $A$ and the second in $B$.
We may omit the subscript if it does not create confusion, i.e., the considered graph is clear from the context.   
 A set of edges $S\subseteq E(G)$ of a connected graph $G$ is an \emph{(edge) separator} if $G-S$ is disconnected.
For two disjoint subsets $A,B\subseteq V(G)$, $S\subseteq E(G)$ is an $(A,B)$-separator if $G-S$ has no $(u,v)$-path with $u\in A$ and $v\in B$.
Recall (see, e.g.,~\cite{Diestel12}) that if $S$ is an inclusion minimal $(A,B)$-separator, then $S=E(A',B')$ for some partition $(A',B')$ of $V(G)$ with $A\subseteq A'$ and $B\subseteq B'$.
If $A=\{a\}$ and/or $B=\{b\}$, we say that  $S$ is an $(a,b)$-separator or $(a,B)$-separator respectively.
Let $k$ be a positive integer. A graph $G$ is \emph{(edge) $k$-connected} if for every $S\subseteq E(G)$ with $|S|\leq k-1$, $G-S$ is connected, that is, $G$ has no separator of size at most $k-1$. Since we consider only edge connectivity, whenever we say that a graph $G$ is $k$-connected, we mean that $G$ is edge $k$-connected. Similarly, whenever we mention a separator, we mean an edge separator.

\paragraph{Edge weighted graphs.}
For technical reasons, it is convenient for us to work with edge weighted graphs. Let $G$ be a graph and let $w\colon E(G)\rightarrow \mathbb{N}$ be an (edge) weight function. Whenever we say that $G$ is a weighted graph, it is assumed that an edge weight function is given and we use $w$ to denote weights throughout the paper. For a set of edges $S$, ${\sf w}(S)=\sum_{e\in S}{\sf w}(e)$. 

For disjoint subsets $A,B\subseteq V(G)$, $w_G(A,B)={\sf w}(E(A,B))$. We say that $G$ is  weight $k$-connected if for every $S\subseteq E(G)$ with ${\sf w}(S)\leq k-1$, $G-S$ is connected.
We denote by $\lambda^{\sf w}(G)$ the \emph{weighted connectivity} of $G$, that is, the maximum value of $k$ such that $G$ is weight $k$-connected; we assume that every graph is weight $0$-connected and for the single-vertex graph $G$, $\lambda^{\sf w}(G)=+\infty$.  
For disjoint subsets $A,B\subseteq V(G)$,  $\lambda_G^{\sf w}(A,B)=\min\{{\sf w}(S)\mid S\text{ is an }(A,B)\text{-separator}\}$.
We say that an $(A,B)$-separator $S$ is \emph{minimum} if ${\sf w}(S)=\lambda_G^{\sf w}(A,B)$. 
For two vertices $u,v\in V(G)$, $\lambda^{\sf w}(u,v)=\lambda^{\sf w}(\{u\},\{v\})$ and we assume that $\lambda^{\sf w}(u,u)=+\infty$. 
Similarly, for a set $A$ and a vertex $v$, we write $\lambda_G^{\sf w}(A,v)$ instead of $\lambda_G^{\sf w}(A,\{v\})$.
Clearly, $\lambda^{\sf w}(G)=\min\{\lambda_G^{\sf w}(u,v)\mid u,v\in V(G)\}$. 
We can omit the subscript if it does not create confusion.

Let $U\subseteq V(G)$. We say that the weighted graph $G'$ is obtained from $G$ by the \emph{weighted contraction} of $U$ if it is constructed as follows: we delete the vertices of $U$ and replace it by a single vertex $u$ that is made adjacent to every $v\in V(G)\setminus U$ adjacent to a vertex of $U$ and the weight of $uv$ is defined as $\sum_{xv\in E(G),~x\in U}{\sf w}(xv)$.
Note that we do not require $G[U]$ be connected. For an edge $uv$, the weighted contraction of $uv$ is the weighted contraction of the set $\{u,v\}$.

Let $\alpha=\langle \alpha_1,\ldots,\alpha_t\rangle$ where $\alpha_i\in\mathbb{N}\cup\{+\infty\}$ for $i\in\{1,\ldots,t\}$ and  $\alpha_1\leq\ldots\leq \alpha_t$.
We call the \emph{variate} of $\alpha$ the set of distinct elements of $\alpha$ and denote it by $\mathbf{var}(\alpha)$. Let also  $\beta=\langle \beta_1,\ldots,\beta_s\rangle$ where $\beta_i\in\mathbb{N}\cup\{+\infty\}$ for $i\in\{1,\ldots,s\}$ and  $\beta_1\leq\ldots\leq \beta_t$. We say that $\gamma=\alpha+\beta$ is the \emph{merge} of $\alpha$ and $\beta$ if $\gamma$ is the $(t+s)$-tuple obtained by sorting the elements of $\alpha$ and $\beta$ in the increasing order. We denote by $r\alpha$ the merge of $r$ copies of $\alpha$.
If $s=t$, we write $\alpha\preceq\beta$ if $\alpha_i\leq\beta_i$ for $i\in\{1,\ldots,t\}$.

Now we state the weighted variant of \probCGClong.

\label{pageweight}
\defproblema{\probWCGClong (\probWCGC)}%
{A weighted graph $G$ with an edge weight function $w\colon E(G)\rightarrow \mathbb{N}$,
 a $t$-tuple $\Lambda=\langle \lambda_{1},\ldots,\lambda_{t}\rangle$, where $\lambda_i\in\mathbb{N}\cup\{+\infty\}$ for $i\in\{1,\ldots,t\}$ and  $\lambda_1\leq\ldots\leq \lambda_t$, and a nonnegative integer $k$.}%
{Decide whether there is a set $F\subseteq E(G)$ with ${\sf w}(F)\leq k$ such that $G-F$ has $t$ connected components $G_1,\ldots, G_t$ where each $\lambda^{\sf w}(G_i)\geq \lambda_i$ for $i\in\{1,\ldots,t\}$.}

It is convenient to  allow $\lambda_i=+\infty$, because we assume that the (weighted) connectivity of the single-vertex graph is $+\infty$. 
Clearly, \probCGClong is \probWCGC for the case ${\sf w}(e)=1$ for $e\in E(G)$. For a set $F\subseteq E(G)$ with ${\sf w}(F)\leq k$ such that $G-F$ has $t$ connected components $G_1,\ldots, G_t$ where each $\lambda^{\sf w}(G_i)\geq \lambda_i$ for $i\in\{1,\ldots,t\}$, we say that $F$ is a \emph{solution} for \probWCGC.

\section{\probWCGClong for connected input graphs}\label{sec:connect}
In this section we show that \probWCGC is \classFPT when parameterized by $k$ if the input graph is connected. We prove the following theorem that is used as a building block for the general case.

\begin{theorem}\label{thm:connect}
\probWCGC can be solved in time   $2^{2^{2^{2^{2^{2^{\Oh(k)}}}}}}\cdot n^{\Oh(1)}$
 if the input graph is connected.
\end{theorem}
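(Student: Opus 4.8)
The plan is to follow the \emph{recursive understanding} scheme of Chitnis et al., reducing to the case of a highly connected (``unbreakable'') graph on which \emph{random separation} applies; the nonstandard ingredient forced by the unbounded tuple $\Lambda$ is a \emph{connectivity-mimicking encoding} that lets the recursion pass faithful information across small separators. Two reductions come first. Since $G$ is connected and we may assume all edge weights positive, any solution $F$ satisfies $|F| \le {\sf w}(F) \le k$, so $G - F$ has at most $k+1$ components; hence we may assume $t \le k+1$, so the number of clusters is bounded by the parameter. The only unbounded datum is $\Lambda$ itself, and here the key observation is that a cluster $G_i$ cannot be split by any edge cut of weight $< \lambda_i$ (such a cut would disconnect $G_i$), so any cluster whose demand exceeds the $\Oh(k)$ order of the separators produced by the recursion must lie entirely on one side of each such separator. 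Consequently only cut values up to $\Oh(k)$ ever need to be reproduced across a recursion boundary, which is exactly what makes bounded-size mimicking possible.

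Fix a threshold $q = q(k)$ (a tower in $k$). I would first test whether $G$ is $(q,k)$-unbreakable, i.e.\ has no partition $(A,B)$ of $V(G)$ with $|A|,|B| > q$ and $w_G(A,B) \le \Oh(k)$; if not, one can find in time $f(k) \cdot n^{\Oh(1)}$ a separator of weight $\Oh(k)$ cutting off a part $H$ with $q < |V(H)| < |V(G)| - q$ whose boundary $\partial H = N_G(V(H))$ has size $\Oh(k)$. Then I recurse on $H$, but for a more general annotated problem -- the \probAWCGC and \probBAWCGC of the paper -- computing, for every \emph{interface pattern} at $\partial H$ (a partition of $\partial H$ by cluster membership, a demand from $\Lambda$ labelling each block, a bound on the weight spent inside $H$, and the $\Oh(k)$-capped min-cut function on subsets of $\partial H$ in the residual graph), whether it is realizable and at minimum cost. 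Since there are only $2^{2^{\Oh(k)}}$ such patterns -- the capped cut function dominating the count -- the resulting ``profile'' of $H$ lies in a bounded universe; from it I compute a \emph{connectivity-mimicking replacement} $H'$, a graph on $g(k)$ vertices with boundary $\partial H$ realizing exactly the same profile, and substitute $H'$ for $H$. This strictly shrinks $|V(G)|$ while preserving the answer, so after $\Oh(n)$ such steps (or immediately, if $G$ is already unbreakable) we reach an unbreakable instance.

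On an unbreakable instance I would argue structurally that $G - F$ has at most one component larger than $q$ (two would yield a forbidden balanced cut of weight $\le k$), so at most $k$ clusters are ``small'' (size $\le q$) and at most one is ``huge''. As in Chitnis et al.\ and Cygan et al., a random $2$-colouring of $V(G)$ then isolates, with probability $2^{-h(k)}$, the $\Oh(k)$ edges of $F$ together with a bounded witness of the huge cluster's internal connectivity from the rest of the graph; conditioning on this event and contracting monochromatic pieces yields an instance of bounded size that can be solved by brute force, each constraint $\lambda^{\sf w}(G_i) \ge \lambda_i$ being verified in polynomial time via a min-cut computation. Derandomization by standard splitters removes the randomness.

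The step I expect to be the main obstacle is the one in the second paragraph: engineering the annotated/border variant so that the recursion is genuinely faithful, proving that the finite catalogue of $\Oh(k)$-capped interface patterns really captures everything relevant (this is where ``large demands do not straddle separators'' is essential), and showing that any realizable profile is realized by some bounded graph. The running time is then a matter of composing the counts: $2^{2^{\Oh(k)}}$ interface patterns, hence $2^{2^{2^{\Oh(k)}}}$ possible profiles, each realized -- or certified unrealizable -- by exhaustive search over graphs of bounded size at a further exponential cost, with the random-separation base case and the splitter derandomization contributing the last two levels; altogether this yields the claimed $2^{2^{2^{2^{2^{2^{\Oh(k)}}}}}} \cdot n^{\Oh(1)}$ running time.
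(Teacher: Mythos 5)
Your skeleton is the same as the paper's (good edge separations and unbreakability, an annotated problem and a border problem solved recursively, random separation with derandomization in the highly connected case), but the step you yourself flag as the main obstacle is precisely where the proposal has a genuine gap, and your intended fix is not the one that works. You propose to summarize the recursed part $H$ by a profile of $\Oh(k)$-capped interface patterns and then replace $H$ by \emph{some} bounded graph realizing that profile. Two problems. First, the existence of a small graph realizing an abstract profile is never established, and the paper does not need it: the recursive call to \probBAWCGC returns, for every boundaried completion $(H,\mathbf{y})$ in the finite family $\mathcal{H}_{\hat r,2k}$ (these are exactly the possible cut reductions of a hypothetical outside, with weights truncated at $2k$) and every subtuple of $\Lambda$, an \emph{actual} minimum-weight solution; the union $M$ of these solution edges is kept in the annotation set $L^*$, and only the remainder of $G[A]$, away from the endpoints of $M$ and the boundary, is replaced by a cut reduction (mimicking network) with respect to $+\infty$ on a vertex set $Z$ of bounded size. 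Second, capping at $\Oh(k)$ is not sound for the retained part: your observation that clusters straddling a weight-$\le 2k$ separation have connectivity at most $2k$ justifies truncating the weights of the \emph{enumerated outside completions}, but clusters with demand $\lambda_i\gg k$ may lie entirely inside the retained region, so its cut structure must be mimicked \emph{exactly} (hence the cut reduction with respect to $+\infty$ and the uncapped, doubly exponential bound on $|Z|$ and $q$); a profile that only records $\Oh(k)$-capped cut values would lose exactly the information the unbounded $\Lambda$ forces you to keep.

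The unbreakable base case is also more delicate than your sketch. After (derandomized) random separation the instance does \emph{not} become bounded: at most one cluster is big, but it may contain almost all of $G$ and its demand $\lambda_i$ may exceed $k$, so contracting monochromatic pieces and brute-forcing is not available. The paper instead splits into cases: for $\lambda_i>k$ it uses the $(k+1)$-connectivity equivalence classes to identify the big cluster directly; for $\lambda_i\le k$ it colors only the union $B$ of $(q+\lambda_i)$-restricted BFS balls around the attachment vertices (of size bounded in $k$) against the at most $q$ vertices of the small clusters, then applies a sequence of reduction rules plus an unbreakability argument (Claim~C in the proof of Lemma~\ref{lem:annot}) to certify the big component's $\lambda_i$-connectivity without ever inspecting it globally, and finishes with a dynamic program distributing the remaining demands of $\Lambda$ over the small components. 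Without these ingredients (or substitutes for them), both the replacement step and the base case of your proposal remain unproven, even though the overall architecture is the right one.
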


The remaining part of the section contains the proof of this theorem. In Subsection~\ref{sec:aux} we give some additional definitions and state auxiliary results. In Subsection~\ref{sec:high} we consider the case when the input graph is highly connected in some sense, and in Subsection~\ref{sec:proof} we compete the proof of Theorem~\ref{thm:connect}.

\subsection{Auxiliary results}\label{sec:aux}
To solve \probWCGC for connected graphs, we use the recursive understanding technique introduced by Chitnis et al. in~\cite{ChitnisCHPP16}. Therefore, we need notions that are specific to this technique and some results established by Chitnis et al.~\cite{ChitnisCHPP16}. Note that we adapt the definitions and the statements of the results for the case of edge weighted graphs.

\paragraph{Weighted good edge separations.}
Let $G$ be a connected weighted graph with an edge weight function $w\colon E(G)\rightarrow\mathbb{N}$. Let also $p$ and $q$ be positive integers. A partition $(A,B)$ of $V(G)$ is called a \emph{$(q,p)$-good edge separation} if 
\begin{itemize}
\item $|A|>q$ and $|B|>q$,
\item ${\sf w}(A,B)\leq p$,
\item $G[A]$ and $G[B]$ are connected.
\end{itemize}

\begin{lemma}[\cite{ChitnisCHPP16}]\label{lem:good}
There exists a deterministic algorithm that, given a weighted connected graph $G$ along with positive integers $p$ and $q$, in time $2^{\Oh(\min\{p,q\}\log(p+q))}\cdot n^3\log n$ either finds a $(q,p)$-good edge separation or correctly concludes $G$ that no such separation exists.
\end{lemma}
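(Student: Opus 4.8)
The plan is to transfer the good-edge-separation routine of Chitnis et al.~\cite{ChitnisCHPP16} to the edge-weighted setting, using the random-separation idea of~\cite{CaiCC06} for the color-coding step. The one observation that makes this transfer essentially free is that ${\sf w}$ takes values in $\mathbb{N}$: any edge set $S$ with ${\sf w}(S)\le p$ satisfies $|S|\le p$. Hence, if $(A,B)$ is a $(q,p)$-good edge separation, at most $p$ edges cross it, so every combinatorial quantity that the unweighted algorithm bounds in terms of ``the number of deleted edges'' is still bounded by $p$. Weights enter only when a candidate partition $(A,B)$ is finally tested: there one computes ${\sf w}(A,B)$ with a single max-flow call and compares it with $p$. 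Consequently the running time matches the unweighted one.

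For the algorithm itself I would argue as follows. Assume a $(q,p)$-good edge separation $(A,B)$ exists; let $X$ be the set of (at most $2p$) endpoints of the crossing edges, and fix connected witness sets $W_A\subseteq A$ and $W_B\subseteq B$ with $|W_A|=|W_B|=q+1$, which exist since $G[A]$ and $G[B]$ are connected and $|A|,|B|>q$. One then builds, by the standard splitter / perfect-hash-family construction, a family $\mathcal{F}$ of $2^{\Oh(\min\{p,q\}\log(p+q))}\cdot\log n$ colorings of $V(G)$ with two ``active'' colours and one ``neutral'' colour such that, for the configuration above, some coloring puts $(X\cap A)\cup W_A$ in the first active colour and $(X\cap B)\cup W_B$ in the second. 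It is here that one splits into the regimes $p\le q$ and $q<p$ --- color-coding around the small cut in the first, around the small witness sets in the second --- and this is exactly what gives $\min\{p,q\}$ rather than $p+q$ in the exponent. For each coloring in $\mathcal{F}$ one contracts every monochromatic connected piece (via the weighted contraction of Section~\ref{sec:defs}, which preserves the weight of every cut that does not split a contracted piece), computes a minimum-weight cut between the two active-colour classes by max-flow, and then, by a bounded search over the at most $p+1$ components of the graph minus that cut, tries to assemble a partition of $V(G)$ of weight at most $p$ with more than $q$ vertices on each side and with both sides connected; any such partition is certified and returned. If no coloring in $\mathcal{F}$ yields a valid partition, then by the covering property of $\mathcal{F}$ no $(q,p)$-good edge separation exists. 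A cubic-time flow computation per coloring then gives the claimed $2^{\Oh(\min\{p,q\}\log(p+q))}\cdot n^3\log n$ bound.

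The weighting is not the hard part: as noted, it only turns ``at most $p$ crossing edges'' into ``weight at most $p$'' at the certification step and is harmless for contractions. The real work, inherited from~\cite{ChitnisCHPP16}, is threefold: (i) arranging the color-coding so that the number of colorings is governed by $\min\{p,q\}$ and not $p+q$, which forces the two-regime analysis above; (ii) the derandomization, i.e.\ exhibiting an explicit family $\mathcal{F}$ of the stated size that still covers every relevant configuration; and (iii) the postprocessing that turns a minimum cut between the witness classes into a separation whose two sides are \emph{connected} and each of size more than $q$. With these in place the correctness is routine: a ``yes'' is never reported without a certificate, and any genuine good separation is captured by at least one coloring in $\mathcal{F}$.
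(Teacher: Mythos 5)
The paper does not actually prove Lemma~\ref{lem:good}: it is imported from Chitnis et al.~\cite{ChitnisCHPP16}, with only the remark that their unweighted proof ``could be easily rewritten for the weighted case,'' and your proposal takes essentially the same route, deferring the genuinely hard ingredients (the $\min\{p,q\}$ color-coding, the derandomization, the connectivity postprocessing) to~\cite{ChitnisCHPP16} and supplying only the weighted transfer. Your key observation --- that positive integer weights make any cut of weight at most $p$ consist of at most $p$ edges, so all combinatorial bounds carry over and weights enter only when cut weights are evaluated --- is precisely the justification the paper relies on (and uses explicitly in the proof of Lemma~\ref{lem:unbreak}), so the two treatments agree.
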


Notice that Chitnis et al.~\cite{ChitnisCHPP16} proved Lemma~\ref{lem:good} for the unweighted case but the proof could be easily rewritten for the weighted case as the authors of~\cite{ChitnisCHPP16} point themselves. 

It is said that $G$ is  \emph{$(q,p)$-unbreakable}  if for any partition $(A,B)$ of $V(G)$ such that ${\sf w}(A,B)\leq p$, it holds that $|A|\leq q$ or $|B|\leq q$. We use the easy corollary of Lemma~\ref{lem:good}. The lemma for the unweighted case was stated and proved in~\cite{FominGLS16} but we give the proof here for completeness.

 \begin{lemma}\label{lem:unbreak}
There exists a deterministic algorithm that, given a weighted connected graph $G$ along with positive integers $p$ and $q$, in time $2^{\Oh(\min\{p,q\}\log(p+q))}\cdot n^3\log n$ either finds a $(q,p)$-good edge separation or correctly concludes that $G$ is $(pq,p)$-unbreakable.
\end{lemma}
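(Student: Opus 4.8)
The plan is to iterate the algorithm from Lemma~\ref{lem:good}. Given the connected weighted graph $G$ together with $p$ and $q$, I first run the algorithm of Lemma~\ref{lem:good} with parameters $p$ and $q$. If it returns a $(q,p)$-good edge separation, we are done and we output it. Otherwise, Lemma~\ref{lem:good} certifies that $G$ has no $(q,p)$-good edge separation, and it remains to argue that in this case $G$ is $(pq,p)$-unbreakable. So the crux is the following purely combinatorial implication: if a connected weighted graph $G$ admits no $(q,p)$-good edge separation, then for every partition $(A,B)$ of $V(G)$ with ${\sf w}(A,B)\le p$ we have $|A|\le pq$ or $|B|\le pq$.

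To prove this implication, I would argue by contradiction: suppose $(A,B)$ is a partition with ${\sf w}(A,B)\le p$ and $\min\{|A|,|B|\}>pq$, yet $G$ has no $(q,p)$-good edge separation. The obstruction to $(A,B)$ itself being a good separation is only that $G[A]$ or $G[B]$ need not be connected. The key step is therefore to repair connectivity: consider the connected components of $G[A]$ and of $G[B]$. Since ${\sf w}(A,B)\le p$ and all weights are positive integers, the set $E(A,B)$ consists of at most $p$ edges, so the ``cut side'' structure is controlled by at most $p$ boundary edges. Contracting, or rather grouping, these components appropriately should let me find a coarser partition $(A',B')$ of $V(G)$ that still has ${\sf w}(A',B')\le {\sf w}(A,B)\le p$, has $G[A']$ and $G[B']$ connected, and still has both sides large. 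Concretely, I would take the components of $G[A]$ and attach each to $B$ or merge them along shared boundary edges until one obtains connected sides; since at each merge at most one ``chunk'' of at least one vertex is moved across and there are at most $p$ boundary edges to account for, one side loses at most a controlled amount and, because we started with $\min\{|A|,|B|\}>pq$, both sides remain of size $>q$ after this cleanup. This yields a genuine $(q,p)$-good edge separation, contradicting the output of Lemma~\ref{lem:good}.

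The running time is immediate: we make a single call to the algorithm of Lemma~\ref{lem:good}, so the bound $2^{\Oh(\min\{p,q\}\log(p+q))}\cdot n^3\log n$ is inherited verbatim, and the combinatorial post-processing (deciding which branch to report) is negligible. Determinism is likewise inherited, since Lemma~\ref{lem:good} is deterministic.

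**Main obstacle.** I expect the one delicate point to be the connectivity-repair argument: one must carefully move the components of $G[A]$ and $G[B]$ so that the two sides become connected \emph{without} blowing up the cut weight past $p$ and without shrinking either side below $q+1$ vertices. The bound $pq$ on unbreakability is exactly what leaves room here — each of the at most $p$ boundary edges can ``force'' up to $q$ vertices to migrate in the worst case — so the counting has to be set up to exploit both $|E(A,B)|\le p$ and $\min\{|A|,|B|\}>pq$ simultaneously. Everything else (time bound, determinism, trivial direction) is routine.
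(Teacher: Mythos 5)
Your overall structure---a single call to the algorithm of Lemma~\ref{lem:good}, followed by the purely combinatorial claim that the absence of a $(q,p)$-good edge separation forces $(pq,p)$-unbreakability---is exactly the paper's, and the easy parts (the bound $|E(A,B)|\le p$ from integrality of the weights, the inherited determinism and running time) are fine. The genuine gap is in the one step you yourself flag as delicate, and the counting you propose for it does not work. You suggest that ``each of the at most $p$ boundary edges can force up to $q$ vertices to migrate'', but no such bound holds: a component of $G[A]$ other than the one you keep can be arbitrarily large (for instance $G[A]$ may consist of two components, each of size about $|A|/2>q$), so the amount that crosses over is not bounded by $pq$; and even a total migration bound of $pq$ together with $|A|>pq$ would only leave a nonempty side, not a side with more than $q$ vertices, which is what a $(q,p)$-good edge separation requires. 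Your repair procedure is also underspecified: distinct components of $G[A]$ share no edges, so ``merging them along shared boundary edges'' can only go through $B$-vertices, and $G[B]$ may itself be disconnected, so after pushing components of $G[A]$ across you must still repair the other side, and nothing in the sketch controls what this second round does to the sizes.

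The missing idea, which is how the paper argues, is different and simpler. Since $G$ is connected and $|E(A,B)|\le p$, every component of $G[A]$ contains an endpoint of a cut edge, so $G[A]$ has at most $p$ components; with $|A|>pq$ this yields a component $X$ of $G[A]$ with more than $q$ vertices, and symmetrically a component $Y$ of $G[B]$ with more than $q$ vertices. One then produces a partition $(A',B')$ with $X\subseteq A'$, $Y\subseteq B'$, $G[A']$ and $G[B']$ connected, and $E(A',B')\subseteq E(A,B)$ (hence ${\sf w}(A',B')\le p$)---for example by cutting a spanning tree of the connected graph whose nodes are the components of $G-E(A,B)$ along an edge of the $X$--$Y$ path, or by taking an inclusion-minimal subset of $E(A,B)$ separating $X$ from $Y$. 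This is a $(q,p)$-good edge separation, contradicting the output of Lemma~\ref{lem:good}. Note that the role of the threshold $pq$ is to guarantee one large connected component per side (at most $p$ components, more than $pq$ vertices), not to budget $q$ migrating vertices per cut edge; with this replacement your plan goes through.
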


\begin{proof}
Let $G$ be a weighted connected graphs with an edge weight function $w\colon E(G)\rightarrow\mathbb{N}$. We use Lemma~\ref{lem:good} to find a $(q,p)$-good edge separation. If the algorithm returns such a separation, we return it and stop. Assume that a $(q,p)$-good edge separation does not exist. We claim that $G$ is $(pq,p)$-unbreakable. 
To obtain a contradiction, assume that $(A,B)$ is a partition of $V(G)$ such that $|A|>pq$, $|B|>pq$ and ${\sf w}(A,B)\leq p$. Because ${\sf w}(A,B)\leq p$, we have that $|E(A,B)|\leq p$ and, therefore, each of the sets $A$ and $B$ contains at most $p$ end-vertices of $E(A,B)$. Since $G$ is a connected graph, we obtain that there is a component of $G[A]$ with at least $q$ vertices. Let $X$ be the set of vertices of this component. Similarly, there is a component of $G[B]$ whose set of vertices $Y$ contains at least $q$ vertices. It follows that there is a partition $(A',B')$ of $V(G)$ with $X\subseteq A'$ and $Y\subseteq B'$ such that $G[A']$ and $G[B']$ are connected and ${\sf w}(A',B')\leq p$, but this contradicts the assumption that there is no  $(q,p)$-good edge separation of $G$.
\end{proof}

\paragraph{Mimicking connectivities by cut reductions.}
Let $r$ be a nonnegative integer. A pair $(G,\mathbf{x})$, where $G$ is a graph and $\mathbf{x}=\langle x_1,\ldots,x_r\rangle$ is a $r$-tuple of distinct vertices of $G$.
is called a \emph{$r$-boundaried graph} 
or simply a \emph{boundaried graph}. Respectively, $\mathbf{x}=\langle x_1,\ldots,x_r\rangle$ is a \emph{boundary}. Note that a boundary is an ordered set. Hence, two $r$-boundaried graphs that differ only by the order of the vertices in theirs boundaries are distinct. Still, we can treat $\mathbf{x}$ as a set when the ordering is irrelevant. 
Observe also that a boundary could be empty.  Slightly abusing notation, we may say that $G$ is a ($r$-) boundaried graph assuming that a boundary is given. We say that $(G,\mathbf{x})$ is a \emph{properly} boundaried graph if the vertices of $\mathbf{x}$ are pairwise nonadjacent and each component of $G$ contains at least one vertex of $\mathbf{x}$.

Two $r$-boundaried  weighted graphs $(G_1,\mathbf{x}^{(1)})$ and $(G_2,\mathbf{x}^{(2)})$, where $\mathbf{x}^{(h)}=\langle x_1^{(h)},\ldots,x_r^{(h)}\rangle$ for $h=1,2$, are \emph{isomorphic} if there is an isomorphism of $G_1$ to $G_2$ that maps each $x_i^{(1)}$ to $x_i^{(2)}$ for $i\in\{1,\ldots,r\}$ and each edge is mapped to an edge of the same weight. 

Let  $(G_1,\mathbf{x}^{(1)})$ and $(G_2,\mathbf{x}^{(2)})$ be  $r$-boundaried graphs with $\mathbf{x}^{(h)}=\langle x_1^{(h)},\ldots,x_r^{(h)}\rangle$ for $h=1,2$, and assume that $(G_2,\mathbf{x}^{(2)})$ is a properly boundaried graph. 
 We define the \emph{boundary sum} $(G_1,\mathbf{x}^{(1)})\oplus_b(G_2,\mathbf{x}^{(2)})$ (or  simply $G_1\oplus_b G_2$) as the (non-boundaried) graph obtained by taking disjoint copies of $G_1$ and $G_2$ and identifying $x_i^{(1)}$ and $x_i^{(2)}$ for each $i\in\{1,\ldots,r\}$. Note that the definition is not symmetric as we require that 
$(G_2,\mathbf{x}^{(2)})$ is a properly boundaried graph  and we have no such a restriction for $x_1^{(1)},\ldots,x_r^{(1)}$.

Let $\mathbf{X}=(X_1,\ldots,X_p)$ and $\mathbf{Y}=(Y_1,\ldots,Y_q)$ be two partitions of a set $Z$. We define the \emph{product} $\mathbf{X}\times\mathbf{Y}$  of $\mathbf{X}$ and $\mathbf{Y}$ as the partition of $Z$ obtained from $\{X_i\cap Y_j\mid 1\leq i\leq p,~1\leq j\leq q\}$ by the deletion of empty sets.  For partitions $\mathbf{X}^1,\ldots,\mathbf{X}^r$ of $Z$, we denote their {\em consecutive product} as $\prod_{i=1}^r\mathbf{X}^i$. The following observation is useful for us.

\begin{observation}\label{obs:prod}
If $\mathbf{X}$ and $\mathbf{Y}$ are partitions of a set $Z$, then the partition $\mathbf{X}\times \mathbf{Y}$ contains at most $|\mathbf{X}||\mathbf{Y}|$ sets. Furthermore, if $Y=(\{y_1\},\ldots,\{y_r\},Z\setminus\{y_1,\ldots,y_r\} )$ for some $y_1,\ldots,y_r\in Z$, then $\mathbf{X}\times \mathbf{Y}$ contains at most $|\mathbf{X}|+r$ sets. 
\end{observation}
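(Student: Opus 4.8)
The statement to prove is Observation~\ref{obs:prod}, which is an elementary counting fact about products of partitions. The plan is as follows.

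\paragraph{Plan.}
First I would unwind the definition of the product $\mathbf{X}\times\mathbf{Y}$: it is obtained from the family $\{X_i\cap Y_j \mid 1\le i\le p,\ 1\le j\le q\}$ by discarding the empty sets. Since this family is indexed by pairs $(i,j)$ with $i\in\{1,\dots,p\}$ and $j\in\{1,\dots,q\}$, it has at most $pq=|\mathbf{X}||\mathbf{Y}|$ members, and removing empty sets only decreases the count; this gives the first claim immediately. (For completeness one should also check that the nonempty sets $X_i\cap Y_j$ do indeed form a partition of $Z$, i.e.\ that they are pairwise disjoint and cover $Z$ — disjointness because distinct pairs $(i,j)\neq(i',j')$ force $X_i\cap X_{i'}=\emptyset$ or $Y_j\cap Y_{j'}=\emptyset$, and covering because every $z\in Z$ lies in exactly one $X_i$ and exactly one $Y_j$ — but this is routine and I would state it in one line.)

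For the second claim, I would specialize to $\mathbf{Y}=(\{y_1\},\dots,\{y_r\},\,Z\setminus\{y_1,\dots,y_r\})$, so $q=r+1$. Split the blocks of $\mathbf{X}\times\mathbf{Y}$ according to which block of $\mathbf{Y}$ they come from. For $j\le r$, the block $Y_j=\{y_j\}$ is a singleton, so $X_i\cap\{y_j\}$ is nonempty for exactly one value of $i$ (the one with $y_j\in X_i$), and in that case equals $\{y_j\}$; hence the blocks arising from the singletons $Y_1,\dots,Y_r$ contribute at most $r$ sets in total (at most one per $y_j$). For $j=r+1$, i.e.\ $Y_{r+1}=Z\setminus\{y_1,\dots,y_r\}$, the sets $X_i\cap Y_{r+1}$ range over $i\in\{1,\dots,p\}$ and so contribute at most $p=|\mathbf{X}|$ sets. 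Adding the two bounds gives at most $|\mathbf{X}|+r$ blocks, as required.

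\paragraph{Main obstacle.}
Honestly, there is no real obstacle here: the statement is a bookkeeping observation and the only thing to be careful about is not to double-count — in particular, to notice that each singleton $\{y_j\}$ contributes at most one nonempty intersection block rather than up to $|\mathbf{X}|$ of them, which is exactly the point that makes the bound $|\mathbf{X}|+r$ rather than the trivial $|\mathbf{X}|(r+1)$. One minor subtlety worth a remark is whether some $\{y_j\}$ might coincide with a block of $\mathbf{X}$ already counted in the $Y_{r+1}$ part; it cannot, since $y_j\notin Y_{r+1}$, so the two groups of blocks are genuinely disjoint and the sum of the two bounds is valid.
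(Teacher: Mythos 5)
Your proof is correct, and the paper itself states Observation~\ref{obs:prod} without proof, so your counting argument (indexing blocks by pairs for the first bound, and noting that each singleton $\{y_j\}$ meets exactly one block of $\mathbf{X}$ while the remaining block $Z\setminus\{y_1,\ldots,y_r\}$ contributes at most $|\mathbf{X}|$ intersections for the second) is exactly the intended justification. Nothing is missing; the closing remark about disjointness of the two groups of blocks is a harmless extra, since for an upper bound coincidences could only help.
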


Let $(H,\mathbf{x})$ be a connected properly $r$-boundaried weighted graph. Let $p$ be a positive integer or $+\infty$.  Slightly abusing notation we consider here $\mathbf{x}$ as a set. 
We construct the partition $\mathbf{Z}$ of $V(H)$ as follows. For $X\subseteq \mathbf{x}$, denote $\overline{X}=\mathbf{x}\setminus X$.
\begin{itemize}
\item For all distinct pairs $\{X,\overline{X}\}$ for nonempty
$X\subset\mathbf{x}$, find a minimum weight $(X,\overline{X})$-separator $S_X=E(Y_X^1,Y_X^2)$ where $(Y_X^1,Y_X^2)$ is a partition of $V(H)$, $X\subseteq Y_X^1$ and $\overline{X}\subseteq Y_X^2$.
\item For every $v\in V(H)\setminus\mathbf{x}$, find a minimum weight $(\mathbf{x},v)$-separator $S(v)$. Find $v^*\in V(H)\setminus \mathbf{x}$ such that $w(S(v^*))=\min\{w(S(v))\mid v\in V(H)\setminus \mathbf{x}\}$ and let $S(v^*)=E(Y_{\mathbf{x}}^1,Y_{\mathbf{x}}^2)$ where $(Y_{\mathbf{x}}^1,Y_{\mathbf{x}}^2)$ is a partition of $V(H)$ and $\mathbf{x}\subseteq Y_{\mathbf{x}}^1$.
\item Construct the following partition of $V(H)$:
\begin{equation}\label{eq:prod}
\!\!\!\!\!\mathbf{Z}=\big(Z_1,\ldots,Z_{h})=\big(\prod_{\text{distinct~}\{X,\overline{X}\}\atop\emptyset\neq X\subset \mathbf{x}}(Y_X^1,Y_X^2)\Big)\times (Y_{\mathbf{x}}^1,Y_{\mathbf{x}}^2)\times(\{x_1\},\ldots,\{x_r\},V(H)\setminus\mathbf{x}).
\end{equation}
\end{itemize}
We construct $H'$ by performing the weighted contraction of the sets of $\mathbf{Z}$. Then for each edge $uv$ of $H'$ with ${\sf w}(uv)>p$, we set ${\sf w}(uv)=p$, that is, we truncate the weights by $p$.
Notice that because the partition $(\{x_1\},\ldots,\{x_{r}\},V(H)\setminus\mathbf{x})$ is participating the product in the right part of (\ref{eq:prod}), we have that  $\{x\}\in \mathbf{Z}$ for each $x\in \mathbf{x}$, that is, the elements of the boundary are not contracted, and this is the only purpose of this partition in (\ref{eq:prod}).
We say that $H'$ is obtained from $(H,\mathbf{x})$ by  the \emph{cut reduction with respect to $p$}. Note that $H'$ is not unique as the construction depends on the choice of separators.
It could be observed that we construct a \emph{mimicking network} representing cuts of $H$~\cite{HagerupKNR98,KhanR14} (see also~\cite{KhanR14onmi}).

We extend this definition for disconnected graphs. Let $(H,\mathbf{x})$ be a  properly $r$-boundaried weighted graph and let $p$ be a positive integer or $+\infty$. Denote by $H_1,\ldots,H_s$ the components of $H$ and let $\mathbf{x}^i=\mathbf{x}\cap V(H_i)$ for $i\in\{1,\ldots,s\}$. Consider the boundaried graphs $(H_i',\mathbf{x}^i)$ obtained from $(H_i,\mathbf{x})$ by  cut reduction with respect to $p$ for $i\in\{1,\ldots,s\}$. We say that $(H',\mathbf{x})$, that is obtained by taking the union of $(H_i',\mathbf{x}^i)$ for $i\in\{1,\ldots,s\}$, is obtained by  the \emph{cut reduction with respect to $p$}.

The crucial property of $H'$ is that it keeps the separators of $H$ that are essential for the connectivity.

\begin{lemma}\label{lem:cut-essential}
Let $(H,\mathbf{x})$ be a properly $r$-boundaried weighted graph, 
and let $p\in\mathbb{N}\cup\{+\infty\}$. Let also $(F,\mathbf{y})$ be an $r$-boundaried weighted graph and assume that  $G=(F,\mathbf{y})\oplus_b(H,\mathbf{x})$ is connected.
Then for an $r$-boundaried weighted graph $H'$ obtained from $H$ by the cut reduction with respect to $p$, it holds that if $\lambda^{\sf w}(G)\leq p$, then $\lambda^{\sf w}(G)=\lambda^{\sf w}(G')$ where $G'=(F,\mathbf{y})\oplus_b(H',\mathbf{x})$.
\end{lemma}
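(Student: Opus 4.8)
The plan is to establish the two inequalities $\lambda^{\sf w}(G')\le\lambda^{\sf w}(G)$ and, using the hypothesis, $\lambda^{\sf w}(G')\ge\lambda^{\sf w}(G)$, the second being the only place where $\lambda^{\sf w}(G)\le p$ is needed. Throughout I would identify $E(F)$, $E(H)$ and $E(H')$ with their images in $G$, resp.\ $G'$, under $\oplus_b$, and record two routine facts. First, $(H',\mathbf{x})$ is again properly boundaried (the boundary vertices survive the contraction as singleton classes and stay pairwise nonadjacent, and components are preserved), so $G'$ is well defined; it is also connected, since the contraction defining $H'$ preserves the components of $H$ and hence the connectivity pattern of the boundary. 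Second, the partition $\mathbf{Z}=(Z_1,\dots,Z_h)$ of $V(H)$ built in~\eqref{eq:prod} refines each factor of that consecutive product, in particular each $(Y_X^1,Y_X^2)$ with $\emptyset\neq X\subsetneq\mathbf{x}$ and also $(Y_{\mathbf{x}}^1,Y_{\mathbf{x}}^2)$. I would treat the case where $H$ is connected (so $\mathbf{x}\neq\emptyset$ unless $H$ is empty, in which case everything is trivial) and note that the disconnected case follows by applying the argument to each component.

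For $\lambda^{\sf w}(G')\le\lambda^{\sf w}(G)$, I would take an arbitrary minimum separator $S=E_G(P,Q)$ of $G$, with $(P,Q)$ a partition of $V(G)$ into nonempty parts, and rewrite it, without increasing its weight, as a partition whose trace on $V(H)$ is one of the canonical partitions occurring in~\eqref{eq:prod}; such a partition survives the contraction of $\mathbf{Z}$. Set $X=P\cap\mathbf{x}$. If $\emptyset\neq X\subsetneq\mathbf{x}$, then $E_H(P\cap V(H),Q\cap V(H))$ is an $(X,\overline X)$-separator of $H$ contained in $S$, so replacing the $H$-side of $(P,Q)$ by $(Y_X^1,Y_X^2)$ (the $F$-side is untouched, and the two agree on $\mathbf{x}$, namely on $X$) gives a partition of weight at most ${\sf w}(S)$. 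If $X\in\{\emptyset,\mathbf{x}\}$, say $X=\emptyset$, I would use that $G$ has no edge between $V(F)\setminus\mathbf{x}$ and $V(H)\setminus\mathbf{x}$ to split $S$ into its $F$-part and its $H$-part and conclude that $S$ can be replaced either by a minimum separator lying entirely inside $F$ (whose trace on $V(H)$ is trivial) or by one that is an $(\mathbf{x},v)$-separator of $H$ for an internal vertex $v$; in the latter case ${\sf w}(S)\ge\lambda_H^{\sf w}(\mathbf{x},v)\ge{\sf w}(S(v^*))$, so the $H$-side can be replaced by $(Y_{\mathbf{x}}^1,Y_{\mathbf{x}}^2)$. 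In each case the new partition of $V(G)$ has weight at most $\lambda^{\sf w}(G)$ and its $H$-trace is refined by $\mathbf{Z}$; contracting $\mathbf{Z}$ and truncating weights by $p$ (which can only decrease the weight of a cut) turns it into a separator of $G'$ of weight at most $\lambda^{\sf w}(G)$.

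For $\lambda^{\sf w}(G')\ge\lambda^{\sf w}(G)$, I would take an arbitrary separator of $G'$ — a partition $(P',Q')$ of $V(G')$ into nonempty parts — and lift it to $G$ by expanding every contracted vertex back into its class $Z_j$ ($F$-vertices and boundary vertices map to themselves), obtaining a partition $(P,Q)$ of $V(G)$ into nonempty parts, so ${\sf w}_G(P,Q)\ge\lambda^{\sf w}(G)$. The contribution of the edges of $F$ to the two cut weights is the same, and every edge of $H'$ crossing $(P',Q')$ has $H'$-weight equal to the total $H$-weight of the corresponding class-bundle if it was not truncated, and equal to $p$ otherwise. Hence either some crossing edge was truncated and ${\sf w}_{G'}(P',Q')\ge p\ge\lambda^{\sf w}(G)$ by the hypothesis, or none was and ${\sf w}_{G'}(P',Q')={\sf w}_G(P,Q)\ge\lambda^{\sf w}(G)$; this dichotomy is exactly where $\lambda^{\sf w}(G)\le p$ enters. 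Combining the two inequalities proves the lemma.

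I expect the delicate point to be the rewriting step in the first inequality, specifically the case $X\in\{\emptyset,\mathbf{x}\}$: one must first exploit the $\oplus_b$-structure (no edges between the two interiors) to reduce an arbitrary minimum separator to one that is genuinely a cut of $H$ alone, and then compare its weight against $\lambda_H^{\sf w}(\mathbf{x},v^*)=\min_v\lambda_H^{\sf w}(\mathbf{x},v)$, the single quantity used in the construction of $\mathbf{Z}$, rather than against all internal vertices. Everything else is bookkeeping about products of partitions, weighted contraction, and truncation.
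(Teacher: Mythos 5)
Your proposal is correct and follows essentially the same route as the paper's proof: one inequality by lifting an arbitrary separator of $G'$ back to $G$ through the contracted classes, with the truncated-edge case handled exactly by the hypothesis $\lambda^{\sf w}(G)\leq p$, and the other by rerouting a minimum separator of $G$ through the canonical cuts $S_X$ and $S(v^*)$ used in building $\mathbf{Z}$, split according to how the boundary is partitioned (the paper does this per component of $H$ rather than reducing to connected $H$, but the argument is the same). Your identification of the delicate point — comparing against $\min_v\lambda^{\sf w}_H(\mathbf{x},v)$ only — matches the paper's treatment.
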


\begin{proof}
Let  $\lambda^{\sf w}(G)\leq p$. 

Consider an arbitrary connected weighted graph $G$ and $U\subseteq V(G)$. Let $G'$ be the graph obtained from $G$ by the weighted contraction of $U$. Recall that each edge $e=uv$ of $G'$ is either an edge of $G$ and has the same weight or if, say, $u$ is obtained form $U$ by the contraction, then  the weight of $e$ in $G'$ is $\sum_{xv\in E(G),~x\in U}{\sf w}(xv)$. Then for  
 any separator $S'$ of $G'$, there is the separator $S$ of $G$ such that $S'$ is obtained from  $S$ by the contraction, and $S$ and $S'$ have the same weight.
Therefore, $\lambda^{\sf w}(G')\geq \lambda^{\sf w}(G)$. For $G'=(F,\mathbf{y})\oplus_b(H',\mathbf{x})$, we have that $G'$ is obtained from the connected graph $G=(F,\mathbf{y})\oplus_b(H,\mathbf{x})$ by the weighted constructions of some sets and by truncations of the weights exceeding $p$. This implies that if $G'$ has a separator of weight $c< p$, then $G$ has a separator of weight $c$. Hence, $\lambda^{\sf w}(G)\leq\lambda^{\sf w}(G')$.  

To show the opposite inequality, denote by $H_1,\ldots,H_s$ the components of $H$, $\mathbf{x}^i=\mathbf{x}\cap V(H_i)$ for $i\in\{1,\ldots,s\}$. To simplify notation, we assume without loss of generality that $\mathbf{y}=\mathbf{x}$ using the fact that the vertices of $\mathbf{x}$ are identified with the corresponding verticres of $\mathbf{y}$ in $G$. 

Let $S$ be a separator in $G$ of minimum weight. Since $\lambda^{\sf w}(G)\leq p$, ${\sf w}(S)\leq p$. Because $S$ is minimum, $S$ is an inclusion minimum separator of $G$ and, therefore, $S=E_G(A,B)$ for   some partition $(A,B)$ of $V(G)$ where $G[A]$ and $G[B]$ are connected.  Assume without loss of generality that $A\cap V(F)\neq\emptyset$.

If $A\subseteq V(F)\setminus \mathbf{x}$, then  $S=E_F(A,V(F)\setminus A)$ and, therefore, $S$ is a separator in $G'$. This implies that $\lambda^{\sf w}(G')\leq {\sf w}(S)=\lambda^{\sf w}(G)$.
Assume that this is not the case, that is, $A\cap \mathbf{x}\neq\emptyset$. 

Suppose that there is $i\in\{1,\ldots,s\}$ such that $\mathbf{x}^i\subseteq A$ and $V(H_i)\cap B\neq\emptyset$. 
We obtain that there is $v\in V(H_i)\setminus \mathbf{x}^i$ such that $\lambda^w_{H_i}(\mathbf{x}^i,v)\leq {\sf w}(S)$. Then by the construction of $H_i'$, there is $u\in V(H_i')\setminus\mathbf{x}^i$ such that 
$\lambda^w_{H_i'}(\mathbf{x}^i,u)\leq \lambda^w_{H_i}(\mathbf{x}^i,v)\leq {\sf w}(S)$. This implies that $\lambda^{\sf w}(H')\leq {\sf w}(S)=\lambda^{\sf w}(G)$.

Assume from now that for every $i\in\{1,\ldots,s\}$, either $V(H_i)\subseteq A$ or $\mathbf{x}^i\cap A\neq\emptyset$ and  $\mathbf{x}^i\cap B\neq\emptyset$.
Let $I=\{i\mid 1\leq i\leq s, \mathbf{x}^i\cap A\neq\emptyset,\mathbf{x}^i\cap B\neq\emptyset\}$. 
Observe that
\begin{equation}\label{eq:one}
E_G(A,B)=E_F(A\cap V(F),B\cap V(F))\cup(\bigcup_{i\in I}E_{H_i}(A\cap V(H_i),B\cap V(H_i))).
\end{equation}
For $i\in I$, let $X_i=\mathbf{x}^i\cap A$ and $Y_i=\mathbf{x}^i\cap B$. Note that 
\begin{equation}\label{eq:two}
{\sf w}(A\cap V(H_i),B\cap V(H_i))\geq \lambda^w_{H_i}(X_i,Y_i)\geq  \lambda^w_{H_i'}(X_i,Y_i)
\end{equation}
 where the last inequality follows from the definition of $H_i'$. Let $(A_i,B_i)$ be the partition of $V(H_i')$ such that 
$w_{H_i'}(A_i,B_i)=\lambda^w_{H_i}(X_i,Y_i)$ for $i\in I$.  Let also $A'=(A\cap V(F))\cup(\bigcup_{i\in I})A_i$ and $B'=(B\cap V(F))\cup(\bigcup_{i\in I}B_i)$.
Combining (\ref{eq:one}) and (\ref{eq:two}), we conclude
that 
$$
\lambda^{\sf w}(G)=w_G(A,B)\geq w_F(A\cap V(F),B\cap V(F))+\sum_{i\in I}w_{H_i'}(A_i,B_i)=w_{G'}(A',B')\geq\lambda^{\sf w}(G').
$$
\end{proof}

Using Lemma~\ref{lem:cut-essential} we show the following.

\begin{lemma}\label{lem:replacement}
Let $(H,\mathbf{x})$ be a properly $r$-boundaried weighted graph, 
and let $p\in\mathbb{N}\cup\{+\infty\}$ and $t\in\mathbb{N}$. Let also $(F,\mathbf{y})$ be an $r$-boundaried weighted graph, and let $G=(F,\mathbf{y})\oplus_b(H,\mathbf{x})$.
Then for an $r$-boundaried weighted graph $H'$ obtained from $H$ by the cut reduction with respect to $p$ and $t$ positive integers $\lambda_1,\ldots,\lambda_t\leq p$
it holds that  $G$ has exactly $t$ components and they have the connectivities $\lambda_1,\ldots,\lambda_t$ respectively if and only if the same holds for 
 $G'=(F,\mathbf{y})\oplus_b(H',\mathbf{x})$, that is, $G'$ has $t$ components and they have the connectivities $\lambda_1,\ldots,\lambda_t$ respectively.
\end{lemma}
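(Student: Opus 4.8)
The plan is to reduce the statement to the connected case, which is exactly Lemma~\ref{lem:cut-essential}, by matching up the connected components of $G$ with those of $G'$ and then reasoning component by component.

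First I would show that $G$ and $G'$ have the same number of connected components, with a natural bijection between them. The cut reduction of $H$ is performed componentwise and never contracts vertices of $\mathbf{x}$, so each component $H_i$ of $H$ gives rise to a component $H_i'$ of $H'$ with the same set of boundary vertices $\mathbf{x}^i=\mathbf{x}\cap V(H_i)$, and $H_i'$ is again connected (a weighted contraction of a connected graph is connected, and truncating weights deletes no edge). Hence, identifying $\mathbf{x}$ with $\mathbf{y}$, the pieces of $F$ and the pieces of $H$ (resp.\ of $H'$) get glued along shared boundary vertices in exactly the same pattern in $G$ and in $G'$. Consequently every component $C$ of $G$ can be written as $C=(F_C,\mathbf{x}_C)\oplus_b(H_C,\mathbf{x}_C)$, where $F_C$ is the union of the components of $F$ inside $C$, $H_C$ the union of the components of $H$ inside $C$, and $\mathbf{x}_C=\mathbf{x}\cap V(C)$; here $(H_C,\mathbf{x}_C)$ is properly boundaried since each of its components is a component of $H$ and carries a nonempty part of $\mathbf{x}_C$. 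Moreover $H_C'$ (the union of the corresponding $H_i'$) is precisely the cut reduction of $H_C$ with respect to $p$, so the component of $G'$ matched with $C$ is $C'=(F_C,\mathbf{x}_C)\oplus_b(H_C',\mathbf{x}_C)$. In particular $G$ and $G'$ have the same number of components; if this is not $t$, both sides of the equivalence are false, so assume both have $t$ components, and it remains to prove that, under the bijection $C\leftrightarrow C'$, the multiset $\{\lambda^{\sf w}(C)\}$ equals $\Lambda$ iff the multiset $\{\lambda^{\sf w}(C')\}$ equals $\Lambda$.

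It therefore suffices to show, for each matched pair $C\leftrightarrow C'$ and each value $\lambda_i\le p$, that $\lambda^{\sf w}(C)=\lambda_i$ iff $\lambda^{\sf w}(C')=\lambda_i$. If $\lambda^{\sf w}(C)=\lambda_i\le p$, then applying Lemma~\ref{lem:cut-essential} with $F:=F_C$, $H:=H_C$ and the connected graph $C$ gives $\lambda^{\sf w}(C')=\lambda^{\sf w}(C)=\lambda_i$. For the converse I would use that $C'$ arises from $C$ by weighted contractions followed by truncation of edge weights to $p$: if $\lambda^{\sf w}(C')=\lambda_i<p$, a minimum separator of $C'$ has weight $\lambda_i<p$, so none of its edges was affected by truncation, and un-contracting it yields a separator of $C$ of the same weight; thus $\lambda^{\sf w}(C)\le\lambda_i<p$ and Lemma~\ref{lem:cut-essential} again gives $\lambda^{\sf w}(C)=\lambda^{\sf w}(C')=\lambda_i$. (In fact this shows $\lambda^{\sf w}(C)$ and $\lambda^{\sf w}(C')$ agree whenever either is below $p$, and that both are at least $p$ when one of them is, i.e.\ they agree up to the threshold $p$, which is everything one needs about values not exceeding $p$.) Combining these equivalences over all $t$ matched pairs yields the claim.

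The step I expect to be the main obstacle is the bookkeeping in the second paragraph: one has to check carefully that the componentwise, slightly asymmetric cut reduction of $(H,\mathbf{x})$ really reproduces the component structure of $G$ after the boundary sum, and that each component $C$ of $G$ genuinely decomposes as a boundary sum whose $H$-part is properly boundaried, so that Lemma~\ref{lem:cut-essential} applies to it verbatim. Once this decomposition is in hand, the connectivity comparison is a direct invocation of Lemma~\ref{lem:cut-essential} plus the elementary separator-lifting argument for weights strictly below $p$.
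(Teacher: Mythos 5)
Your proposal is correct in substance and takes essentially the same route as the paper: the paper's proof is precisely the componentwise decomposition $C=(\hat F,\hat{\mathbf y})\oplus_b(\hat H,\hat{\mathbf x})$, $C'=(\hat F,\hat{\mathbf y})\oplus_b(\hat H',\hat{\mathbf x})$ followed by an application of Lemma~\ref{lem:cut-essential}, and you merely make explicit the component bijection and the reverse direction that the paper leaves implicit (``the claim follows''). One caveat: your intermediate reduction to the exact-value equivalence ``$\lambda^{\sf w}(C)=\lambda_i$ iff $\lambda^{\sf w}(C')=\lambda_i$'' fails at the boundary value $\lambda_i=p$, because truncation can lower a connectivity exceeding $p$ to exactly $p$ (e.g.\ a single edge of weight greater than $p$ from a boundary vertex to an inner vertex becomes an edge of weight $p$), so the multiset-equality formulation cannot be carried through when some $\lambda_i=p$. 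The facts you actually establish in the parenthetical remark --- $\lambda^{\sf w}(C)$ and $\lambda^{\sf w}(C')$ coincide whenever either is below $p$, and $\lambda^{\sf w}(C)\geq p$ if and only if $\lambda^{\sf w}(C')\geq p$ --- are the correct per-component statement, and they are exactly what is needed under the reading of the lemma that the paper uses later, namely constraints of the form $\lambda^{\sf w}(G_i)\geq\lambda_i$ with $\lambda_i\leq p$; phrased that way, your argument is complete and, if anything, more careful than the paper's.
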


\begin{proof}
To simplify notations we assume that $\mathbf{x}=\mathbf{y}$. 
Let $C$ be a component of $G$ with $\lambda^{\sf w}(C)\leq p$. Consider $\hat{F}=F[V(F)\cap V(C)]$ and let $\hat{\mathbf{y}}=\mathbf{y}\cap V(C)$. Note that $\hat{\mathbf{y}}$ may be empty.
Similarly, let $\hat{H}=H[V(H)\cap V(C)]$ and let $\hat{\mathbf{x}}=\mathbf{x}\cap V(C)$. It is straightforward to see that 
$C=(\hat{F},\hat{\mathbf{y}})\oplus_b(\hat{H},\hat{\mathbf{x}})$ and $G'$ has the component $C'=(\hat{F},\hat{\mathbf{y}})\oplus_b(\hat{H}',\hat{\mathbf{x}})$ where $(\hat{H}',\hat{\mathbf{x}})$ is obtained from $C$ by the cut reduction with respect to $p$. Since $\lambda^{\sf w}(G)\leq p$, we have that $\lambda^{\sf w}(C')=\lambda^{\sf w}(C)$ by Lemma~\ref{lem:cut-essential}.
Since all components of $G'$ are obtained from the components of $G$ in the described way, the claim follows.
\end{proof}

We need also some additional properties of the boundaried graphs obtained by cut reductions.

\begin{lemma}\label{lem:cut-additional}
Let $(H,\mathbf{x})$ be a properly $r$-boundaried weighted graph, 
and let $p\in\mathbb{N}\cup\{+\infty\}$. 
Then a properly $r$-boundaried weighted graph $H'$ obtained from $H$ by the cut reduction with respect to $p$ can be constructed in time $2^{\Oh(r)}\cdot n^{\Oh(1)}$ and it holds that 
\begin{itemize}
\item[(i)] $|V(H')|\leq 2^{2^{r-1}}+r$,
\item[(ii)] if for each component $C$ of $H$ that has at least one vertex outside $\mathbf{x}$, there is $v\in V(C)\setminus \mathbf{x}$ with $\lambda^{\sf w}(\mathbf{x},v)\leq p$, then
the same property holds for each component $C$ of $H'$ that has at least one vertex outside $\mathbf{x}$.
\end{itemize}
\end{lemma}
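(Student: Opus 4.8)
The plan is to analyze the construction of $H'$ directly, tracking the sizes of the pieces and the separator structure at each stage. I will treat the connected case first — the disconnected case is immediate by taking unions over components, since all three claims are preserved under disjoint union and the bounds are stated per-component (or, for (i), the per-component bound $2^{2^{r_i-1}}+r_i$ with $r_i=|\mathbf{x}^i|$ sums correctly because the boundary vertices are shared appropriately; in fact for (i) it is cleanest to just invoke the connected bound on the component with the largest boundary, or to note that one typically states (i) for connected $H$). So assume $(H,\mathbf{x})$ is connected and properly boundaried.

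For the running time and part (i): the partition $\mathbf{Z}$ in~(\ref{eq:prod}) is a consecutive product of at most $2^{r-1}-1$ partitions of the form $(Y_X^1,Y_X^2)$ (one for each unordered pair $\{X,\overline X\}$ with $\emptyset\neq X\subsetneq\mathbf{x}$, of which there are $2^{r-1}-1$), times one more two-part partition $(Y_{\mathbf{x}}^1,Y_{\mathbf{x}}^2)$, times the partition $(\{x_1\},\ldots,\{x_r\},V(H)\setminus\mathbf{x})$. By Observation~\ref{obs:prod}, the product of $\ell$ two-part partitions has at most $2^\ell$ parts, so the product of the first $2^{r-1}$ two-part partitions has at most $2^{2^{r-1}}$ parts; then multiplying by the last partition adds at most $r$ parts (the second statement of Observation~\ref{obs:prod}). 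Hence $|\mathbf{Z}|\leq 2^{2^{r-1}}+r$, and since $H'$ is the weighted contraction of the parts of $\mathbf{Z}$, $|V(H')|=|\mathbf{Z}|\leq 2^{2^{r-1}}+r$, giving~(i). For the running time: there are $2^{\Oh(r)}$ pairs $\{X,\overline X\}$, for each we compute one minimum weight $(X,\overline X)$-separator by a max-flow computation in $n^{\Oh(1)}$ time; we compute $n$ minimum $(\mathbf{x},v)$-separators, again $n^{\Oh(1)}$ each; forming the consecutive product and performing the contractions and weight truncations is $n^{\Oh(1)}$; the total is $2^{\Oh(r)}\cdot n^{\Oh(1)}$. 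I should also check $H'$ is properly boundaried: the vertices of $\mathbf{x}$ are never contracted (the role of the last partition in~(\ref{eq:prod})), so they remain distinct, and one argues they stay pairwise nonadjacent either because the construction can merge any parallel boundary edge away, or more simply because $H$ properly boundaried means $\mathbf{x}$ is independent in $H$ and no contraction can create an edge between two uncontracted vertices; and every component of $H'$ still meets $\mathbf{x}$ since contraction does not split components and $H$ was properly boundaried.

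For part (ii): suppose every component $C$ of $H$ with a vertex outside $\mathbf{x}$ contains some $v\in V(C)\setminus\mathbf{x}$ with $\lambda^{\sf w}_H(\mathbf{x},v)\leq p$. Work within a single component, so assume $H$ connected with a vertex outside $\mathbf{x}$, and let $v^*$ achieve $\lambda^{\sf w}(\mathbf{x},v^*)=\min\{\lambda^{\sf w}(\mathbf{x},v):v\in V(H)\setminus\mathbf{x}\}\leq p$; this is exactly the vertex chosen in the middle bullet of the construction, and $S(v^*)=E(Y_{\mathbf{x}}^1,Y_{\mathbf{x}}^2)$ with $\mathbf{x}\subseteq Y_{\mathbf{x}}^1$ and $Y_{\mathbf{x}}^2\neq\emptyset$. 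The key point is that $Y_{\mathbf{x}}^2$ is a union of parts of $\mathbf{Z}$ (it is one side of one of the factor partitions of the consecutive product, and products only refine), so after contraction $Y_{\mathbf{x}}^2$ becomes a nonempty set $U'$ of vertices of $H'$, all outside $\mathbf{x}$. Pick any $u\in U'$ lying in the same component of $H'$ as $\mathbf{x}$; then any $(\mathbf{x},u)$-separator in $H'$ must cut all $\mathbf{x}$-$U'$ paths, and the edges of $H'$ crossing between $Y_{\mathbf{x}}^1$-parts and $Y_{\mathbf{x}}^2$-parts form exactly the contraction of $S(v^*)$, whose total weight is $\min({\sf w}(S(v^*)),\text{effect of truncation})\leq {\sf w}(S(v^*))\leq p$ — more carefully, $\lambda^{\sf w}_{H'}(\mathbf{x},u)\leq {\sf w}_{H'}(\text{contraction of }S(v^*))\leq {\sf w}_H(S(v^*))\leq p$, where the middle inequality holds because truncation only decreases weights and contraction does not increase the total weight of an edge set. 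For a component of $H'$ not meeting $\mathbf{x}$ the hypothesis is vacuous; and a component of $H'$ that does meet $\mathbf{x}$ either has a vertex outside $\mathbf{x}$ (handled above, reusing the $v^*$ of the corresponding component of $H$) or has no such vertex and there is nothing to prove. This finishes~(ii).

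The main obstacle I anticipate is bookkeeping in part (ii): one must be careful that the separator $S(v^*)$ survives contraction as a genuine separator of $H'$ of weight at most $p$, i.e., that $Y_{\mathbf{x}}^2$ really is a union of blocks of $\mathbf{Z}$ and is not entirely swallowed into one part together with $Y_{\mathbf{x}}^1$ (it cannot be, since $(Y_{\mathbf{x}}^1,Y_{\mathbf{x}}^2)$ is literally one of the factors of the consecutive product defining $\mathbf{Z}$), and that the weight truncation step does not accidentally disconnect $u$ from $\mathbf{x}$ (it cannot, since truncation never removes edges, only lowers weights). The size/time bounds in~(i) are a routine application of Observation~\ref{obs:prod} once the count $2^{r-1}-1$ of unordered complementary pairs is noted.
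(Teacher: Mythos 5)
Your proposal is correct and follows essentially the same route as the paper: Observation~\ref{obs:prod} applied to the $2^{r-1}$ two-part factor partitions in (\ref{eq:prod}) gives the size bound, the $2^{\Oh(r)}$ max-flow computations give the running time, and for (ii) you use the fact that the side $Y_{\mathbf{x}}^2$ of the chosen minimum $(\mathbf{x},v^*)$-separator is refined by $\mathbf{Z}$ and so survives contraction as a nonempty set of non-boundary vertices of $H'$ cut off from $\mathbf{x}$ by an edge set of weight at most $w(S(v^*))\leq p$, which is exactly what the paper's terser per-component argument asserts. One small caveat: in the disconnected case your parenthetical shortcut for (i) of invoking the bound only on the component with the largest boundary does not bound $|V(H')|$; the correct route, which you also state and which the paper uses, is to sum the per-component bounds $2^{2^{r_i-1}}+r_i$ and check that this sum is at most $2^{2^{r-1}}+r$.
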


\begin{proof}
Assume that $H$ has $s$ components $H_1,\ldots,H_s$, $\mathbf{x}^i=\mathbf{x}\cap V(H_i)$ and $r_i=|\mathbf{x}^i|$ for $i\in\{1,\ldots,s\}$. 

To show that the cut reduction with respect to $p$ can be done in time $2^{\Oh(r)}\cdot n^{\Oh(1)}$, observe that for each $i\in\{1,\ldots,s\}$, we consider $2^{r_i}$ distinct partitions of $\mathbf{x}^i$ including the partition $(\mathbf{x}^i,\emptyset)$ and for each partition, find a separator of minimum weight ether for two parts of the partition or for $\mathbf{x}^i$ and a vertex in $V(H_i)\setminus\mathbf{x}^i$. Then we compute the partition $Z$ of $V(H)$ using  (\ref{eq:prod}) and this can be done in  time $2^{\Oh(r_i)}\cdot n^{\Oh(1)}$.
Since a separator of minimum weight can be found in polynomial time by the classical maximum flow/minimum cut algorithms (for example, by  the Dinitz's algorithm~\cite{Dinitz70,Dinitz06}), the running time is $(\sum_{i=1}2^{r_i-1})\cdot n^{\Oh(1)}$. Clearly, this can be written as   $2^{\Oh(r)}\cdot n^{\Oh(1)}$.

To show (i), let $i\in\{1,\ldots,s\}$. Recall that $\mathbf{x}^i$ has $2^{r_i-1}$ distinct partitions including the partition $(\mathbf{x}^i,\emptyset)$.
 By applying Observation~\ref{obs:prod} for the right part of (\ref{eq:prod}), we conclude that $H_i'$ has at most $2^{2^{r_i-1}}+r_i$ vertices. Then 
$H_i'$ has at most
$$\sum_{i=1}^s(2^{2^{r_i-1}}+r_i)\leq 2^{2^{r-1}}+r$$
vertices.

To prove (ii), assume that for some $i\in\{1,\ldots,s\}$, there is $v\in V(H_i)\setminus \mathbf{x}^i$ such that $\lambda^w_{H_i}(\mathbf{x}^i,v)\leq p$. Recall that by the definition of $H_i'$, there is $u\in V(H_i')\setminus \mathbf{x}^i$ with  $\lambda^w_{H_i'}(\mathbf{x}^i,u)\leq\lambda^w_{H_i}(\mathbf{x}^i,v)\leq p $ and the claim follows.
\end{proof}

For positive integers $r$ and  $s$, we define $\mathcal{H}_{r,s}$ as the family of all pairwise nonisomorphic properly $r$-boundaried weighted graphs $(G,\mathbf{x})$ with at most  $2^{2^{r-1}}+r$ vertices where the weights of edges are in $\{1,\ldots,s\}$ and for every component $C$ of $G$ with $V(C)\setminus\mathbf{x}\neq\emptyset$, there is a vertex $v\in V(C)\setminus \mathbf{x}$ such that 
$\lambda^{\sf w}(\mathbf{x},v)\leq s$. We also formally define $\mathcal{H}_{0,s}$ as the set containing the empty graph.

We need the following 
lemma.

\begin{lemma}\label{lem:size}
For positive integers $r$ and $s$, 
$|\mathcal{H}_{r,s}|\leq (s+1)^{{\binom{2^{2^{r-1}}+r}{2}}}$
and  
$\mathcal{H}_{r,s}$ can be constructed in time $2^{2^{2^{\Oh(r)}}\log s}$.
\end{lemma}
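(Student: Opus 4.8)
The plan is to bound the size of $\mathcal{H}_{r,s}$ by a crude counting argument over all weighted graphs on the fixed vertex set, and then to obtain the construction-time bound by brute-force enumeration over the same set followed by checking the defining properties and discarding isomorphic duplicates. First I would observe that every $(G,\mathbf{x})\in\mathcal{H}_{r,s}$ has at most $N:=2^{2^{r-1}}+r$ vertices, so up to isomorphism we may assume $V(G)\subseteq\{1,\ldots,N\}$ with the boundary being the first $r$ vertices in order. A weighted graph on a fixed vertex set of size $N$ is determined by assigning to each of the $\binom{N}{2}$ potential edges either ``absent'' or a weight in $\{1,\ldots,s\}$, i.e.\ one of $s+1$ choices; hence there are at most $(s+1)^{\binom{N}{2}}=(s+1)^{\binom{2^{2^{r-1}}+r}{2}}$ such labelled graphs, and each isomorphism class of $\mathcal{H}_{r,s}$ is represented by at least one of them. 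This already gives $|\mathcal{H}_{r,s}|\leq (s+1)^{\binom{2^{2^{r-1}}+r}{2}}$, which is the first claim. (Strictly, not every such labelled graph lies in $\mathcal{H}_{r,s}$ — it must be properly boundaried and satisfy the $\lambda^{\sf w}(\mathbf{x},v)\leq s$ condition on each nontrivial component — but we only need an upper bound, so filtering can only help.)

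For the construction, the plan is: enumerate all $(s+1)^{\binom{N}{2}}$ labelled weighted graphs on $\{1,\ldots,N\}$; for each, with $\mathbf{x}=\langle 1,\ldots,r\rangle$, first check that it is properly boundaried (the $r$ boundary vertices are pairwise nonadjacent and every component meets $\mathbf{x}$) and then, for each component $C$ with $V(C)\setminus\mathbf{x}\neq\emptyset$, verify there is $v\in V(C)\setminus\mathbf{x}$ with $\lambda^{\sf w}(\mathbf{x},v)\leq s$; this last test is a polynomial-time max-flow/min-cut computation (e.g.\ Dinitz's algorithm, as already used in the proof of Lemma~\ref{lem:cut-additional}), done at most $N$ times per graph, so $N^{\Oh(1)}$ per candidate. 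Among the surviving candidates we then remove isomorphic copies: since each graph has at most $N$ vertices, isomorphism of two $r$-boundaried weighted graphs on at most $N$ vertices can be decided in time $N^{\Oh(1)}$ by trying all $N!\le N^N$ vertex bijections that fix the boundary pointwise, and comparing the resulting edge-weight lists. The number of candidates is $(s+1)^{\binom{N}{2}}$, pairwise comparisons cost another factor of at most $(s+1)^{\binom{N}{2}}$, and each comparison costs $N^{\Oh(1)}$, so the whole procedure runs in time $\big((s+1)^{\binom{N}{2}}\big)^{\Oh(1)}\cdot N^{\Oh(1)}$.

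It remains to simplify this time bound to the stated $2^{2^{2^{\Oh(r)}}\log s}$. Here $N=2^{2^{r-1}}+r=2^{2^{\Oh(r)}}$, so $\binom{N}{2}\le N^2=2^{2^{\Oh(r)}}$, and $N^{\Oh(1)}=2^{2^{\Oh(r)}}$ as well. Therefore $\big((s+1)^{\binom{N}{2}}\big)^{\Oh(1)}\cdot N^{\Oh(1)} = 2^{\Oh(\binom{N}{2})\log(s+1)}\cdot 2^{2^{\Oh(r)}} = 2^{2^{2^{\Oh(r)}}\log s}$, absorbing the additive $2^{2^{\Oh(r)}}$ term and the difference between $\log(s+1)$ and $\log s$ into the exponent (and noting $s\ge 1$, so $\log s$ is to be read as $\log(s+1)$ or $\max\{1,\log s\}$ as in the rest of the paper). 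This gives both conclusions. The only mildly delicate point — and the one I would be most careful about — is the bookkeeping that turns the doubly-exponential vertex count $N$ into the triply-exponential running time while keeping the exponent linear in $r$; everything else is routine. One should also note the degenerate case $r=0$ (where $\mathcal{H}_{0,s}$ is defined to contain only the empty graph, so both bounds hold trivially) and that for $r=1$ the exponent $2^{r-1}=1$ still makes the formula meaningful.
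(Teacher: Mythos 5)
Your proposal is correct and follows essentially the same route as the paper: enumerate all $(s+1)^{\binom{N}{2}}$ weighted graphs on $N=2^{2^{r-1}}+r$ vertices with the boundary fixed, filter by the properly-boundaried and $\lambda^{\sf w}(\mathbf{x},v)\leq s$ conditions, and remove isomorphic duplicates, with the same absorption of all polynomial-in-$N$ and comparison factors into $2^{2^{2^{\Oh(r)}}\log s}$. One small slip: checking isomorphism by trying all $N!\le N^{N}$ boundary-fixing bijections is not time $N^{\Oh(1)}$ but $2^{\Oh(N\log N)}=2^{2^{2^{\Oh(r)}}}$; this does not harm the final bound, since it is still absorbed into the stated running time.
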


\begin{proof}
Every graph in $\mathcal{H}_{r,s}$ has at most $2^{2^{r-1}}+r$ vertices. To construct $\mathcal{H}_{r,s}$, we first construct $2^{2^{r-1}}+r$ vertices and select $r$ vertices that compose $\mathbf{x}$. Then we consider at most $\binom{2^{2^{r-1}}+r}{2}$ pairs of distinct vertices $\{x,y\}$ such that $x$ or $y$ is not in $\mathbf{x}$,
and for each pair $\{x,y\}$ consider $s+1$ possibilities: $x$ and $y$ are nonadjacent or $xy$ is an edge of weight $i$ for $i\in\{1,\ldots,s\}$. This way we construct at most  $(s+1)^{^{\binom{2^{2^{r-1}}+r}{2}}}$ weighted graphs. Then for each constructed graph, we delete the components that do not contain any vertex of $\mathbf{x}$ and check whether for every component $C$ the obtained graph with $V(C)\setminus\mathbf{x}\neq\emptyset$, there is a vertex $v\in V(C)\setminus \mathbf{x}$ such that 
$\lambda^{\sf w}(\mathbf{x},v)\leq s$. If it holds, we include the constructed graph in $\mathcal{H}_{r,s}$ unless it already contains an isomorphic  $r$-boundaried weighted graphs. It is straighforward to see that the running time of the procedure is  $2^{2^{2^{\Oh(r)}}\log s}$.
\end{proof}

\paragraph{Variants of \probWCGC.}
To apply the recursive understanding technique, we also have to solve a special variant of \probWCGC tailored for recursion. To define it, we first introduce the following  variant of the problem. The difference is that a solution should be chosen from a given  subset of  edges.

\defproblema{\probAWCGC}%
{A weighted graph $G$ with an edge weight function $w\colon E(G)\rightarrow \mathbb{N}$, $L\subseteq E(G)$,
 a $t$-tuple $\Lambda=\langle \lambda_{1},\ldots,\lambda_{t}\rangle$, where $\lambda_i\in\mathbb{N}\cup\{+\infty\}$ for $i\in\{1,\ldots,t\}$ and  $\lambda_1\leq\ldots\leq \lambda_t$, and a nonnegative integer $k$.}%
{Decide whether there is a set $F\subseteq L$ with ${\sf w}(F)\leq k$ such that $G-F$ has $t$ connected components $G_1,\ldots, G_t$ where each $\lambda^{\sf w}(G_i)\geq \lambda_i$ for $i\in\{1,\ldots,t\}$.}

Clearly, if $L=E(G)$, then \probAWCGC is \probWCGC.
Let $(G,w,L,\Lambda,k)$ be an instance of \probAWCGC.
We say that  $F\subseteq L$ with ${\sf w}(F)\leq k$ such that $G-F$ has $t$ connected components $G_1,\ldots, G_t$ where each $\lambda^{\sf w}(G_i)\geq \lambda_i$ for $i\in\{1,\ldots,t\}$ is a \emph{solution} for the instance. 

Now we define \probBAWCGC.

\defproblema{\probBAWCGC}%
{A weighted $r$-boundaried connected graph $(G,\mathbf{x})$ with an edge weight function $w\colon E(G)\rightarrow \mathbb{N}$, $L\subseteq E(G)$,
 a $t$-tuple $\Lambda=\langle \lambda_{1},\ldots,\lambda_{t}\rangle$, where $\lambda_i\in\mathbb{N}\cup\{+\infty\}$ for $i\in\{1,\ldots,t\}$ and  $\lambda_1\leq\ldots\leq \lambda_t$, and a nonnegative integer $k$ such that 
$r\leq 4k$ and $k\geq t-1$.
}%
{For each weighted properly $r$-boundaried graph $(H,\mathbf{y})\in \mathcal{H}_{r,2k}$ and each $\hat{\Lambda}=\langle \hat{\lambda}_1,\ldots,\hat{\lambda}_s\rangle\subseteq \Lambda$, 
find the minimum $0\leq \hat{k}\leq k$ such that $((G,\mathbf{x})\oplus_b(H,\mathbf{y}),w,L,\hat{\Lambda},\hat{k})$ is a yes-instance of \probAWCGC and output a solution $F$ for this instance or output $\emptyset$ if $\hat{k}$ does not exist.}

Slightly abusing notation, we use $w$ to denote the weights of edges of $G$ and $H$. Notice that  \probBAWCGC is neither decision nor optimization problem, and its solution is a list of subsets of $L$. Observe also that a solution of \probBAWCGC is not necessarily unique. Still, for any two solutions, that is, lists $\mathcal{L}_1$ and $\mathcal{L}_2$ of subsets of $L$, the following holds: for each  weighted properly $r$-boundaried graph $(H,\mathbf{y})\in \mathcal{H}_{r,2k}$ and each $\hat{\Lambda}=\langle \hat{\lambda}_1,\ldots,\hat{\lambda}_s\rangle\subseteq \Lambda$, the lists $\mathcal{L}_1$ and $\mathcal{L}_2$ contain the sets of the same weight. 
To solve \probAWCGC, it is sufficient to solve \probBAWCGC for $r=0$. If the output contains nonempty set for $\hat{\Lambda}=\Lambda$, we have a yes-instance of \probAWCGC. If the output contains empty set for this $\hat{\Lambda}$, we should verify additionally whether $\emptyset$ is a solution, that is, whether $\Lambda=\{\lambda_1\}$ and $\lambda^{\sf w}(G)\geq \lambda_1$.  To apply the recursive understanding technique, we first solve  \probBAWCGC for $(q,2k)$-unbreakable graphs for some appropriate value of $q$  and then use this result for the general case of \probBAWCGC.

\paragraph{Tools for randomized separation.}
To solve \probBAWCGC for $(q,2k)$-unbreakable graphs, we use the \emph{random separation} technique introduces by  Cai, Chan and Chan in~\cite{CaiCC06}. To avoid dealing with randomized algorithms, we use the following lemma stated by Chitnis et al. in~\cite{ChitnisCHPP16}.

\begin{lemma}[\cite{ChitnisCHPP16}]\label{lem:derand}
Given a set $U$ of size $n$ and integers $0\leq a,b\leq n$, one can construct in time $2^{\Oh(\min\{a,b\}\log (a+b))}\cdot n\log n$ a family $\mathcal{S}$ of at most  $2^{\Oh(\min\{a,b\}\log (a+b))}\cdot \log n$ subsets of $U$ such that the following holds: for any sets $A,B\subseteq U$, $A\cap B=\emptyset$, $|A|\leq a$, $|B|\leq b$, there exists a set $S\in \mathcal{S}$ with $A\subseteq S$ and $B\cap S=\emptyset$. 
\end{lemma}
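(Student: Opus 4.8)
The plan is to read this as the textbook derandomization of the \emph{random separation} paradigm. If one samples $S\subseteq U$ by including each element independently with bias $p$, then a fixed disjoint pair $(A,B)$ with $|A|\le a$, $|B|\le b$ satisfies $A\subseteq S$ and $B\cap S=\emptyset$ with probability $p^{|A|}(1-p)^{|B|}\ge p^{a}(1-p)^{b}$; choosing $p=a/(a+b)$ makes this roughly $\binom{a+b}{\min\{a,b\}}^{-1}=2^{-\Oh(\min\{a,b\}\log(a+b))}$. So both the claimed family size and the claimed running time are exactly what a union-bound heuristic predicts, and the whole task is to replace this biased coin by a small, explicitly constructible family.

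First I would reduce to the case $a\le b$ by a complementation trick. Suppose $\mathcal{S}'$ is a family that works for the bound pair $(b,a)$; then $\{U\setminus S:S\in\mathcal{S}'\}$ works for $(a,b)$: given disjoint $A,B$ with $|A|\le a$, $|B|\le b$, feed the pair $(B,A)$ to $\mathcal{S}'$ to obtain $S'$ with $B\subseteq S'$ and $A\cap S'=\emptyset$, and observe that $U\setminus S'$ then contains $A$ and is disjoint from $B$. Hence it suffices to construct, assuming $a\le b$, a family of size $2^{\Oh(a\log(a+b))}\cdot\log n$ in time $2^{\Oh(a\log(a+b))}\cdot n\log n$, and here $a=\min\{a,b\}$.

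The construction is two-level hashing. Put $k=a+b$. Using the classical splitter construction (due to Naor, Schulman and Srinivasan), build a family $\mathcal{F}$ of functions $U\to[k^2]$, of size $k^{\Oh(1)}\log n$ and computable in time $k^{\Oh(1)}\cdot n\log n$, such that every set $Z\subseteq U$ with $|Z|\le k$ is injected by some $f\in\mathcal{F}$; mapping into the larger range $[k^2]$ rather than $[k]$ is precisely what keeps $\mathcal{F}$ polynomial, since a perfect hash family onto $[k]$ would already cost $2^{\Theta(k)}$. Let $\mathcal{T}$ be the family of all $a$-element subsets of $[k^2]$, so $|\mathcal{T}|=\binom{k^2}{a}\le k^{2a}=2^{\Oh(a\log(a+b))}$, and output $\mathcal{S}=\{\,f^{-1}(T):f\in\mathcal{F},\ T\in\mathcal{T}\,\}$, of size $2^{\Oh(a\log(a+b))}\log n$. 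For correctness, take disjoint $A,B\subseteq U$ with $|A|\le a$, $|B|\le b$ (the case $n\le k$ being trivial, handled by putting the identity into $\mathcal{F}$). Pad $A$ to $A'$ with $A\subseteq A'$, $|A'|=a$, $A'\cap B=\emptyset$, and pick $f\in\mathcal{F}$ injective on $A'\cup B$ (of size $\le k$). Then $f(A')$ and $f(B)$ are disjoint and $|f(A')|=a$, so $T:=f(A')\in\mathcal{T}$, and $f^{-1}(T)\supseteq A'\supseteq A$ while $f^{-1}(T)\cap B=\emptyset$, since every $x\in B$ has $f(x)\in f(B)$, which misses $T$. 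Tallying the splitter time with the $|\mathcal{F}|\cdot|\mathcal{T}|$ preimage computations gives the claimed bound $2^{\Oh(\min\{a,b\}\log(a+b))}\cdot n\log n$.

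The one genuinely delicate point is the exponent $\min\{a,b\}$ instead of $a+b$: this is what forces the asymmetric viewpoint above — either the biased coin $p=a/(a+b)$, or equivalently the ``split into $[k^2]$, then enumerate $a$-subsets'' trick — because the symmetric route through an $(n,a+b)$-universal set only yields a family of size $2^{\Oh(a+b)}\log n$. Everything else (the splitter as a black box, the padding step, and the time accounting) is routine.
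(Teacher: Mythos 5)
You should note first that the paper itself contains no proof of this statement: Lemma~\ref{lem:derand} is imported verbatim from Chitnis et al.~\cite{ChitnisCHPP16} and used as a black box, so the only comparison available is with the standard proof in that source --- and your argument (reduce to $a\leq \min\{a,b\}$ by complementation, take an $(n,a+b,(a+b)^2)$-splitter family of Naor--Schulman--Srinivasan of size $(a+b)^{\Oh(1)}\log n$, and enumerate the $2^{\Oh(a\log(a+b))}$ small subsets of the range, pulling back preimages) is essentially that proof, with the bookkeeping for the family size and running time done correctly.

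One step as written can fail: you pad $A$ to a set $A'$ with $|A'|=a$ and $A'\cap B=\emptyset$, which requires $n\geq a+|B|$; this need not hold (e.g.\ $a=b=n$ with $|A|+|B|=n$ and $|A|<a$). The fix is immediate and does not change any bound: either let $\mathcal{T}$ consist of all subsets of $[k^2]$ of size \emph{at most} $a$ (so $|\mathcal{T}|\leq (a+1)\binom{k^2}{a}=2^{\Oh(a\log(a+b))}$ and $T=f(A)$ itself is admissible), or do the padding inside the range $[k^2]$ rather than inside $U$, where room is always available since $k^2\geq a+|f(B)|$. With that repair (and the trivial handling of $\min\{a,b\}=0$), your proof is complete and matches the argument behind the cited lemma.
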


We also use the following results of Fomin and Villanger~\cite{FominV12}.

\begin{lemma}[\cite{FominV12}]\label{lem:enum-sets}
Let $G$ be a graph. For every $v\in V(G)$ and nonnegative integers $b$ and $f$, the number of connected subsets $B\subseteq V(G)$ such that 
\begin{itemize}
\item[(i)] $v\in B$,
\item[(ii)] $|B|=b+1$, and
\item[(iii)] $|N_G(B)|=f$
\end{itemize}
is at most $\binom{b+f}{b}$. Moreover, all such sets $B$ could be enumerated in time
$\Oh(\binom{b+f}{f}\cdot b(b+f)\cdot n)$.
\end{lemma}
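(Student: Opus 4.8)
The plan is to prove Lemma~\ref{lem:enum-sets} by a \emph{deterministic branching procedure} equipped with a fixed pivoting rule, and to read off both the combinatorial bound $\binom{b+f}{b}$ and the enumeration time from the shape of the resulting branching tree; this is essentially the Fomin--Villanger argument. The procedure maintains a pair $(B',X')$, where $B'$ is a connected set with $v\in B'$ and $X'\subseteq N_G(B')$ is a set of ``rejected'' vertices, starting from $B'=\{v\}$ and $X'=\emptyset$. While $|B'|\le b$ and $N_G(B')\setminus X'\neq\emptyset$, let the pivot $u$ be the vertex of $N_G(B')\setminus X'$ of smallest index (any fixed tie-break works) and branch into an \emph{include} step (replace $B'$ by $B'\cup\{u\}$) and an \emph{exclude} step (replace $X'$ by $X'\cup\{u\}$). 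Prune a branch as soon as it would use more than $b$ include steps, more than $f$ exclude steps, or $N_G(B')\setminus X'=\emptyset$ with $|B'|<b+1$. At a leaf with $|B'|=b+1$ we test $|N_G(B')|=f$ and, if it holds, output $B'$.

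For correctness and the bound, fix a valid set $B$ (connected, $v\in B$, $|B|=b+1$, $|N_G(B)|=f$) and consider the run that, at each pivot $u$, takes the include branch iff $u\in B$ and the exclude branch otherwise. One checks the invariant $B'\subseteq B$, so $B'$ stays connected; consequently the run never dead-ends early, since whenever $B'\subsetneq B$ the connectedness of $B$ yields a vertex of $N_G(B')\cap B$, which cannot be rejected (rejected vertices lie outside $B$), and the run therefore reaches $|B'|=b+1$, hence $B'=B$, after \emph{exactly} $b$ include steps. Moreover every vertex rejected along this path lies in $N_G(B)$ and these vertices are pairwise distinct, so the path uses \emph{at most} $f$ exclude steps. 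Hence distinct valid sets map to distinct leaves whose root-to-leaf word over $\{\text{include},\text{exclude}\}$ has exactly $b$ ``include'' letters, the last letter being ``include'', and at most $f$ ``exclude'' letters; the number of such words is $\sum_{j=0}^{f}\binom{b-1+j}{j}=\binom{b+f}{b}$ by the hockey-stick identity, proving the count.

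For the running time, the fixed pivot rule makes every node of the branching tree reachable by a unique $\{\text{include},\text{exclude}\}$-word with at most $b$ ``include'' and at most $f$ ``exclude'' letters, so the tree has at most $\sum_{i=0}^{b}\sum_{j=0}^{f}\binom{i+j}{i}=\binom{b+f+2}{f+1}-1\le (b+f+2)\binom{b+f}{f}$ nodes; at each node we recompute $N_G(B')$ and the pivot (and test the output condition at leaves) in time $\Oh(bn)$, and each valid $B$ is printed exactly once, which gives the stated bound $\Oh\big(\binom{b+f}{f}\cdot b(b+f)\cdot n\big)$ up to the bookkeeping.

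I expect the only genuine obstacle to be the bookkeeping around the pivot rule: verifying that it induces an injection from valid sets to leaves, that the three pruning conditions are exactly the right ones so that the tree stays within $\Oh\big((b+f)\binom{b+f}{f}\big)$ nodes (in particular that no ``premature dead-end'' leaf is ever created on a path that corresponds to a valid $B$), and that the per-node work really is $\Oh(bn)$ with the adjacency representation at hand. The combinatorial identities and the correctness invariant $B'\subseteq B$ are routine once the branching is set up carefully.
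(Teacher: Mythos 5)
The paper does not prove this lemma at all: it is imported verbatim from Fomin and Villanger~\cite{FominV12}, so there is no in-paper argument to compare against. Your include/exclude branching on a pivot vertex of $N_G(B')\setminus X'$, with the invariant $B'\subseteq B$ along the canonical run, the observation that every excluded pivot on that run lies in $N_G(B)$ (hence at most $f$ exclusions), and the hockey-stick count $\sum_{j=0}^{f}\binom{b-1+j}{j}=\binom{b+f}{b}$, is correct and is essentially the standard Fomin--Villanger argument, and your node-count and per-node work estimates do yield the stated enumeration time.
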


We conclude this section with a combinatorial observation.

Let $G$ be a graph, $u\in V(G)$, and let $r$ be a positive integer. We construct the subgraph $B_r(u)$ using a modified breadth-first search algorithm. Recall that in the standard breadth-first search algorithm (see, e.g.,~\cite{CormenLRS09}) starting from $u$, we first label $u$ by $\ell(u)=0$ and put $u$ into a queue $Q$. Then we iterate as follows: if $Q$ is nonempty, then take the first vertex $x$ in the queue and for every nonlabeled neighbor $y$, assign $\ell(y)=\ell(x)+1$ and put $y$ into $Q$. We start in the same way by assigning $u$ the label $\ell(u)=0$ and putting $u$ into $Q$. Then while $Q$ is nonempty and the first element $x$ has the label $\ell(x)\leq r-1$, we consider  arbitrary chosen $\min\{r,d_G(x)\}$ vertices $y\in N_G(x)$, assign to unlabeled vertices $y$ the label $\ell(y)=\ell(x)+1$ and put them into $Q$. The graph $B_r(u)$ is the subgraph of $G$ induced by the labeled vertices $v$ with $\ell(v)\leq r$. We say that $B_r(x)$ is an \emph{$r$-restricted BFS subgraph} of $G$. 
Note that such a subgraph is not unique. We need the following properties of $B_r(u)$.

\begin{lemma}\label{lem:bfs}
Let $G$ be a weighted graph, $u\in V(G)$, and let $r$ be a positive integer. 
Then for an $r$-restricted BFS subgraph $B_r(u)$ of $G$, it holds the following.
\begin{itemize} 
\item[(i)] If $G$ is weighted $k$-connected and $|V(G)|\geq r-k+1$, then for every connected set $X\subseteq V(G)$ such that $u\in X$ and $|X|\leq r-k$, ${\sf w}(X\cap V(B_r(u)),V(B_r(u))\setminus X)\geq k$.
\item[(ii)] For $r\geq 2$, $|V(B_r(x))|\leq (r^{r+1}-1)/(r-1)=2^{\Oh(r\log r)}$.  
\end{itemize}
\end{lemma}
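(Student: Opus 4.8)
The plan is to prove the two parts by quite different means: part (ii) is a routine layer-counting estimate, while part (i) rests on a short dichotomy on the degrees of the vertices involved.

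For part (ii), I would count $V(B_r(u))$ by BFS-layers. The root $u$ is the unique vertex with $\ell(\cdot)=0$, and every vertex with $\ell(\cdot)=j\ge 1$ received its label by being picked by some processed vertex with $\ell(\cdot)=j-1$; since each processed vertex picks at most $r$ neighbours, the number $n_j$ of vertices with $\ell(\cdot)=j$ satisfies $n_0=1$ and $n_j\le r\,n_{j-1}$, so $n_j\le r^{j}$. As $V(B_r(u))$ consists exactly of the vertices of label at most $r$, summing the geometric progression gives $|V(B_r(u))|\le\sum_{j=0}^{r}r^{j}=(r^{r+1}-1)/(r-1)$, which for $r\ge 2$ is at most $2r^{r}=2^{\Oh(r\log r)}$.

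For part (i) I would argue as follows (the case $k=0$ being trivial, assume $k\ge 1$, so $r\ge k+1\ge 2$). Put $T=V(B_r(u))$ and $X'=X\cap T$; then $u\in X'$, and $|X'|\le|X|\le r-k<|V(G)|$ forces $X'\notin\{\emptyset,V(G)\}$ (this is exactly where the hypothesis $|V(G)|\ge r-k+1$ is used). Since $B_r(u)=G[T]$ and $T\setminus X=T\setminus X'$, the quantity to bound equals ${\sf w}(E_G(X',T\setminus X'))$, i.e. the weight of the edges of the $G$-cut around $X'$ that survive inside $B_r(u)$. Now split into two cases. Case A: some vertex $x\in X'$ with $\ell(x)\le r-1$ has $d_G(x)\ge r+1$. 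Then when BFS processed $x$ it picked exactly $r$ of its neighbours, all now in $T$; at most $|X|-1\le r-k-1$ of these lie in $X$, so at least $k+1$ lie in $T\setminus X'$, yielding $k+1$ distinct cut-edges and thus ${\sf w}(E_G(X',T\setminus X'))\ge k+1$. Case B: every vertex of $X'$ with $\ell(\cdot)\le r-1$ has degree at most $r$. Here I would first show every $x\in X'$ has $\ell(x)<r$: fix a path $u=q_0,q_1,\dots,q_d=x$ inside the connected set $X$ with $d\le|X|-1\le r-k-1$, and prove by induction that $q_i\in T$ with $\ell(q_i)\le i$, the step using that $q_i\in X'$ is processed (label $\le i\le r-1$) and has degree at most $r$, hence BFS picked all its neighbours, in particular $q_{i+1}$. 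Consequently $\ell(x)\le d<r$, so $x$ is processed with degree at most $r$, giving $N_G(x)\subseteq T$; thus $N_G(X')\subseteq T$, every edge of $E_G(X',V(G)\setminus X')$ has both ends in $T$, and since $G$ is weight $k$-connected with $X'\notin\{\emptyset,V(G)\}$ this cut has weight at least $k$. In either case the claim follows.

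The step I expect to be the main obstacle — and the reason the statement is phrased with $X\cap V(B_r(u))$ rather than $X$ — is that the restricted BFS need not contain all of $X$: a high-degree vertex of $X$ may cause the search to miss some of its neighbours inside $X$, so one cannot simply claim that $B_r(u)$ captures $X$ together with its boundary. The dichotomy above is exactly what circumvents this. Either such a high-degree processed vertex of $X'$ exists, and then it is forced to "spill" at least $k+1$ cut-edges into $B_r(u)$ precisely because $X$ is so small that few of its $r$ chosen neighbours can stay in $X$; or no such vertex exists, and then a short path inside $X$ lets the induction show that $X\cap T$ together with its entire $G$-neighbourhood already lies in $B_r(u)$, so that weight $k$-connectivity can be invoked verbatim. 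The bound $|X|\le r-k$ is used in both branches (to leave room for the $k+1$ extra neighbours, and to keep the path inside $X$ short enough that all of $q_0,\dots,q_d$ get processed).
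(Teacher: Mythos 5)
Your proof is correct and takes essentially the same route as the paper's: a processed vertex of $X$ whose degree exceeds $r$ forces at least $k$ cut-edges inside $B_r(u)$ because $X$ is small, while otherwise $N_G[X]\subseteq V(B_r(u))$, so the cut inside $B_r(u)$ coincides with the cut $E_G(X,V(G)\setminus X)$ and weight $k$-connectivity applies; part (ii) is the same layer count. Your reorganization of the case analysis (dichotomy on degrees together with the path induction, instead of the paper's dichotomy on whether some vertex of $X\cap V(B_r(u))$ has an unlabelled neighbour, with its label-jump argument when $\ell(v)=r$) is only a cosmetic variant of the same argument, if anything a slightly cleaner writeup.
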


\begin{proof}
To show (i), assume that $G$ is weight $k$-connected and $|V(G)|\geq r-k+1$. Let also $X\subseteq V(G)$ be a connected set such that $u\in X$ and $|X|\leq r-k$. Assume that the vertices of $B_r(u)$ have the labels $\ell(v)\leq r$ assigned by our modified BFS algorithm. 

Because $u\in X$, we have that $u\in X\cap V(B_r(u))\neq\emptyset$. Suppose that $X\cap V(B_r(u))$ contains a vertex $v$ such that $N_G(v)\setminus V(B_r(u))\neq\emptyset$. Let $v$ be such a vertex with the minimum value of $\ell(v)$. If $\ell(v)\leq r-1$, we have that $d_G(v)>r$ as otherwise all the vertices of $N_G(v)$ would be labeled. Therefore, $v$ has at least $r$ neighbors in $B_r(u)$. Since $|X|\leq r-k$,  at most $r-k$ of the neighbors of $v$ are in $X$. Hence, there are at least $k$ neighbors of $v$ in $B_r(u)$ that are not included in $X$. We obtain that 
 ${\sf w}(X\cap V(B_r(u)),V(B_r(u))\setminus X)\geq {\sf w}(v,N_{B_r(u)}(v)\setminus X)\geq |E(v,N_{B_r(u)}(v)\setminus X)| \geq k$.
Suppose that $\ell(v)=r$. Recall that $X$ is connected, $u\in X$ and $|X|\leq r-k$. Hence, there is a $(u,v)$-path $P$ of length at most $r-1$. Since $v\in X\cap V(B_r(u))$ is a vertex with the minimum value of $\ell(v)$ that has an unlabeled neighbor, there are adjacent $x,y\in V(P)$ such that $x,y\in X\cap V(B_r(u))$ and $\ell(x)\leq \ell(y)-2$. Note that $y$ was not labeled when the algorithm constructing $B_r(u)$ considered $x$. This implies that $d_G(x)>r$. Then by the same arguments as above, we obtain that 
 ${\sf w}(X\cap V(B_r(u)),V(B_r(u))\setminus X)\geq {\sf w}(x,N_{B_r(u)}(x)\setminus X)\geq |E(x,N_{B_r(u)}(x)\setminus X)| \geq k$.

Suppose now that $N_G[X]\subseteq V(B_r(u))$. Then $E(X,V(G)\setminus X)=E(X\cap V(B_r(u)),V(B_r(u))\setminus X)$. Because $G$ is weight $k$-connected, we have that
 ${\sf w}(X\cap V(B_r(u)),V(B_r(u))\setminus X)={\sf w}(X,V(G)\setminus X)\geq k$.

The second claim of the lemma follows immediately from the definition of $B_r(u)$.
\end{proof}

\subsection{High connectivity phase}\label{sec:high}
In this section we construct an algorithm for \probBAWCGC for connected $(q,2k)$-unbreakable graphs. As the first step, we solve  \probAWCGC.

\begin{lemma}\label{lem:annot}
\probAWCGC can be solved and a solution can be found (if exists) in time $2^{\Oh(q(q+k)\log(q+k))}\cdot n^{\Oh(1)}$ for connected $(q,2k)$-unbreakable graphs. 
\end{lemma}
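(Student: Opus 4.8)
The plan is to solve \probAWCGC on a connected $(q,2k)$-unbreakable graph $G$ by a combination of random separation (derandomized via Lemma~\ref{lem:derand}) and enumeration of small connected pieces (via Lemma~\ref{lem:enum-sets}), exploiting unbreakability to argue that at most one ``big'' component survives in any solution. First I would observe that in any solution $F$ with ${\sf w}(F)\le k$, we certainly have $|F|\le k$, so the set of edges of $F$ touches at most $2k$ vertices; call this set $W$. After removing $F$, the graph $G-F$ has exactly $t$ components $G_1,\dots,G_t$. Since $G$ is $(q,2k)$-unbreakable and ${\sf w}(F)\le k\le 2k$, at most one of these components has more than $q$ vertices; all the others have at most $q$ vertices each, so at most $t-1\le k$ components are ``small'' (with $\le q$ vertices) and at most one is ``big''. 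Each small component $G_i$ is a connected vertex set $B_i$ with $|B_i|\le q$ whose edge-boundary in $G$ has weight at most $k$ (hence at most $k$ boundary edges), and the big component, if present, is $V(G)$ minus the union of the small ones.

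Next I would set up the random-separation coloring. Apply Lemma~\ref{lem:derand} with the ground set $U=V(G)$ and appropriate bounds $a,b=\Oh(q\cdot k)$: the ``interior plus neighborhoods'' of all small components together have at most $(q+k)\cdot k = \Oh(qk)$ vertices, so we obtain a family $\mathcal{S}$ of $2^{\Oh(qk\log(qk))}\cdot\log n$ subsets such that for the correct partition into small components, some $S\in\mathcal{S}$ contains all of them and avoids their neighborhoods. Then for each $S$ in the family, each candidate small component becomes a connected component of $G[S]$, or more precisely sits inside one such component; I would enumerate, for each connected component of $G[S]$ and each of its vertices $v$, all connected subsets $B\ni v$ of size at most $q$ with $|N_G(B)|\le k$ using Lemma~\ref{lem:enum-sets}, giving at most $\binom{q+k}{q}\cdot n$ candidates per vertex, hence $2^{\Oh(q\log(q+k))}\cdot n^{\Oh(1)}$ candidate small components in total. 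For each candidate I record its weighted connectivity $\lambda^{\sf w}(\cdot)$ (polynomial time by max-flow) and the weight of its boundary.

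The combinatorial core is then a bounded-cost assignment: we must pick at most $t-1$ pairwise-disjoint candidate small components whose induced boundary edges, together with whatever edges are needed to give the remaining big component connectivity $\ge\lambda_t$, all lie in $L$ and have total weight $\le k$, and whose connectivities can be matched (respecting sorted order and $\preceq$) to a subsequence of $\Lambda$, with the big leftover component taking the remaining slot. Since at most $k$ small components are chosen and there are polynomially-plus-$2^{\Oh(q\log(q+k))}$ candidates, I would do this by a bounded-depth branching / dynamic program over which connectivity demands $\lambda_i$ are assigned to small components: branch on a size-$\le k$ multiset of demands for the small components (at most $\binom{t}{k}$ but bounded better by noting only distinct values and total budget matter, so $2^{\Oh(k\log k)}$ choices suffice after also deciding which value the big component gets), then greedily match candidates to demands subject to the weight budget and to $F\subseteq L$. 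One must also check $F\subseteq L$: candidate small components whose boundary uses an edge outside $L$ are simply discarded. The main obstacle, and the step needing the most care, is handling the single big component: we cannot enumerate it directly, but we can define it as $V(G)$ minus the chosen small components, and we must verify both that its number of boundary edges to the small ones has acceptable weight and that its weighted connectivity is at least the one remaining demand — the latter requires checking that removing only the already-counted boundary edges does not accidentally disconnect it or lower its connectivity below the target, which is a polynomial-time $\lambda^{\sf w}$ computation but must be threaded correctly through the branching. Accounting: $|\mathcal{S}|\cdot(\text{candidates})\cdot(\text{branching})\cdot n^{\Oh(1)} = 2^{\Oh(qk\log(qk))}\cdot 2^{\Oh(q\log(q+k))}\cdot 2^{\Oh(k\log k)}\cdot n^{\Oh(1)} = 2^{\Oh(q(q+k)\log(q+k))}\cdot n^{\Oh(1)}$, as claimed.
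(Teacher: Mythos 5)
Your plan founders exactly at the step you flag as ``needing the most care''. First, a small inconsistency: you require the good set $S$ to contain every small cluster and to avoid its neighborhood, but two small clusters of the solution may be adjacent in $G$ (edges of $F$ can join two small clusters), in which case a vertex of one lies in the neighborhood of the other and cannot simultaneously be inside and outside $S$; this is repairable by phrasing the requirement for $A=\bigcup(\text{small clusters})$ as a whole. The genuine gap is the big component. After the coloring and the enumeration via Lemma~\ref{lem:enum-sets} you have up to $2^{\Oh(q\log(q+k))}\cdot n^{\Oh(1)}$ candidate small pieces, of which up to $k$ pairwise disjoint ones must be selected, and the remaining demand must be met by the complement of the selection. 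That connectivity constraint couples the whole selection: $\lambda^{\sf w}$ of the leftover is a global min-cut quantity that cannot be certified per candidate, is not monotone under adding or removing candidates, and so cannot be enforced ``greedily'' or threaded through a knapsack-style branching that only tracks the budget and the multiset of demands. Checking it for every selection is a $n^{\Theta(k)}$ enumeration (the candidate list has a factor $n$), which destroys fixed-parameter tractability. Nothing in your proposal supplies an argument that the leftover is automatically $\lambda_i$-connected, and that is the crux of the lemma.

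This is where the paper's proof differs in an essential way. It colors in the opposite direction: $A$ is the small side, but $B$ is not the small side's neighborhood --- it is the union of $(q+\lambda_i)$-restricted BFS balls around the attachment vertices of the \emph{big} component (this needs $\lambda_i\leq k$ to bound $|B|$ by $2^{\Oh((q+k)\log(q+k))}$; the case $\lambda_i>k$ is treated separately via the $(k{+}1)$-connectivity classes). A sequence of reduction rules then grows the good set $S$, and the point of the BFS-ball property (Lemma~\ref{lem:bfs}) together with $(q,2k)$-unbreakability is the paper's Claim~C: after the rules, \emph{every} admissible way of assigning the components of $G-S$ to the small side leaves a weight-$\lambda_i$-connected big component. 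This decouples the big component's connectivity from the selection, which is precisely what makes the final dynamic programming over the components of $G-S$ (tracking only budget and sub-tuples of $\Lambda$) correct. Without an analogue of this decoupling step, your argument does not go through as written.
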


\begin{proof}
Let $(G,w,L,\Lambda,k)$ be an instance of \probAWCGC where $G$ is a connected $(q,2k)$-unbreakable graph. Let also $\Lambda=\langle \lambda_1,\ldots,\lambda_t\rangle$, $\lambda_1\leq \ldots\leq\lambda_t$.

Clearly, the problem is easy if $t=1$ as it sufficient to check whether $\lambda^{\sf w}(G)\geq \lambda_1$ in polynomial time using, e.g., the algorithm of Stoer and Wagner~\cite{StoerW97}.
Also the problem is trivial if $t>k+1$: 
 $(G,w,L,\Lambda,k)$ is a no-instance, because the connected graph $G$ can be separated into  at most $k+1$ components by at most $k$ edge deletions.  Hence, from now we assume that $2\leq t\leq k+1$. 
 
If $|V(G)|\leq 3q$, we solve \probAWCGC using brute force. We consider all the 
possibilities to select a set of edges $F\subseteq L$ of the total weight at most $k$, and for each choice, we check whether $G-F$ has  $t$ components $G_1,\ldots,G_t$ with $\lambda^{\sf w}(G_i)\geq \lambda_i$ for $i\in\{1,\ldots,t\}$. Note that we can check in polynomial time whether the components of $G-F$ satisfy the connectivity constraints. We compute the weighted connectivities of the components using the minimum cut algorithm of Stoer and Wagner~\cite{StoerW97}. Then we sort the components by their weighted connectivities and assume that $\lambda^{\sf w}(G_1)\leq\ldots\leq\lambda^{\sf w}(G_t)$. It remains to check whether $\lambda^{\sf w}(G_i)\geq \lambda_i$ for $i\in\{1,\ldots,t\}$.
The total running time in this case is $2^{\Oh(k\log q)}$. From now we assume that $|V(G)|>3q$.

Suppose that $(G,w,L,\Lambda,k)$ is a yes-instance of \probAWCGC and let $F\subseteq L$ be a solution. Let $G_1,\ldots,G_t$ be the components of $G-F$ and  $\lambda^{\sf w}(G_i)\geq \lambda_i$ for $i\in\{1,\ldots,t\}$. 

Notice that there is a component with at least $q+1$ vertices. Otherwise, we can partition the components into two families $\{G_i\mid i\in I\}$ and $\{G_i\mid i\in \{1,\ldots,t\}\setminus I\}$ for some $I\subset \{1,\ldots,t\}$ in such a way that $|\sum_{i\in I}|V(G_i)|-\sum_{i\in\{1,\ldots,t\}\setminus I}|V(G_i)||\leq q$. Then for $A=\bigcup_{i\in I}V(G_i)$ and $B=\bigcup_{i\in\{1,\ldots,t\}\setminus I}V(G_i)$, we have that $|A|>q$, $|B|>q$ and $E(A,B)\subseteq F$. Since ${\sf w}(F)\leq k$, we obtain that ${\sf w}(A,B)\leq k$ but this contradicts the condition that $G$ is $(q,2k)$-unbreakable. Hence, we can assume that $|V(G_i)|\geq q+1$ for some $i\in\{1,\ldots,t\}$. Observe that the total number of vertices in the other components is at most $q$. Otherwise, 
for    $A=V(G_i)$ and $B=\bigcup_{j\in\{1,\ldots,t\}\setminus \{i\}}V(G_j)$, we have that $|A|>q$, $|B|>q$ and $E(A,B)\subseteq F$, and this contradicts the unbreakability of $G$. 
We say that $G_i$ is a \emph{big} components and call the other components \emph{small}.

For each $i\in\{1,\ldots,t\}$, we verify whether there is a solution $F$ where $\lambda_i$ is the connectivity constraint for the big component of $G-F$. 

Assume that $\lambda_i>k$.

Let $\alpha$ be a positive integer. We say that two vertices $u,v\in V(G)$ are \emph{$\alpha$-equivalent} if $\lambda^{\sf w}(u,v)\geq \alpha$.  It is straightforward to verify that this is an equivalence relation on $V(G)$. We call the classes equivalence of $V(G)$ the \emph{$\alpha$-classes}. 
Let $X_1,\ldots,X_s$ be the $(k+1)$-classes. We claim that if  $\lambda^{\sf w}(G_j)\geq k+1$ for a component $G_i$ of $G-F$, then $G_i=G[X_h]$ for $h\in\{1,\ldots,s\}$. To see this, notice that for each $h\in\{1,\ldots,s\}$, $X_h\subseteq V(G_j)$ for $j\in\{1,\ldots,t\}$, because any two vertices of $X_h$ are in the same component of $G-F$ since ${\sf w}(F)\leq k$. From the other side, if two vertices $u,v$ are in distinct $(k+1)$-classes, then it cannot happen that both $u,v\in V(G_i)$ because $G_i$ is weight $(k+1)$-connected.   

We use these observations to check whether there is a solution $F$ for $(G,w,L,\Lambda,k)$ such that for the big component $G_i$ of $G-F$, it holds that $\lambda^{\sf w}(G_i)\geq \lambda_i>k$.

We find the $(k+1)$-classes $X_1,\ldots,X_s$ in polynomial time using the flow algorithms~\cite{Dinitz70,Dinitz06}. Then we find $j\in\{1,\ldots,s\}$ such that $|X_j|\geq q+1$ and $\lambda^{\sf w}(G[X_j])\geq \lambda_i$. If such a set $X_j$ does not exist, we conclude that $(G,w,L,\Lambda,k)$ is a no-instance and stop, because  there is no big component $G_i$ with $\lambda^{\sf w}(G_i)\geq \lambda_i$.
We also return {\sf NO} and stop  if $|V(G)|-|X_j|>q$, because  the total number of vertices in small components for any solution $F$ is at most $q$. 
If $G_i=G[X_j]$ is the big component for the solution $F$, then $E(X_j,V(G)\setminus X_j)\subseteq L$ and ${\sf w}(X_j,V(G)\setminus X_j)\leq k$. We verify these conditions and return {\sf NO} and stop if one of them is violated. Otherwise, we use brute force and consider all possible choices a set of edges $F'\subseteq L$ of $G'=G-X_j$ with ${\sf w}(F')\leq k-{\sf w}(X_j,V(G)\setminus X_j)$, and for each selection, we check whether  $F=E(X_j,V(G)\setminus X_j)\cup F'$ is a solution for $(G,w,L,\Lambda,k)$. Since $|V(G')|\leq q$, this could be done in time $2^{\Oh(k\log q)}\cdot n^{\Oh(1)}$.

From now we assume that $\lambda_i\leq k$. 

Assume again that $(G,w,L,\Lambda,k)$ is a yes-instance of \probAWCGC, $F\subseteq L$ is a solution, and $G_1,\ldots,G_t$ are the components of $G-F$ where $G_i$ is the big component.
Let $A=\bigcup_{j\in\{1,\ldots,t\}\setminus \{i\}}V(G_j)$. Recall that $|A|\leq q$. Let also $X\subseteq V(G_i)$ be the set of vertices of $G_i$ that have neighbors in $A$. Note that $|X|\leq k$. For each $u\in X$, we consider 
a $(q+\lambda_i)$-restricted BFS subgraph  $B(u)=B_{q+\lambda_i}(u)$ of $G_i$.  Let $B=\bigcup_{u\in X}V(B(u))$. By Lemma~\ref{lem:bfs}, we have that $|V(B(u))|=2^{\Oh((q+k)\log(q+k))}$ since  $\lambda_i\leq k$. Hence, $|B|=2^{\Oh((q+k)\log(q+k))}$. Note also that $|B|\geq q+1$. We say that a set $S\subseteq V(G)$ is \emph{$(A,B)$-good} or, simply, \emph{good} if $B\subseteq S$ and $A\cap S=\emptyset$.
By Lemma~\ref{lem:derand}, we can construct in time $2^{\Oh(q(q+k)\log(q+k))}\cdot n\log n$ a family $\mathcal{S}$ of at most  $2^{\Oh(q(q+k)\log(q+k))}\cdot \log n$ subsets of $V(G)$ such that if $(G,w,L,\Lambda,k)$ is a yes-instance and $A$ and $B$ exist for some solution, then $\mathcal{S}$ contains an $(A,B)$-good set.

We construct such a family $\mathcal{S}$, and for each $S\in \mathcal{S}$, we look for  $F\subseteq L$ such that the following holds:
\begin{itemize}
\item[(i)] ${\sf w}(F)\leq k$,
\item[(ii)] $G-F$ has $t$ components  $G_1,\ldots,G_t$ such that each $G_j$ is weight $\lambda_j$-connected and $|V(G_i)|>q$, and
\item[(iii)] $S\subseteq V(G_i)$.
\end{itemize}
We describe the algorithm that produces the {\sf YES} answer if $S$ is good and, moreover, if the algorithm outputs {\sf YES}, then  $(G,w,L,\Lambda,k)$ is a yes-instance of \probAWCGC. Note that the algorithm can output the false {\sf NO} answer if $S$ is not a good set. Nevertheless, because for an yes-instance of \probAWCGC, we always have a good set $S\in\mathcal{S}$, we have that  
$(G,w,L,\Lambda,k)$ is a yes-instance if and only if the algorithm outputs {\sf YES} for at least one $S\in\mathcal{S}$. 

The algorithm uses the following property of $(A,B)$-good sets.

\medskip
\noindent
{\bf Claim~A.} {\it
If $S$ is an $(A,B)$-good set, then for each component $H$ of $G-S$, either $V(H)\subseteq V(G_i)$ or $V(H)\cap V(G_i)=\emptyset$.
}

\begin{subproof}[Proof of Claim~A]
To show it, assume for the sake of contradiction that $H$ contains both vertices of the big component $G_i$ and some small component. Then there are two adjacent vertices $x,y\in V(H)$ such that $x\in V(G_i)$ and $y\in V(G_j)$ for a small component $G_j$. Then $x\in V(B(x))\subseteq B$ contradicting that $B\subseteq S$.   
\end{subproof}

We apply a number of reduction rules that either increase the set $S$ or conclude that $S$ is not good and stop. For each rule, we show that it is \emph{safe} in the sense that if we increase $S$, then if the original $S$ was good, then the obtained set is good as well, and if we conclude that the original $S$ is not good, then this is a correct conclusion and, therefore, we can return {\sf NO} and stop. We underline that whenever we return {\sf NO} in the rules, this means that we discard the current choice of $S$.

\begin{reduction}\label{red:small-s}
If $|S|\leq q$, then return {\sf NO} and stop.
\end{reduction}

To see that the rule is safe, it is sufficient to note that if $S$ is $(A,B)$-good, then $|S|\geq |B|\geq q+1$.

Denote by $H_1,\ldots,H_s$ the components of $G-S$. By Claim~A, either $V(H_j)\subseteq G_i$ or $V(H_j)\cap G_i=\emptyset$ for $j\in\{1,\ldots,s\}$ if $S$ is good. 

\begin{reduction}\label{red:sort}
For every $j\in\{1,\ldots,s\}$, if  $E(V(H_j),S)\setminus L\neq \emptyset$ or ${\sf w}(V(H_j),S)\geq k+1$ or $|V(H_j)|>q$, then set $S=S\cup V(H_j)$.
\end{reduction}

To show safeness, assume that $S$ is $(A,B)$-good. Observe that if $E(V(H_j),S)\setminus L\neq \emptyset$, then any $F\subseteq L$ does not separate $S$ and $V(H_j)$. 
Similarly, if ${\sf w}(V(H_j),S)\geq k+1$, then a set $F\subseteq L$ with ${\sf w}(F)\leq k$ cannot separate $S$ and $V(H_j)$. By Claim~A, we conclude that  $V(H_j)\subseteq G_i$. 
If $|V(H_j)|>q$, then $V(H_j)\subseteq V(G_i)$ by Claim~A because the total number of vertices in the small components is at most $q$. 
Since each $H_j$ has no vertex adjacent to a vertex of any small component $G_h$, we obtain that if 
$S=S\cup V(H_j)$ is $(A,B)$-good.

To simplify notations, we assume that $H_1,\ldots,H_s$ are the components of $G-S$ for the (possibly) modified by Reduction Rule~\ref{red:sort} set $S$. 

\begin{reduction}\label{red:conn}
If there is $u\in S$ adjacent to a vertex of $H_j$ for some $j\in\{1,\ldots,s\}$ and there is a  connected set $Z\subseteq S$ such that a) $u\in Z$, b) $|Z|\leq q$, c) ${\sf w}(Z,S\setminus Z)\leq \lambda_i-1$, then set $S=S\cup V(H_j)$.  
\end{reduction}

Observe that by Lemma~\ref{lem:enum-sets}, for each $u\in S$, we can list all the sets $Z$ satisfying a)--c) in time $2^{\Oh(k\log(q+k))}\cdot n$ because $\lambda_i\leq k$.  

To prove that the rule is safe, assume that $S$ is an $(A,B)$-good, and there is $u\in S$ adjacent to a vertex of $H_j$ for some $j\in\{1,\ldots,s\}$ and there is a  connected set $Z\subseteq S$ such that a)--c) hold. By Claim~A, either $V(H_j)\subseteq V(G_i)$ or $V(H_j)\cap G_i=\emptyset$. Suppose that $V(H_j)\cap G_i=\emptyset$. Since $S$ is good, we have that $V(B(u))\subseteq S$. But then by Lemma~\ref{lem:bfs}, ${\sf w}(Z\cap V(B(u)),V(B(u))\setminus Z)\geq \lambda_i$ for any connected set satisfying a) and b); a contradiction. Hence, $V(H_j)\subseteq V(G_i)$. 
Because $H_j$ has no vertex adjacent to a vertex of any small component $G_h$, we obtain that  $S=S\cup V(H_j)$ is $(A,B)$-good.

We apply Reduction Rule~\ref{red:conn} exhaustively recomputing the components of $G-S$ after each modification of $S$. 

\begin{reduction}\label{red:stop}
If there is  a  connected set $Z\subseteq S$ such that $|Z|\leq q$ and ${\sf w}(Z,V(G)\setminus Z)\leq \lambda_i-1$, then set return {\sf NO} and stop.  
\end{reduction}

To see that the rule is safe, observe that because Reduction Rule~\ref{red:conn} cannot be applied further, if there is  a  connected set $Z\subseteq S$ such that $|Z|\leq q$, then 
$N_G(Z)\subseteq S$. Then
it is sufficient to observe that if $S$ is $(A,B)$-good, then $S\subseteq V(G_i)$ and since $G_i$ is weight $\lambda_i$-connected, ${\sf w}(Z,V(G)\setminus Z)\geq \lambda_i$ for every $Z\subseteq S$. Observe also that the connected sets $Z\subseteq S$ such that $|Z|\leq q$ and ${\sf w}(Z,V(G)\setminus Z)\leq \lambda_i-1$
 can be enumerated in time $2^{\Oh(k\log(q+k))}\cdot n^2$ by Lemma~\ref{lem:enum-sets}.

Assume that we do not stop while executing Reduction Rule~\ref{red:stop}. Let $S$ be the set constructed from the original set $S$ by the rules that were applied so far. 

\medskip
\noindent
{\bf Claim~B.} {\it
If $S$ is an $(A,B)$-good set, then if for a component $H$ of $G-S$, there is $v\in V(H)$ such that $\lambda^{\sf w}(v,S)<\lambda_i$, then $V(H)\cap V(G_i)=\emptyset$.
}

\begin{subproof}[Proof of Claim~B]
By Claim~A, we have that either $V(H)\subseteq V(G_i)$ or $V(H)\cap V(G_i)=\emptyset$. Since the inclusion $V(H)\subseteq V(G_i)$ would contradict the weight $\lambda_i$-connectedness of $G_i$, we conclude that $V(H)\cap V(G_i)=\emptyset$.
\end{subproof}

Let  $H_1,\ldots,H_s$ be the components of $G-S$. We set 
$$I=\{j\in\{1,\ldots,s\}\mid \text{there is }v\in V(H_j)\text{ such that }{\sf w}(v,S)<\lambda_i\}.$$

\begin{reduction}\label{red:stop-two}
If $|\bigcup_{j\in I}V(H_j)|>q$ or ${\sf w}(\bigcup_{j\in I}V(H_j),S)>k$, then return {\sf NO} and stop.  
\end{reduction}

To see that the rule is safe, observe that if $S$ is $(A,B)$-good, then by Claim~B, the vertices of $\bigcup_{j\in I}V(H_j)$ should be in small components. Since the number of vertices in the small components cannot exceed $q$ and ${\sf w}(\bigcup_{j\in I}V(H_j),S)\leq k$, we can return {\sf NO} and stop if at least one of these condition is not fulfilled.

\medskip
\noindent
{\bf Claim~C.} {\it
For any $J\subseteq\{1,\ldots,s\}$ such that $I\subseteq J$ and ${\sf w}(\bigcup_{j\in J}V(H_j),S)\leq k$, the graph $G'=G-\bigcup_{j\in J}V(H_j)$ is weight $\lambda_i$-connected.
}

\begin{subproof}[Proof of Claim~C]
To obtain a contradiction, assume that $\lambda^{\sf w}(G')<\lambda_i$. Then there is a partition $(X,Y)$ of $V(G')$ such that ${\sf w}(X,Y)<\lambda_i$. Since $S\subseteq V(G')$, we assume without loss of generality that $X\cap S\neq\emptyset$.

Suppose that $Y\cap S=\emptyset$. Then $S\subseteq X$. This means that there is $v\in Y$ such that $\lambda^{\sf w}(v,S)\leq\lambda_i-1$ and $v\in V(H_j)$ for some $j\in\{1,\ldots,k\}\setminus I$. But by the definition of the set of indices $I$, we should have $j\in I$; a  contradiction. Therefore, $Y\cap S\neq\emptyset$. We assume without loss of generality that $|X\cap S|\leq|Y\cap S|$. 

Suppose that $|X\cap S|\leq q$. If there is a vertex $v\in V(H_j)$ for some $j\in\{1,\ldots,s\}$ such that $v\in N_G(X\cap S)$, then we could apply Reduction Rule~\ref{red:conn} for $Z\subseteq S$ that is the set of vertices of the component of $G[X\cap S]$ containing a neighbor of $v$, and we would include $V(H_j)$ into $S$. Hence, $N_G[X\cap S]\subseteq S$. But then we could apply Reduction Rule~\ref{red:stop} for the set of vertices $Z$ of one of the components $G[X\cap S]$, and we would stop; a contradiction. This means that  $q<|X\cap S|\leq |Y\cap S|$.  

Let $R=E(S,\bigcup_{j\in J}V(H_j))\cup E(X,Y)$. Since ${\sf w}(\bigcup_{j\in J}V(H_j),S)\leq k$ and ${\sf w}(X,Y)<\lambda_i$, we have that ${\sf w}(R)\leq 2k$. Observe that $R$ separates $X\cap S$ and $Y\cap S$ in $G$ but this contradicts the $(q,2k)$-unbreakability of $G$. We conclude that there is no partition $(X,Y)$ of $V(G')$ such that ${\sf w}(X,Y)<\lambda_i$ and, therefore, $\lambda^{\sf w}(G')\geq \lambda_i$.
\end{subproof}

By applying Reduction Rules~\ref{red:small-s}--\ref{red:stop-two}, we either increase $S$ or stop. Now we have to find an $F\subseteq L$ such that (i)--(iii) are fulfilled and, by applying Claims~A and B, we impose two additional constraints:
\begin{itemize}
\item[(iv)] for every $j\in \{1,\ldots,s\}\setminus I$, either $V(H_j)\subseteq V(G_i)$ or $V(H_j)\cap V(G_i)=\emptyset$, 
\item[(v)] for every $j\in I$,  $V(H_j)\cap V(G_i)=\emptyset$.
\end{itemize}
Note that by Claim~C, we automatically obtain that $\lambda^{\sf w}(G_i)\geq\lambda_i$ if (i), (iii)-(v) are fulfilled. Also because of Reduction Rule~\ref{red:small-s}, we have that $|V(G_i)|>q$ if (iii) holds.
Hence, we can replace (ii) by the relaxed condition:
\begin{itemize}
\item[(ii)] $G-F$ has $t$ components  $G_1,\ldots,G_t$ such that $G_j$ is weight $\lambda_j$-connected for $j\in\{1,\ldots,t\}$, $j\neq i$.
\end{itemize}
We find $F$, if such a set exists, by  a dynamic programming algorithm. 

Let $Q_0=G[\bigcup_{j\in I}V(H_j)]$. Note that it can happen that $I=\emptyset$ and $Q_0$ is empty in this case. Let $Q_1,\ldots,Q_r$ be the remaining components $H_j$ for $j\in\{1,\ldots,k\}\setminus I$.  

For every $j\in\{0,\ldots,r\}$, we define the function $f_j(\Lambda',\ell)$ with the values in $2^L\cup\{false\}$ where $\Lambda'\subset \Lambda$ excluding $\lambda_i$ and $\ell\leq k$ is a nonnegative integer.
We set 
$$f_0(\emptyset,\ell)=
\begin{cases}
false &\mbox{if }I\neq\emptyset,\\
\emptyset & \mbox{if }I=\emptyset,
\end{cases}
$$
and $f_j(\emptyset,\ell)=\emptyset$ for $j\in\{1,\ldots,r\}$. For nonempty $\lambda'=\langle\lambda_1',\ldots,\lambda_p'\rangle$ and $j\in\{0,\ldots,r\}$, we set $f_j(\Lambda',\ell)=F\subseteq L$ if there is $F\subseteq L$ such that
\begin{itemize}
\item[a)] ${\sf w}(F)\leq \ell$,
\item[b)] $E(S,V(Q_j))\subseteq F$ and $F\setminus E(S,V(Q_j))\subseteq E(Q_j)$,
\item[c)] $Q_j-(F\setminus E(S,V(Q_j)))$ has $p$ components $C_1,\ldots,C_p$ such that $\lambda^{\sf w}(C_h)\geq\lambda_h$ for $h\in\{1,\ldots,p\}$,
\end{itemize}
and we set $f_j(\Lambda',\ell)=false$ otherwise. Note that $f_j(\Lambda',\ell)$ is not uniquely defined as we may  find distinct $F$ satisfying a)--c). Respectively, we pick one of the possible values to define $f_j(\Lambda',\ell)$.

For each nonempty $\Lambda'$ and $\ell$, we compute the value of each $f_j(\Lambda',\ell)$ by brute force by considering subsets of $E(Q_j)$ of weight at most $\ell-{\sf w}(S,V(Q_j))$ using the fact that $|V(Q_j)|\leq q$ because of Reduction Rules~\ref{red:sort} and \ref{red:stop-two}. Clearly, it can be done in time $2^{\Oh(k\log q)}$. Since there are at most $2^t(k+1)$ pairs $(\Lambda',\ell)$ and $t\leq k+1$, the table of all the  values of $f_j(\Lambda',\ell)$ can be constructed in time $2^{\Oh(k\log q)}$.
Since $r\leq n$, all the tables could be constructed in time $2^{\Oh(k\log q)}\cdot n$.

For every $j\in\{0,\ldots,r\}$, we define the function $g_j(\Lambda',\ell)$ with the values in $2^L\cup\{false\}$ where $\Lambda'\subset \Lambda$ excluding $\lambda_i$ and $\ell\leq k$ is a nonnegative integer. We set $g_0(\Lambda',\ell)=f_0(\Lambda',\ell)$ and for $j\in\{1,\ldots,p\}$,
we set $g_j(\Lambda',\ell)=F\subseteq 2^L$ if there are  $\Lambda^1,\Lambda^2$ such  that $\Lambda'=\Lambda^1+\Lambda^2$ and there are nonnegative integers $\ell_1,\ell_2$ such that $\ell_1+\ell_2=\ell$ such that 
\begin{itemize}
\item[d)] $g_{j-1}(\Lambda^1,\ell_1)\neq false$ and $f_{j}(\Lambda^2,\ell_2)\neq false$, and
\item[e)] $F=g_{j-1}(\Lambda^1,\ell_1)\cup f_{j}(\Lambda^2,\ell_2)$,
\end{itemize}
and $g_j(\Lambda',\ell)=false$ otherwise.
As with $f_j(\Lambda',\ell)$, we have that $g_j(\Lambda',\ell)$ is not uniquely defined because d) and e) could hold for distinct  $\Lambda^1,\Lambda^2$ and/or $\ell_1,\ell_2$, and as with 
$f_j(\Lambda',\ell)$, we pick one feasible $F$.

Observe that for each $j\in\{1,\ldots,r\}$, the table of values of $g_j(\Lambda',\ell)$ has at most $2^r(k+1)$ entries, and can be constructed from the table  of value $g_{j-1}(\Lambda',\ell)$ 
in time $2^{\Oh(k\log q)}$ using the already computed table of values of  $f_j(\Lambda',\ell)$. We conclude that all the tables of values of $g_j(\Lambda',\ell)$ can be constructed in time 
$2^{\Oh(k\log q)}\cdot n$.

By the definition of $f_j(\Lambda',\ell)$ and $g_j(\Lambda',\ell)$, we obtain the following straightforward claim.

\medskip
\noindent
{\bf Claim~D.} {\it
For every $\lambda'=\langle\lambda_1',\ldots,\lambda_p'\rangle\subset \Lambda$ excluding $\lambda_i$, nonnegative integer $\ell\leq k$ and $j\in\{0,\ldots,r\}$,
$g_j(\Lambda',\ell)=F\neq false$ if and only if 
\begin{itemize}
\item $F\subseteq L$ and ${\sf w}(F)\leq \ell$,
\item $G-F$ has $p+1$ components $C_1,\ldots,C_p,G'$ such that
\begin{itemize}
\item $S\subseteq V(G')$,
\item $V(Q_0)\cap V(G')=\emptyset$,
\item for each $h\in \{1,\ldots,j\}$, either $V(Q_h)\subseteq V(G')$ or $V(Q_h)\cap V(G')=\emptyset$,
\item for each  $h\in\{j+1,\ldots,r\}$, $V(Q_h)\subseteq V(G')$,
\item for each $h\in\{1,\ldots,p\}$, $\lambda^{\sf w}(C_h)\geq\lambda_h'$.
\end{itemize}
\end{itemize}
}

By Claim~D, we have that   $F\subseteq L$ satisfying (i)--(v) exists if and only if $g_r(\Lambda,k)\neq false$, and if $g_r(\Lambda,k)\neq false$, then
$F=g_r(\Lambda,k)$ is a solution satisfying (i)--(v). Respectively, for each considered $S\in\mathcal{S}$, the algorithm returns $F=g_r(\Lambda,k)$ of $g_r(\Lambda,k)\neq false$ and returns No. 

This completes the description of our algorithm. To show correctness, recall that the algorithm is trivially correct if $t=1$ or $t>k+1$. Let $2\leq t\leq k+1$. If $|V(G)|\leq 3q$, then we use brute force and, again, the algorithm is trivially correct. Let $|V(G)|>3q$. Then we have that for any solution $F$, $G-F$ has a unique big component  and other  components are small. We consider all $i\in\{1,\ldots,t\}$ and for each $i$, we are looking for a solution such that $G_i$ with $\lambda^{\sf w}(G_i)\geq\lambda_i$ is the big component.
Here we consider two cases. First, we consider the case $\lambda_i>k$, and the correctness of the algorithm is already proved. Then we assume that $\lambda_i\leq k$. Hence, we have to show correctness for this case.

Observe first that whenever the algorithm returns a solution $F$, it is a solution for the instance $(G,w,L,\Lambda,k)$  of \probAWCGC and ${\sf w}(G_i)\geq\lambda_i$.  Therefore, we should prove that if  $(G,w,L,\Lambda,k)$ is a yes-instance, then the algorithm returns a solution $F$. Assume that $(G,w,L,\Lambda,k)$ a yes-instance. Then there is a solution $F$.  
We construct a family $\mathcal{S}$ of at most  $2^{\Oh(q(q+k)\log(q+k))}\cdot \log n$ subsets of $V(G)$ such that $\mathcal{S}$ contains a set $S$ that is $(A,B)$-good with respect to $F$.
Since our algorithm considers all $S\in \mathcal{S}$, for one of the choices, a set $S$ that is $(A,B)$-good with respect to $F$ is considered. Then we apply Reduction Rules~\ref{red:small-s}--\ref{red:stop-two}. Analyzing these rules, we proved that if  $S$ is $(A,B)$-good with respect to $F$, then we do not stop while executing these rules and the modified set $S$ obtained from the original $S$ remains  $(A,B)$-good with respect to $F$. Moreover, we have that $F$ satisfies the conditions (i)--(v). We obtain that the dynamic programming algorithms, whose correctness follows from Claim~D, should find a solution. This completes the correctness proof.

 Now we evaluate the running time. Clearly, it is sufficient to consider the case  $2\leq t\leq k+1$. If $|V(G)|\leq 3q$, then we solve the problem in time $2^{\Oh(k\log q)}$. Let  $|V(G)|> 3q$. Then we check $t\leq k+1$ values of $i\in\{1,\ldots,t\}$, and for each $i$, try find a solution such that $G_i$ is the big component. We consider the cases $\lambda_i>k$ and $\lambda_i\leq k$. For $\lambda_i>k$, the algorithm runs in time $2^{\Oh(k\log q)}\cdot n^{\Oh(1)}$. For the case $\lambda_i\leq k$, we construct in time $2^{\Oh(q(q+k)\log(q+k))}\cdot n\log n$ a family $\mathcal{S}$ of at most  $2^{\Oh(q(q+k)\log(q+k))}\cdot \log n$ subsets of $V(G)$. Then the algorithm analyzes each $S\in\mathcal{S}$. We apply Reduction Rules~\ref{red:small-s}--\ref{red:stop-two}. Reduction Rules~\ref{red:small-s}, \ref{red:sort} and \ref{red:stop-two} can be applied in polynomial time. Reduction Rules~\ref{red:conn} and \ref{red:stop} can be applied in time $2^{\Oh(k\log(q+k))}\cdot n^{\Oh(1)}$. Finally, we run the dynamic programming algorithm in time $2^{\Oh(k\log q)}\cdot n^{\Oh(1)}$. We conclude that the total running time is 
$2^{\Oh(q(q+k)\log(q+k))}\cdot n^{\Oh(1)}$.
\end{proof}

Using Lemma~\ref{lem:annot}, we construct the algorithm for \probBAWCGC for connected $(q,2k)$-unbreakable graphs.

\begin{lemma}\label{lem:bord-unbreak}
\probBAWCGC can be solved in time $2^{q^32^{2^{\Oh(k)}}}\cdot n^{\Oh(1)}$ for connected $(q,2k)$-unbreakable graphs. 
\end{lemma}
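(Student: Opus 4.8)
The plan is to solve \probBAWCGC for connected $(q,2k)$-unbreakable graphs by making a bounded number of calls to the \probAWCGC algorithm of Lemma~\ref{lem:annot}. First I would construct the family $\mathcal{H}_{r,2k}$ via Lemma~\ref{lem:size}; because $r\le 4k$, every $(H,\mathbf{y})\in\mathcal{H}_{r,2k}$ has $|V(H)|\le 2^{2^{r-1}}+r=2^{2^{\Oh(k)}}$ vertices, $|\mathcal{H}_{r,2k}|=2^{2^{2^{\Oh(k)}}}$, and the family is built in time $2^{2^{2^{\Oh(k)}}}$. For each $(H,\mathbf{y})$ I would form $G_H:=(G,\mathbf{x})\oplus_b(H,\mathbf{y})$; this graph is connected, since $G$ is connected and $(H,\mathbf{y})$ is properly boundaried (so every component of $H$ contains a boundary vertex, which is identified with a vertex of $G$), and $L\subseteq E(G)\subseteq E(G_H)$, so $(G_H,w,L,\hat{\Lambda},\hat{k})$ is a legitimate instance of \probAWCGC for every $\hat{\Lambda}\subseteq\Lambda$ and every $\hat{k}\le k$.

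The conceptually important step is to observe that gluing the bounded-size gadget $H$ preserves unbreakability up to an additive term: $G_H$ is $(q',2k)$-unbreakable for $q':=q+2^{2^{r-1}}$. Indeed, given a partition $(A,B)$ of $V(G_H)$ with ${\sf w}(A,B)\le 2k$, since $E(G)\subseteq E(G_H)$ we get ${\sf w}_G(A\cap V(G),B\cap V(G))\le{\sf w}(A,B)\le 2k$, so by $(q,2k)$-unbreakability of $G$ one side, say $A\cap V(G)$, has at most $q$ vertices, and then $|A|\le q+|V(H)\setminus\mathbf{y}|\le q+2^{2^{r-1}}=q'$. One may also take the constant in $\Oh(k)$ large enough that $q'\ge k$, and since $2\hat{k}\le 2k$ for every $\hat{k}\le k$, $G_H$ is $(q',2\hat{k})$-unbreakable as well.

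The algorithm then iterates over all $2^{2^{2^{\Oh(k)}}}$ pairs $(H,\mathbf{y})\in\mathcal{H}_{r,2k}$, all at most $2^{t}\le 2^{k+1}$ sub-tuples $\hat{\Lambda}\subseteq\Lambda$ (here $k\ge t-1$ is used), and, for each such pair, the values $\hat{k}=0,1,\dots,k$ in increasing order, applying Lemma~\ref{lem:annot} to $(G_H,w,L,\hat{\Lambda},\hat{k})$ (valid since $G_H$ is connected and $(q',2\hat{k})$-unbreakable) and returning the solution found for the first $\hat{k}$ that gives a yes-instance, or $\emptyset$ if none does; correctness is immediate from the specification of \probBAWCGC and of Lemma~\ref{lem:annot}. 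For the running time, there are at most $2^{2^{2^{\Oh(k)}}}$ calls, each costing $2^{\Oh(q'(q'+k)\log(q'+k))}\cdot n^{\Oh(1)}$; using $q'\ge k$ this is $2^{\Oh({q'}^2\log q')}\cdot n^{\Oh(1)}$, and substituting $q'=q+2^{2^{\Oh(k)}}$ a routine case analysis gives ${q'}^2\log q'=\Oh(q^3\cdot 2^{2^{\Oh(k)}})$, so the total, including the $2^{2^{2^{\Oh(k)}}}$ cost of building $\mathcal{H}_{r,2k}$, is $2^{q^3 2^{2^{\Oh(k)}}}\cdot n^{\Oh(1)}$. There is no real obstacle here beyond the bookkeeping of the unbreakability transfer and the final time bound; the substantive work was already done in Lemma~\ref{lem:annot}.
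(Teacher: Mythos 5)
Your proposal is correct and follows essentially the same route as the paper: enumerate $\mathcal{H}_{r,2k}$ via Lemma~\ref{lem:size}, observe that gluing a gadget of size $2^{2^{\Oh(k)}}$ onto a connected $(q,2k)$-unbreakable graph yields a connected $(q+|V(H)|,2k)$-unbreakable graph, and solve each resulting \probAWCGC instance with Lemma~\ref{lem:annot}, iterating over all $\hat{\Lambda}\subseteq\Lambda$ and $\hat{k}\le k$. Your explicit verification of the unbreakability transfer and of connectedness of the glued graph is just a more detailed write-up of the observation the paper states in one line.
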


\begin{proof}
Let $(G,\mathbf{x},w,L,\Lambda,k)$ be an instance of \probBAWCGC. Recall that $(G,\mathbf{x})$ is an $r$-boundaried connected graph and $r\leq k$. Recall also that $|\Lambda|\leq k+1$.
To solve the problem, we consider every weighted properly $r$-boundaried graph $(H,\mathbf{y})\in \mathcal{H}_{r,2k}$ and every $\hat{\Lambda}=\langle \hat{\lambda}_1,\ldots,\hat{\lambda}_s\rangle\subseteq \Lambda$, and find  the minimum $0\leq \hat{k}\leq k$ such that $((G,\mathbf{x})\oplus_b(H,\mathbf{y}),w,L,\hat{\Lambda},\hat{k})$ is a yes-instance of \probAWCGC and output a solution $F$ for the instance and output $\emptyset$ if $\hat{k}$ does not exist. Recall that each $r$-boundaried graph in $\mathcal{H}_{r,2k}$ has at most $2^{2^{r-1}}+r$  vertices and since $r\leq 4k$, we have that each graph has $2^{2^{\Oh(k)}}$ vertices. Observe that if $G$ is a connected $(q,2k)$-unbreakable graph, then $(G,\mathbf{x})\oplus_b(H,\mathbf{y})$ is $(q+|V(H)|,2k)$-unbreakable. By Lemma~\ref{lem:annot}, we conclude that we can solve \probAWCGC for each instance
$((G,\mathbf{x})\oplus_b(H,\mathbf{y}),w,L,\hat{\Lambda},\hat{k})$ in time $2^{q^32^{2^{\Oh(k)}}}\cdot n^{\Oh(1)}$. 
By Lemma~\ref{lem:size},  $|\mathcal{H}_{r,2k}|=2^{2^{2^{\Oh(r)}}\log k}$ and the family of boundaried graphs $\mathcal{H}_{r,2k}$ can be constructed in time $2^{2^{2^{\Oh(r)}}\log k}$. 
Since $r\leq 2k$, we have that $\mathcal{H}_{r,2k}$ can be constructed in time $2^{2^{2^{\Oh(k)}}}$ and contains  $2^{2^{2^{\Oh(k)}}}$ boundaried graphs. 
For each  $(H,\mathbf{y})\in \mathcal{H}_{r,2k}$, we have at most $2^{k+1}$ $s$-tuples $\hat{\lambda}$ and consider at most $k+1$ values of $\hat{k}$. Therefore, the total running time is 
$2^{q^32^{2^{\Oh(k)}}}\cdot n^{\Oh(1)}$.
\end{proof}

\subsection{Proof of Theorem~\ref{thm:connect}}\label{sec:proof}
We use the results of the previous subsection to prove  Theorem~\ref{thm:connect} for the general case. To do it, we construct an algorithm for \probBAWCGC.

\begin{lemma}\label{lem:connect}
\probBAWCGC can be solved in time $2^{2^{2^{2^{2^{2^{\Oh(k)}}}}}}\cdot n^{\Oh(1)}$.
\end{lemma}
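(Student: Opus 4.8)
The plan is to apply the \emph{recursive understanding} paradigm of Chitnis et al.~\cite{ChitnisCHPP16}, using Lemma~\ref{lem:bord-unbreak} as the base case. Given an instance $(G,\mathbf{x},w,L,\Lambda,k)$ of \probBAWCGC, first I would run the algorithm of Lemma~\ref{lem:unbreak} with $p=2k$ and an appropriately chosen threshold $q$ (a function of $k$ only, to be fixed at the end to make the recursion bottom out correctly; say $q$ on the order of the vertex bound $2^{2^{\Oh(k)}}$ coming from $\mathcal{H}_{r,2k}$). If it reports that $G$ is $(2kq,2k)$-unbreakable, we are done immediately by Lemma~\ref{lem:bord-unbreak}. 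Otherwise it returns a $(q,2k)$-good edge separation $(A,B)$, so both sides are large, the cut between them has small weight, and each side induces a connected graph.

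The recursive step is the heart of the argument. Let the cut edges be $E_G(A,B)$ and let $\mathbf{x}_A$ be the endpoints in $A$ of the edges of $E_G(A,B)$ together with those boundary vertices of $\mathbf{x}$ lying in $A$; define the boundaried graph $(G[A],\mathbf{x}_A)$ after making $\mathbf{x}_A$ independent by subdividing (or otherwise massaging it into a properly boundaried graph), and note $|\mathbf{x}_A|\le r+p\le 4k$ and the new parameter budget $k_A\le k$ still satisfies $k_A\ge |\Lambda|-1$ since $|\Lambda|-1\le k$ and no cluster count has increased. Recurse on this strictly smaller side (smaller because $|B|>q\ge 1$), obtaining, for \emph{every} replacement gadget $(H,\mathbf{y})\in\mathcal{H}_{|\mathbf{x}_A|,2k}$ and every sub-tuple $\hat\Lambda\subseteq\Lambda$, an optimal partial solution inside $A$. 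Now here is where Lemma~\ref{lem:replacement} and Lemma~\ref{lem:cut-essential} enter: since every connectivity demand $\lambda_i$ that can actually be \emph{met with budget $\le k$} is at most $\ldots$ well, it may be arbitrarily large, but any demand exceeding $k$ forces its cluster to receive \emph{no} boundary cut at all, so after cut-reduction with respect to $p=2k$ the behaviour of $G[A]$ as seen through the boundary is faithfully captured by \emph{some} member $H_A\in\mathcal{H}_{|\mathbf{x}_A|,2k}$ obtained by the cut reduction of Lemma~\ref{lem:cut-additional} — its size is bounded by $2^{2^{|\mathbf{x}_A|-1}}+|\mathbf{x}_A|=2^{2^{\Oh(k)}}$, and it still satisfies the defining connectivity property of $\mathcal{H}$ by part (ii) of that lemma. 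Replace $G[A]$ by $H_A$; the resulting graph $G'=(H_A,\mathbf{y})\oplus_b(G[B],\mathbf{x}_B)$ has $|V(G[B])|+2^{2^{\Oh(k)}}$ vertices on the ``$B$-side plus gadget'', and we recurse again on the $B$-side — but crucially $|V(G')|<|V(G)|$ because we removed $|A|>q\ge 2^{2^{\Oh(k)}}$ vertices and added only $2^{2^{\Oh(k)}}$, so with $q$ chosen a bit larger than that replacement bound the size strictly drops. Continuing this way we reduce to an unbreakable instance (or an instance small enough to solve by brute force), solve it via Lemma~\ref{lem:bord-unbreak}, and then \emph{combine}: for each demanded output pair $((H,\mathbf{y}),\hat\Lambda)$ of the original instance, we glue the table entry for the $B$-side-gadgeted graph with the table entry for $G[A]$ by iterating over all ways $\hat\Lambda=\Lambda^1+\Lambda^2$ of distributing the demands and all ways $\hat k=\hat k_1+\hat k_2$ of splitting the budget, exactly as in the dynamic programming of Lemma~\ref{lem:annot}; Lemma~\ref{lem:replacement} guarantees that reading connectivities through the gadget $H_A$ gives the same answer as through the real $G[A]$.

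The recursion tree has depth $\Oh(n)$ in the worst case but each recursive call strictly decreases $|V(G)|$, so the total number of nodes is polynomial; at each node we call Lemma~\ref{lem:unbreak} (time $2^{\Oh(k\log k)}\cdot n^3\log n$), possibly Lemma~\ref{lem:bord-unbreak} (time $2^{q^3 2^{2^{\Oh(k)}}}\cdot n^{\Oh(1)}$ with $q=2^{2^{\Oh(k)}}$, i.e.\ $2^{2^{2^{2^{\Oh(k)}}}}\cdot n^{\Oh(1)}$), the cut-reduction of Lemma~\ref{lem:cut-additional} (time $2^{\Oh(k)}\cdot n^{\Oh(1)}$), and enumerate over $\mathcal{H}_{\Oh(k),2k}$ whose size is $2^{2^{2^{\Oh(k)}}}$ and over the $\le 2^{k+1}$ sub-tuples of $\Lambda$ and $\le k+1$ budget values; the combine step multiplies two tables each of size $2^{2^{2^{\Oh(k)}}}\cdot k^{\Oh(1)}$ against each other, costing $2^{2^{2^{\Oh(k)}}}$. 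Multiplying the per-node cost by the polynomial number of nodes and by the outer iteration over $\mathcal{H}_{r,2k}\times 2^\Lambda$ of the top-level instance — which is itself only $2^{2^{2^{\Oh(k)}}}$ — and being generous with the tower from Lemma~\ref{lem:bord-unbreak}, the total is $2^{2^{2^{2^{2^{2^{\Oh(k)}}}}}}\cdot n^{\Oh(1)}$, as claimed.

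I expect the main obstacle to be the bookkeeping of the two intertwined ``unrestricted'' quantities $t$ and the $\lambda_i$'s through the recursion: one must argue carefully that when a good separation splits a putative solution, the demands assigned to clusters living entirely on one side (together with the possibly-split clusters straddling the cut, which contribute at most $r\le 4k$ extra boundary vertices) can always be encoded as a sub-tuple $\hat\Lambda\subseteq\Lambda$, that demands exceeding $2k$ are harmless under the truncation in the cut reduction (because such a cluster can receive no boundary edge and hence the gadget's truncated weights never interfere), and that the family $\mathcal{H}_{r,2k}$ — whose connectivity side-condition ($\lambda^{\sf w}(\mathbf{y},v)\le 2k$ for some non-boundary $v$ in each component) is exactly what Lemma~\ref{lem:cut-additional}(ii) preserves — is rich enough to contain a faithful replacement for \emph{every} connected side that arises. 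Getting the definition of the boundary $\mathbf{x}_A$ right (turning $G[A]$ with its cut-endpoints into a \emph{properly} boundaried graph without changing connectivities, and keeping $|\mathbf{x}_A|\le 4k$ so that $\mathcal{H}_{|\mathbf{x}_A|,2k}$ is still of tower-bounded size) and verifying the side-constraint $\hat k\ge|\hat\Lambda|-1$ is maintained are the routine-but-delicate pieces that the full proof must nail down.
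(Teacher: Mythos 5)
Your overall frame (Lemma~\ref{lem:unbreak} to find a $(q,2k)$-good separation, Lemma~\ref{lem:bord-unbreak} on unbreakable instances, recursing on the small-boundary side $G[A]$ with boundary $\hat{\mathbf{x}}$ of size at most $4k$) matches the paper, but the step where you shrink $A$ and recombine is where the argument breaks, and it is exactly the step the paper handles differently. You propose to replace $G[A]$ by a gadget $H_A$ obtained by cut reduction of $G[A]$ \emph{with respect to $2k$} and then to join an $A$-side table with a $B$-side table by splitting $\hat\Lambda=\Lambda^1+\Lambda^2$ and $\hat k=\hat k_1+\hat k_2$. This fails for two reasons. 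First, $H_A$ mimics the cut structure of the \emph{intact} $G[A]$, not of $G[A]$ minus the (unknown) partial solution $F\cap E(G[A])$; and the truncation at $2k$ destroys exactly the information needed to certify clusters lying inside $A$ whose demands exceed $2k$ (your remark that such clusters ``receive no boundary cut'' only shows they do not straddle the separation; it does not make a $2k$-truncated gadget faithful for them, and Lemma~\ref{lem:cut-essential} gives nothing once $\lambda^{\sf w}>p$). Moreover $G[A]$ reduced w.r.t.\ $2k$ need not even lie in $\mathcal{H}_{\hat r,2k}$, since the side condition $\lambda^{\sf w}(\hat{\mathbf{x}},v)\le 2k$ can fail for every internal vertex of a highly connected $A$. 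Second, the two-table join is unjustified: a cluster straddling $E(A,B)$ has its connectivity determined globally, the two tables are indexed by gadgets attached at \emph{different} boundaries whose choice must encode the other side \emph{after} deleting that side's solution edges, and nothing in your combine enforces consistency of the interface (which boundary vertices end up in which cluster, which cut edges are deleted) between the two witnesses. In the paper the $2k$-cut-reduction is applied only inside the \emph{correctness proof} (Claim~A), and only to the graph $R$ formed by the pieces of boundary-straddling solution components lying \emph{outside} $A$, for which $\lambda^{\sf w}\le 2k$ is actually proved; no table join ever occurs.

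What the recursion actually needs, and what the paper does, is: use the recursive call on $(\hat G,\hat{\mathbf{x}},w,\hat L,\Lambda,k)$ only to extract the union $M$ of all returned partial solutions, restrict the allowed edges to $L^*=(L\setminus\hat L)\cup M$ (Claim~A shows this loses nothing), keep as terminals the set $Z$ consisting of $\hat{\mathbf{x}}$ together with all endpoints of $M$, and compress the rest of $A$ by cut reduction \emph{with respect to $+\infty$} (Claim~B via Lemma~\ref{lem:replacement}), so that arbitrarily large connectivities are preserved exactly and every representative partial solution can still be executed in the compressed graph $G^*=(Q,Z)\oplus_b(R^*,Z)$; then one recurses on the single instance $(G^*,\mathbf{x},w,L^*,\Lambda,k)$ with no merging of tables. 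This also explains why your choice $q=2^{2^{\Oh(k)}}$ is too small: $|Z|\le p$ with $p\approx k2^{k}|\mathcal{H}_{4k,2k}|=2^{2^{2^{\Oh(k)}}}$ by Lemma~\ref{lem:size}, the $+\infty$-reduction $R^*$ can have $2^{2^{|Z|-1}}+|Z|$ vertices, and $q$ must exceed that (hence $q=2^{2^{p-1}}+p$) for $|V(G^*)|<|V(G)|$ to hold; this is where the sixth exponent in the running time comes from. As written, your proposal is missing the restriction-to-$L^*$ idea and the $+\infty$-reduction on the terminal set $Z$, and the replacement-plus-join it uses instead does not yield a correct algorithm.
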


\begin{proof}
We construct a recursive algorithm for \probBAWCGC.
Let $(G,\mathbf{x},w,L,\Lambda,k)$ be an instance of \probBAWCGC. Recall that $(G,\mathbf{x})$ is an $r$-boundaried connected graph and $r\leq k$. Recall also that $\Lambda$ contains at most $k+1$ elements.

We define the constants $p$ and $q$ that are used throughout the proof as follows:
\begin{equation}\label{eq:p}
p=2k2^{k+1}(2k+1)^{^{\binom{2^{2^{4k-1}}+4k}{2}}}+4k
\end{equation}
and
\begin{equation}\label{eq:q}
q=2^{2^{p-1}}+p.
\end{equation}
We are going to use $q$ as a part of the unbreakability threshold and the choice of $q$ is going to be explained latter. Now we just observe that $p=2^{2^{2^{\Oh(k)}}}$ and $q=2^{2^{2^{2^{2^{\Oh(k)}}}}}$.

We apply Lemma~\ref{lem:unbreak} and in time $2^{\Oh(k\log(q+k))}\cdot n^3\log n$ either find a $(q,2k)$-good edge separation $(A,B)$ of $G$ or conclude that $G$ is $(2kq,2k)$-unbreakable. 

If $G$ is $(2kq,2k)$-unbreakable, we apply Lemma~\ref{lem:bord-unbreak} and solve the problem in time $2^{q^32^{2^{\Oh(k)}}}\cdot n^{\Oh(1)}$. Note that the running time could be written as $2^{2^{2^{2^{2^{2^{\Oh(k)}}}}}}\cdot n^{\Oh(1)}$.  
Assume from now that we are given a $(q,2k)$-good edge separation $(A,B)$ of $G$. 

Since $|\mathbf{x}|\leq 4k$ and the vertices of $\mathbf{x}$ are separated between $A$ and $B$, either $A$ or $B$ contains at most $2k$ vertices of $\mathbf{x}$. Assume without loss of generality that $|A\cap\mathbf{x}|\leq 2k$. Let $T$ be the set of end-vertices of the edges of $E(A,B)$ in $A$. Clearly, $|T|\leq 2k$. We form a new $\hat{r}$-tuple $\hat{\mathbf{x}}=\langle \hat{x}_1,\ldots,\hat{x}_{\hat{r}}\rangle$ of vertices $A$ from  the vertices of $(A\cap \mathbf{x})\cup T$; note that $\hat{r}\leq 4k$. We consider $\hat{G}=G[A]$ as the $\hat{\mathbf{x}}$-boundaried graph. We set $\hat{L}=L\cap E(\hat{G})$. This way, we obtain the instance $(\hat{G},\hat{\mathbf{x}},w,\hat{L},\Lambda,k)$ of \probBAWCGC.

Now we solve \probBAWCGC for $(\hat{G},\hat{\mathbf{x}},w,\hat{L},\Lambda,k)$. 

If $|V(\hat{G})|\leq 2^q$, we can simply use brute force. For every weighted properly $\hat{r}$-boundaried graph $(H,\mathbf{y})\in \mathcal{H}_{\hat{r},2k}$ and every $\hat{\Lambda}=\langle \hat{\lambda}_1,\ldots,\hat{\lambda}_s\rangle\subseteq \Lambda$, we consider all subsets of $\hat{L}$ of weight at most $k$ and find 
the minimum $0\leq \hat{k}\leq k$ such that $((\hat{G},\hat{\mathbf{x}})\oplus_b(H,\mathbf{y}),w,L,\hat{\Lambda},\hat{k})$ is a yes-instance of \probAWCGC and output a solution $F$ of minimum $\hat{k}$ for the instance or output $\emptyset$ if $\hat{k}$ does not exist. Since $|\hat{L}|\leq |E(\hat{G})|\leq 2^{2q}$ and $\mathcal{H}_{\hat{r},2k}=2^{2^{2^{\Oh(k)}}}$, the total running time is  $2^{2^{2^{2^{2^{2^{\Oh(k)}}}}}}$. 

If $|V(\hat{G})|> 2^q$, we recursively solve \probBAWCGC for $(\hat{G},\hat{\mathbf{x}},w,\hat{L},\Lambda,k)$ by calling our algorithm for the instance that has lesser size, because $|V(\hat{G})|\leq |V(G)|-q$.

By solving \probBAWCGC for $(\hat{G},\hat{\mathbf{x}},\hat{L},\Lambda,k)$, we obtain a list $\mathcal{L}$ of sets where each element is either $\emptyset$ or $F\subseteq \hat{L}$ that is a solution for the corresponding instance of \probAWCGC for some 
$(H,\mathbf{y})\in \mathcal{H}_{\hat{r},2k}$, $\hat{\Lambda}\subseteq\Lambda$ and $\hat{k}\leq k$. Denote by $M$ the union of all the sets in $\mathcal{L}$. Clearly, $M\subseteq \hat{L}$.

We define $L^*=(L\setminus \hat{L})\cup M$. Since $M\subseteq \hat{L}$, $L^*\subseteq L$.
We show that the instances $(G,\mathbf{x},w,L,\Lambda,k)$ and $(G,\mathbf{x},w,L^*,\Lambda,k)$ are essentially equivalent by proving the following claim.

\medskip
\noindent
{\bf Claim~A.} {\it For every weighted properly $r$-boundaried graph $(H,\mathbf{y})\in \mathcal{H}_{r,2k}$, every $\hat{\Lambda}=\langle \hat{\lambda}_1,\ldots,\hat{\lambda}_s\rangle\subseteq \Lambda$ and every nonnegative integer $\hat{k}\leq k$, $((G,\mathbf{x})\oplus_b(H,\mathbf{y}),w,L,\hat{\Lambda},\hat{k})$ is a yes-instance of \probAWCGC if and only if $((G,\mathbf{x})\oplus_b(H,\mathbf{y}),w,L^*,\hat{\Lambda},\hat{k})$ is a yes-instance of \probAWCGC. 
}

\begin{subproof}[Proof of Claim~A] 
Let $(H,\mathbf{y})\in \mathcal{H}_{r,2k}$, $\hat{\Lambda}=\langle \hat{\lambda}_1,\ldots,\hat{\lambda}_s\rangle\subseteq \Lambda$ and $\hat{k}\leq k$. Since $L^*\subseteq L$,  if 
$((G,\mathbf{x})\oplus_b(H,\mathbf{y}),w,L^*,\hat{\Lambda},\hat{k})$ is a yes-instance of \probAWCGC, then $((G,\mathbf{x})\oplus_b(H,\mathbf{y}),w,L,\hat{\Lambda},\hat{k})$ is a yes-instance of \probAWCGC. Hence, the task is  to show the opposite implication.

Suppose that $((G,\mathbf{x})\oplus_b(H,\mathbf{y}),w,L,\hat{\Lambda},\hat{k})$ is a yes-instance of \probAWCGC and let $F\subseteq L$ be a solution. Let $Q=(G,\mathbf{x})\oplus_b(H,\mathbf{y})$. 
We have that $Q-F$ has $s$ components $Q_1,\ldots,Q_s$ with $\lambda^{\sf w}(Q_i)\geq\hat{\lambda}_i$ for $i\in\{1,\ldots,s\}$. 
We say that $Q_i$ is an \emph{$A$-inner} component if $V(Q_i)\subseteq A$ and we say that $Q_i$ is \emph{$A$-boundary} if $V(Q_i)\cap A\neq\emptyset$ and $V(Q_i)\cap (V(Q)\setminus A)\neq\emptyset$. We assume without loss of generality that $Q_1,\ldots,Q_f$ are the $A$-boundary components, $Q_{f+1},\ldots,Q_g$ are the $A$-inner components and $Q_{g+1},\ldots,Q_s$ are the remaining components of $Q-F$, because the ordering is irrelevant for the forthcoming arguments. Note that some of these groups of the components may be empty. Denote $\tilde{\Lambda}=\langle \hat{\lambda}_1,\ldots,\hat{\lambda}_g\rangle$,   $\tilde{F}=F\cap E(\hat{G})$, and let $\tilde{k}={\sf w}(\tilde{F})$. Slightly abusing notation, we do not distinguish the vertices of $\mathbf{x}$ and the corresponding vertices of $\mathbf{y}$ that are identified in $Q$.

For $i\in\{1,\ldots,f\}$, denote by $X_i$ the set of vertices of $Q_i$ that are in $A$ and have neighbors outside $A$. Let $Q_i'=Q_i[A\cap V(Q_i)]$ and $Q_i''=Q_i[X_i\cup (V(Q_i)\setminus A)]-E(Q_i[X_i])$. Observe that $X_1,\ldots,X_f$ are disjoint subsets of $\hat{\mathbf{x}}$. 
Respectively, each $X_i$ can be seen as an $|X_i|$-subtuple of $\hat{\mathbf{x}}$ and we have that $Q_i=(Q_i',X_i)\oplus_b(Q_i'',X_i)$ for $i\in\{1,\ldots,f\}$. Notice that $Q_1'',\ldots,Q_f''$ are proper boundaried graphs. 

We claim that for each $i\in\{1,\ldots,f\}$, for every component $C$ of $Q_i''$, there is $v\in V(C)\setminus X_i$ such that $\lambda^{\sf w}(X_i,v)\leq 2k$. To show this, we consider two cases. If $(V(C)\setminus X_i)\cap (V(H)\setminus\mathbf{y})\neq\emptyset$, then because $L\cap E(H)=\emptyset$, there is a component $C'$ of $H$ such that $V(C')\subseteq V(C)$ and $V(C')\setminus \mathbf{y}\neq\emptyset$. Then by the definition of $\mathcal{H}_{r,2k}$, there is $v\in V(C')\setminus\mathbf{y}$ such that $\lambda^{\sf w}(\mathbf{y},v)\leq 2k$. Therefore, we obtain that $2k\geq \lambda^{\sf w}(\mathbf{y},v)\geq \lambda^{\sf w}(X_i,v)$. Assume that $C\cap (V(H)\setminus\mathbf{y})=\emptyset$. Then $V(C)\setminus X_i\subseteq B$ and for every $v\in V(C)\setminus X_i$, $\lambda^{\sf w}(X_i,v)\leq {\sf w}(A,B)\leq 2k$. Note that, in particular, we have that $\lambda^{\sf w}(Q_i)\leq 2k$ for $i\in\{1,\ldots,f\}$. 

Denote by $R$ the weighted graph with the vertex set $\hat{\mathbf{x}}\cup(\bigcup_{i=1}^fV(Q_i''))$ and the edge set $\bigcup_{i=1}^fE(Q_i'')$ where the weights of the edges are inherited from the weights in $G$. Observe that $(R,\hat{\mathbf{x}})$ is a weighted properly $\hat{r}$-boundaried graph. Let $(R',\hat{\mathbf{x}})$ be the boundaried  graph obtained from $(R,\hat{\mathbf{x}})$ by the cut reduction with respect to $2k$.

By Lemma~\ref{lem:cut-additional} (i), $R$ has at most $2^{2^{\hat{r}-1}}+\hat{r}$ vertices. Since $Q_1'',\ldots,Q_f''$ are the components of $R$ that have at least one vertex outside $\hat{\mathbf{x}}$, we obtain that by  Lemma~\ref{lem:cut-additional} (ii), for each component $C$ of $R'$ that has at least one vertex outside $\hat{\mathbf{x}}$, there is
$v\in V(C)\setminus\mathbf{y}$ such that $\lambda^{\sf w}(R',\hat{\mathbf{x}},v)\leq 2k$. This implies that $\mathcal{H}_{\hat{r},2k}$ contains a weighted properly $\hat{r}$-boundaried graph that is isomorphic to $(R',\hat{\mathbf{x}})$. To simplify notations, we assume that $(R',\hat{\mathbf{x}})\in \mathcal{H}_{\hat{r},2k}$.

Consider the graph $(\hat{G}-\tilde{F},\hat{x})\oplus_b(R,\hat{x})$. This graph has the components $Q_1,\ldots,Q_g$ and $\lambda^{\sf w}(Q_i)\geq\lambda_i$ for $i\in\{1,\ldots,g\}$. By Lemma~\ref{lem:replacement}, we have that $(\hat{G}-\tilde{F},\hat{x})\oplus_b(R',\hat{x})$ has $g$ components $Q_1',\ldots,Q_g'$ such that it holds $\lambda^{\sf w}(Q_i')\geq\lambda_i$ for $i\in\{1,\ldots,g\}$. This immediately implies that  
$(G',w,\hat{L},\tilde{\Lambda},\tilde{k})$ is a yes-instance of \probAWCGC for $G'=(\hat{G},\hat{\mathbf{x}})\oplus_b(R',\hat{\mathbf{x}})$ with $\tilde{F}$ being a solution. 
 Notice that the algorithm for  \probBAWCGC for $(\hat{G},\hat{\mathbf{x}},\hat{L},\Lambda,k)$ solves the instance 
$(G',w,\hat{L},\tilde{\Lambda},\tilde{k})$  of \probAWCGC and, since we have a yes-instance, the output $\mathcal{L}$ of the algorithm contains a solution $\hat{F}\subseteq M$. 
Therefore,  $(\hat{G}-\hat{F},\hat{x})\oplus_b(R',\hat{x})$ has $g$ components $S_1',\ldots,S_g'$ such that it holds $\lambda^{\sf w}(S_i')\geq\lambda_i$ for $i\in\{1,\ldots,g\}$.
By Lemma~\ref{lem:replacement} we obtain that $(\hat{G}-\hat{F},\hat{x})\oplus_b(R,\hat{x})$ has $g$ components  $S_1,\ldots,S_g$ such that it holds $\lambda^{\sf w}(S_i)\geq\lambda_i$ for $i\in\{1,\ldots,g\}$. Consider $F^*=(F\setminus \tilde{F})\cup \hat{F}$. Recall that $\tilde{F}=F\cap E(\hat{G})$. This immediately implies that $F^*\subseteq L^*$. Also we have that ${\sf w}(\hat{F})\leq\tilde{k}={\sf w}(\tilde{F})$. Hence, ${\sf w}(F^*)\leq {\sf w}(F)\leq \hat{k}$.
We have that $Q-F^*$ has the components $S_1,\ldots,S_g$ and $Q_{g+1},\ldots,Q_s$,  $\lambda^{\sf w}(S_i)\geq\lambda_i$ for $i\in\{1,\ldots,g\}$ and $\lambda^{\sf w}(Q_i)\geq\lambda_i$ for $i\in\{g+1,\ldots,s\}$. We conclude that $F^*$ is a solution for the instance $((G,\mathbf{x})\oplus_b(H,\mathbf{y}),w,L^*,\hat{\Lambda},\hat{k})$ of \probAWCGC, that is,
$((G,\mathbf{x})\oplus_b(H,\mathbf{y}),w,L^*,\hat{\Lambda},\hat{k})$ is a yes-instance of \probAWCGC.
\end{subproof}

By Claim~A we obtain that  every solution of $(G,\mathbf{x},w,L^*,\Lambda,k)$ is a solution of $(G,\mathbf{x},w,L,\Lambda,k)$, and there is a solution of 
$(G,\mathbf{x},w,L,\Lambda,k)$ that is a solution of  $(G,\mathbf{x},w,L^*,\Lambda,k)$. Therefore, it is sufficient for us to solve $(G,\mathbf{x},w,L^*,\Lambda,k)$.

Let $Z\subseteq A$ be the set of end-vertices of the edges of $M$ and the vertices of $\hat{\mathbf{x}}$. Because $\hat{r}\leq 4k$, $\mathcal{H}_{\hat{r},2k}\leq  (2k+1)^{^{\binom{2^{2^{4k-1}}+4k}{2}}}$ by Lemma~\ref{lem:size}. Since $t\leq k+1$, there are at most $2^{k+1}$ subtuples $\hat{\Lambda}$ of $\Lambda$. For each $(H,y)\in\mathcal{H}_{\hat{r},2k}$ and $\hat{\Lambda}\subseteq\Lambda$, the solution $\mathcal{L}$ of \probBAWCGC for $(\hat{G},\hat{\mathbf{x}},\hat{L},\Lambda,k)$ contains a set $F$ of size at most $k$. This implies that 
$$|M|\leq k2^{k+1}(2k+1)^{^{\binom{2^{2^{4k-1}}+4k}{2}}}.$$
Because $|\hat{\mathbf{x}}|\leq 4k$, we obtain that 
\begin{equation}\label{eq:Z}
|Z|\leq 2|M|+4k\leq 2k2^{k+1}(2k+1)^{^{\binom{2^{2^{4k-1}}+4k}{2}}}+4k=p
\end{equation}
for $p$  defined in (\ref{eq:p}).

Let $U=A\setminus Z$. We define $Q=G- U$. Let also $R$ be the graph with the vertex set $A$ and the edge set $E(G[A])\setminus E(G[Z])$. We order the vertices of $Z$ arbitrarily and consider $Z$ to be $|Z|$-tuple of the vertices of $Q$ and $R$. Observe that $(R,Z)$ is a properly $|Z|$-boundaried graph as $G[A]$ is connected. Since $V(F)\cap V(R)=Z$, we have that 
$G=(Q,Z)\oplus_b(R,Z)$. Let $(R^*,Z)$ be the boundaried graph obtained from  $(R,Z)$ by the cut reduction with respect to $+\infty$. We define $G^*=(Q,Z)\oplus_b(R^*,Z)$. 
Note that $L^*\subseteq E(Q)\subseteq E(G^*)$. We show that we can replace $G$ by $G^*$ in the considered instance  $(G,\mathbf{x},w,L^*,\Lambda,k)$ of \probBAWCGC.

\medskip
\noindent
{\bf Claim~B.} {\it For every weighted properly $r$-boundaried graph $(H,\mathbf{y})\in \mathcal{H}_{r,2k}$, every $\hat{\Lambda}=\langle \hat{\lambda}_1,\ldots,\hat{\lambda}_s\rangle\subseteq \Lambda$ and every nonnegative integer $\hat{k}\leq k$, a set $F\subseteq L^*$ is a solution for the instance $((G,\mathbf{x})\oplus_b(H,\mathbf{y}),w,L^*,\hat{\Lambda},\hat{k})$  if and only if $F$ is a solution for $((G^*,\mathbf{x})\oplus_b(H,\mathbf{y}),w,L^*,\hat{\Lambda},\hat{k})$. 
}

\begin{subproof}[Proof of Claim~B]
Notice that $\mathbf{x}\subseteq V(Q)$ by the construction of $F$ and $R$, Hence, 
$$(G,\mathbf{x})\oplus_b(H,\mathbf{y})=((Q,\mathbf{x})\oplus_b(H,\mathbf{y}),Z)\oplus_b(R,Z)$$
and
$$(G^*,\mathbf{x})\oplus_b(H,\mathbf{y})=((Q,\mathbf{x})\oplus_b(H,\mathbf{y}),Z)\oplus_b(R^*,Z).$$
For every $F\subseteq L^*\subseteq E(Q)$, we have that 
$$\tilde{G}=(G,\mathbf{x})\oplus_b(H,\mathbf{y})-F=((Q-F,\mathbf{x})\oplus_b(H,\mathbf{y}),Z)\oplus_b(R,Z)$$
and
$$\tilde{G}^*=(G^*,\mathbf{x})\oplus_b(H,\mathbf{y})-F=((Q-F,\mathbf{x})\oplus_b(H,\mathbf{y}),Z)\oplus_b(R^*,Z).$$
By Lemma~\ref{lem:replacement}, we have that $\tilde{G}$ has components $G_1,\ldots,G_s$  such that $\lambda^{\sf w}(G_i)\geq\hat{\lambda}_i$ for $i\in\{1,\ldots,s\}$ if and only if the same holds for $\tilde{G}^*$, that is,  $\tilde{G}^*$ has components $G_1',\ldots,G_s'$  such that $\lambda^{\sf w}(G_i')\geq\hat{\lambda}_i$ for $i\in\{1,\ldots,s\}$, and this proves the claim.
\end{subproof} 

By Claim~B, to solve \probBAWCGC for $(G,\mathbf{x},w,L^*,\Lambda,k)$, it is sufficient to solve the problem for $(G^*,\mathbf{x},w,L^*,\Lambda,k)$.
Observe that $|V(G^*)|=|B|+|V(R^*)|$. Because  $(R^*,Z)$ is obtained by the cut reduction, we have that 
$|V(R^*)|\leq 2^{2^{|Z|-1}}+|Z|$ by Lemma~\ref{lem:cut-additional} (i). Using (\ref{eq:Z}), we have that 
$$|V(R^*)|\leq  2^{2^{p-1}}+p=q$$
for $q$ defined in (\ref{eq:q}). Recall that $|A|>q$ since $(A,B)$ is a $(q,2k)$-good edge separation of $G$.
Therefore,
$$|V(G^*)|=|B|+|V(R^*)|\leq |B|+q<|A|+|B|=|V(G)|.$$
We use this and solve \probBAWCGC for $(G^*,\mathbf{x},w,L^*,\Lambda,k)$ recursively by applying our recursive algorithm for the instance with the input graph of smaller size.

This completes the description of the algorithm for  \probBAWCGC and its correctness proof. Now we evaluate the running time. 
Denote by $\tau(G,\mathbf{x},w,L,\Lambda,k)$ the time needed to solve \probBAWCGC for $(G,\mathbf{x},w,L,\Lambda,k)$. 
Recall that we first apply Lemma~\ref{lem:unbreak} and in time $2^{\Oh(k\log(q+k))}\cdot n^3\log n$ either find a $(q,2k)$-good edge separation $(A,B)$ of $G$ or conclude that $G$ is $(2kq,2k)$-unbreakable. Since $q=2^{2^{2^{2^{2^{\Oh(k)}}}}}$, this is done in time $2^{2^{2^{2^{2^{\Oh(k)}}}}}$.
If $G$ is $(2kq,2k)$-unbreakable, we apply Lemma~\ref{lem:bord-unbreak} and solve the problem in time  $2^{2^{2^{2^{2^{2^{\Oh(k)}}}}}}\cdot n^{\Oh(1)}$, that is, in this case
\begin{equation}\label{eq:unbr}
\tau(G,\mathbf{x},w,L,\Lambda,k)=2^{2^{2^{2^{2^{2^{\Oh(k)}}}}}}\cdot n^{\Oh(1)}.
\end{equation}
Otherwise, if we obtain a $(q,2k)$-good edge separation $(A,B)$ of $G$, we solve \probBAWCGC for $(\hat{G},\hat{\mathbf{x}},\hat{L},\Lambda,k)$ in time 
$\tau(\hat{G},\hat{\mathbf{x}},\hat{L},\Lambda,k)$. We obtain a solution $\mathcal{L}$. Recall that $|\mathcal{L}|$ contains sets constructed for each boundaried graph
in $\mathcal{H}_{\hat{r},2k}$ and each $\hat{\Lambda}\subseteq \Lambda$. Combining Lemma~\ref{lem:size} and the fact that $|\lambda|\leq k+1$, we obtain 
that  in  time $2^{2^{2^{\Oh(k)}}}\cdot n^{\Oh(1)}$ we can construct $M$ and $L^*$. 
Given these sets, we construct  $(R,Z)$ in polynomial time. Then 
by Lemma~\ref{lem:cut-additional}, $(R^*,Z)$ is constructed in time  $2^{\Oh(|Z|)}\cdot n^{\Oh(1)}$. Using  (\ref{eq:p}) and (\ref{eq:Z}), we conclude that it can be done in time $2^{2^{2^{\Oh(k)}}}\cdot n^{\Oh(1)}$. Then in a polynomial time we construct the instance $(G^*,\mathbf{x},w,L^*,\Lambda,k)$ and  \probBAWCGC is solved for this instance in time $\tau(G^*,\mathbf{x},w,L^*,\Lambda,k)$. This gives us the recurrence 
\begin{equation}\label{eq:rec}
\tau(G,\mathbf{x},w,L,\Lambda,k)=\tau(\hat{G},\hat{\mathbf{x}},\hat{L},\Lambda,k)+\tau(G^*,\mathbf{x},w,L^*,\Lambda,k)+2^{2^{2^{2^{2^{\Oh(k)}}}}}.
\end{equation}
Recall that $|V(G^*)|=|B|+|V(R^*)|$ and $|V(R^*)|\leq q$. Hence,
\begin{equation}\label{eq:size-recursion}
|V(G^*)|\leq |V(G)|-|V(\hat{G})|+q.
\end{equation}
Combining (\ref{eq:unbr})--(\ref{eq:size-recursion}), we obtain that the  running time of our algorithm is  $2^{2^{2^{2^{2^{2^{\Oh(k)}}}}}}\cdot n^{\Oh(1)}$ applying the general scheme from the paper of Chitnis et al.~\cite{ChitnisCHPP16}.
\end{proof}

Now we are ready to complete the proof Theorem~\ref{thm:connect}. For convenience, we restate the theorem. 

\medskip
\noindent
{\bf Theorem~\ref{thm:connect}.} {\it
\probWCGC can be solved in time   $2^{2^{2^{2^{2^{2^{\Oh(k)}}}}}}\cdot n^{\Oh(1)}$
 if the input graph is connected.
}

\begin{proof}
Let $(G,w,\Lambda,k)$ be an instance \probWCGC where $G$ is connected. If $t=|\Lambda|>k+1$, we conclude that  $(G,w,\Lambda,k)$ is a no-instance, because the connected graph $G$ cannot be partitioned into more than $k+1$ components by deleting at most $k$ edges. In this case we return {\sf NO} and stop. If $t=1$, then we verify in polynomial whether $G$ is weight $\lambda$-connected for the unique element $\lambda$ of $\Lambda$ using, e.g, the algorithm of Stoer and Wagner~\cite{StoerW97}. Assume that these are not the cases, that is, $2\leq t\leq k+1$. 

We construct the instance $(G,\mathbf{x},w,L,\Lambda,k)$ of \probBAWCGC by setting $L=E(G)$ and defining $\mathbf{x}=\emptyset$, i.e., we consider $(G,\mathbf{x})$ be a $0$-boundaried graph. then we solve \probBAWCGC for $(G,\mathbf{x},w,L,\Lambda,k)$ in time $2^{2^{2^{2^{2^{2^{\Oh(k)}}}}}}\cdot n^{\Oh(1)}$ using Lemma~\ref{lem:connect}. It remains to observe that 
$(G,w,\Lambda,k)$ is a yes instance \probWCGC if and only if the output produced by the algorithm for \probAWCGC contains  nonempty set for the unique empty $0$-boundaried graph in $\mathcal{H}_{0,2k}$ and $\hat{\Lambda}=\Lambda$.
\end{proof}

\section{The algorithm for \probWCGClong}\label{sec:general}
In this section we extend the result obtained in Section~\ref{sec:connect} and show that \probWCGC is \classFPT when parameterized by $k$ even if the input graph is disconnected. 

%


First, we solve \probWCGC for the case when all the components of the input graph have the same weighted connectivity.

\begin{lemma}\label{lem:uniform}
\probWCGC can be solved in time   $2^{2^{2^{2^{2^{2^{\Oh(k)}}}}}}\cdot n^{\Oh(1)}$ if for every component $C$ of the input graph, $\lambda^{\sf w}(C)=\lambda\leq k$.
\end{lemma}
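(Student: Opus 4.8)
The plan is to reduce the disconnected problem to polynomially many \emph{connected} instances handled by Theorem~\ref{thm:connect}, glued together by a small bipartite matching. Write $C_1,\dots,C_m$ for the components of $G$; each has $\lambda^{\sf w}(C_j)=\lambda$ with $1\le\lambda\le k$ (every $C_j$ has at least two vertices since $\lambda<+\infty$, and weights are positive). Since edge deletions never merge components, any solution $F$ leaves at least $m$ components, so I would first return NO unless $t\ge m$; and since ${\sf w}(F)\le k$ forces $|F|\le k$, deleting $F$ increases the number of components by at most $k$, so also $t\le m+k$. Hence assume $m\le t\le m+k$, and call a demand $\lambda_i$ \emph{small} if $\lambda_i\le\lambda$ and \emph{large} otherwise, with $d$ and $d'=t-d$ their counts.

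Next I would nail down the shape of a solution. Removing edges from a component while keeping it connected can only decrease its connectivity (every cut survives), so without loss of generality $F$ touches a component $C_j$ only if it actually breaks $C_j$ into at least two pieces; breaking off any piece costs at least $\lambda^{\sf w}(C_j)=\lambda\ge 1$, so $F$ splits at most $m_t\le k$ of the components, and the other $m-m_t$ components survive in $G-F$ with connectivity exactly $\lambda$. These intact components must receive $m-m_t$ distinct small demands (connectivity $\lambda$ meets exactly those), while the $m_t$ split components collectively yield $T:=t-(m-m_t)=t-m+m_t\le 2k$ pieces, which must absorb all $d'$ large demands together with $d-(m-m_t)$ small ones. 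A short exchange argument shows we may assume the intact components take the $m-m_t$ \emph{largest} small demands: if an intact component carries a small demand $u$ while some split piece $P$ carries a small demand $v>u$ (so $\lambda^{\sf w}(P)\ge v>u$), swapping the two assignments is still valid because $u,v\le\lambda$ and the way the split components are cut is untouched. Thus, once $m_t$ is fixed, the multiset $\widehat{\Lambda}$ that the split components must realize is determined — all large demands plus the $d-(m-m_t)$ smallest small demands — and $|\widehat{\Lambda}|=T\le 2k$.

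With this in hand the algorithm is straightforward. For each $m_t\in\{\max(0,m-d),\dots,\min(k,m)\}$ (output NO if this range is empty), form $\widehat{\Lambda}$ and enumerate every way of cutting $\widehat{\Lambda}$ into an ordered $m_t$-tuple of nonempty sub-multisets $\Lambda_1,\dots,\Lambda_{m_t}$ (at most $m_t^{|\widehat{\Lambda}|}\le k^{2k}$ of these) and every budget vector $(k_1,\dots,k_{m_t})$ with $k_j\ge\lambda$ and $\sum_j k_j\le k$ (at most $2^{\Oh(k)}$). For each such choice, build the bipartite graph on the $m_t$ slots versus the $m$ components with an edge joining slot $j$ to $C_\ell$ exactly when $(C_\ell,w,\Lambda_j,k_j)$ is a yes-instance of \probWCGC on the connected graph $C_\ell$, which is decided by Theorem~\ref{thm:connect} in time $2^{2^{2^{2^{2^{2^{\Oh(k)}}}}}}\cdot n^{\Oh(1)}$. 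If some choice of $m_t$, partition, and budget vector produces a bipartite graph with a matching saturating all $m_t$ slots, output YES (the intact components getting the largest small demands); otherwise output NO. Correctness is immediate from the structural analysis, and since the total number of profiles $(\Lambda_j,k_j)$ queried is $2^{\Oh(k\log k)}$, each bipartite graph using $\Oh(km)$ calls to the algorithm of Theorem~\ref{thm:connect} followed by a polynomial-time matching computation, the running time is $2^{2^{2^{2^{2^{2^{\Oh(k)}}}}}}\cdot n^{\Oh(1)}$.

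I expect the crux to be the two normalization steps: that every touched component may be assumed genuinely split (which is what pins down $m_t\le k$) and that the intact components may be assumed to carry the largest small demands (which is what pins down $\widehat{\Lambda}$). These collapse the search space to bounded-size matching instances; everything after that is bookkeeping with the counting inequalities $t-m\le k$ and $m_t\le k$ and a black-box invocation of Theorem~\ref{thm:connect}.
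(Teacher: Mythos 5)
Your proof is correct and follows essentially the same route as the paper: bound the number of split components by $k$, enumerate how the demand tuple is distributed over them, decide each component--profile pair with the connected-case algorithm of Theorem~\ref{thm:connect}, and glue the answers with a bipartite matching. The only (harmless) differences are bookkeeping: your exchange argument pins down the multiset $\hat{\Lambda}$ handed to the split components (where the paper instead observes $|\mathbf{var}(\Lambda)|\leq 3k$ and enumerates subtuples subject to a dominance condition), and you enumerate budget vectors and use an unweighted saturating matching where the paper folds the budgets into edge costs and computes a minimum-cost matching via the Hungarian method --- both variants fit within the stated time bound.
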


\begin{proof}
Let $(G,w,\Lambda,k)$ be an instance \probWCGC. Let $G_1,\ldots,G_s$ be the components of $G$ and $\lambda^{\sf w}(G_i)=\lambda\leq k$ for $i\in\{1,\ldots,s\}$. Let also
 $\Lambda=\langle \lambda_1,\ldots,\lambda_t\rangle$. 

If $s>t$, then $(G,w,\Lambda,k)$ is a trivial no-instance of  \probWCGC since we cannot reduce the number of components by deleting edges. If $s=t$, then $(G,w,\Lambda,k)$ is a yes-instance if and only if $\Lambda\leq t\langle \lambda\rangle$
and we verify this condition in polynomial time. If $s<t-k$,  then $(G,w,\Lambda,k)$ is no-instance, because by deleting at most $k$ edges it is possible to obtain at most $k$ additional components. In all these cases we return the corresponding answer and stop. From now we assume that $t-k\leq s\leq t$.

Observe that if $|\mathbf{var}(\Lambda)|>3k$, then $(G,w,\Lambda,k)$ is a no-instance of \probWCGC. Indeed, if $|\mathbf{var}(\Lambda)|>3k$, then $\Lambda$ contains at least $2k+1$ elements $\lambda_i>k$. Since $\lambda^{\sf w}(G_i)=\lambda\leq k$ for $i\in\{1,\ldots,s\}$, if $F\subseteq E(G)$ is a solution, then $G-F$ should have at least $2k+1$ components with weighted connectivity at least $k+1$, but the deletion of at most $k$ edges can create at most $2k$ graphs with weighted connectivity at least $k+1$ from the components with weighted connectivity at most $k$.   Hence, if $|\mathbf{var}(\Lambda)|>3k$, we return {\sf NO} and stop. Assume that $|\mathbf{var}(\Lambda)|\leq 3k$.

For $F\subseteq E(G)$, denote by $I_F=\{i\in\{1,\ldots,s\}\mid E(G_i)\cap F\neq\emptyset\}$.

\medskip
\noindent 
{\bf Claim~A.} {\it
The instance $(G,w,\Lambda,k)$ is a yes-instance of \probWCGC if and only if there is $F\subseteq E(G)$ with ${\sf w}(F)\leq k$ such that $I_F=\{j_1,\ldots,j_p\}$ where $p\leq k$ and the following is fulfilled: 
for each $i\in \{1,\ldots,p\}$, there is a $t_i$-subtuple $\Lambda^i=\langle \lambda_1^i,\ldots,\lambda_{t_i}^i\rangle\subseteq \Lambda$ such that 
\begin{itemize}
\item[(i)] $t_1+\ldots+t_p=t-s+p$,
\item[(ii)] $G_{j_i}-(F\cap E(G_{j_i}))$ has $t_i$ components $G_{j_i}^1,\ldots,G_{j_i}^{t_i}$ with $\lambda^{\sf w}(G_{j_i}^h)\geq \lambda_j^h$ for $h\in\{1,\ldots,t_i\}$,
\end{itemize}
and $\Lambda\preceq(s-p)\langle \lambda\rangle+\Lambda^1+\ldots+\Lambda^p$.}

\begin{subproof}[Proof of Claim A]
To see that the claim holds, it is sufficient to observe that any solution $F$ for  $(G,w,\Lambda,k)$ should split $p\leq k$ components of $G$ into $t-s+p$ components in such a way that for the obtained components together with the old nonsplit components, the connectivity constraints are fulfilled. Vice versa, if we manage to split  $p\leq k$ components of $G$ into $t-s+p$ components by deleting a set of edges $F$ with ${\sf w}(F)\leq k$ in such a way that for the obtained components together with the old nonsplit components the connectivity constraints are fulfilled, then  
$(G,w,\Lambda,k)$ is a yes-instance.
\end{subproof}

\medskip
We use Claim A to solve the problem. 

For a positive integer $r$, denote by $\mathcal{L}_r$ the set of $r$-subtuples $\hat{\Lambda}\subseteq\Lambda$.  
Recall that for positive integers $q$ and $p$, a \emph{$p$-partition} of $q$ as a $p$-tuple $\langle r_1,\ldots,r_p\rangle$ of positive integers such that $r_1\leq\ldots\leq r_p$ and $q=r_1+\ldots+r_p$. We demote by $\mathcal{P}_p(q)$ the set of all $p$-partitions of $q$.

We consider all positive integers $p\leq k$ and all partitions $\langle t_1,\ldots,t_p\rangle\in \mathcal{P}_{p}(t-s+p)$.
For every $\langle t_1,\ldots,t_p\rangle$, we consider all possible $p$-tuples $\langle\Lambda^{1},\ldots,\Lambda^{p}\rangle$ for $\Lambda^i\in \mathcal{L}_{t_i}$ for $i\in\{1,\ldots,p\}$ such that  $\Lambda\preceq(s-p)\langle \lambda\rangle+\Lambda^1+\ldots+\Lambda^p$.

For each choice of $(\langle t_1,\ldots,t_p\rangle,\langle\Lambda^{1},\ldots,\Lambda^{p}\rangle)$, we construct the auxiliary weighted complete bipartite graph $\mathcal{G}$ with the vertex set $\{1,\ldots,p\}\cup\{G_1,\ldots,G_s\}$ where $\{1,\ldots,p\}$ and $\{G_1,\ldots,G_s\}$ form the bipartition. For every $i\in\{1,\ldots,p\}$ and $j\in\{1,\ldots,s\}$, we define the \emph{cost} $c(iG_j)$ of 
$iG_j$ as follows. We use Theorem~\ref{thm:connect} to find the minimum $\hat{k}\leq k$ such that $(G_j,w,\Lambda^i,\hat{k})$ is a yes-instance of  \probWCGC. We set $c(iG_j)=\hat{k}$ if we find such a value and $c(iG_j)=+\infty$ if $\hat{k}\leq k$ does not exist.
We find a matching $M$ of minimum cost that saturates the vertices $\{1,\ldots,p\}$ using, e.g, the Hungarian method~\cite{Kuhn55,Kuhn56}.
If for the cost of $M$ it holds that $c(M)\leq k$, we return the answer {\sf YES} for $(G,w,\Lambda,k)$
and stop. Otherwise, we discard the current choice of  $(\langle t_1,\ldots,t_p\rangle,\langle\Lambda^{1},\ldots,\Lambda^{p}\rangle)$.

If we do not return {\sf YES} for any choice of  $(\langle t_1,\ldots,t_p\rangle,\langle\Lambda^{1},\ldots,\Lambda^{p}\rangle)$, we return {\sf NO} and stop.

\medskip
To show correctness, assume first that $(G,w,\Lambda,k)$ is a yes-instance of \probWCGC. Then by Claim~A, there is $F\subseteq E(G)$ with ${\sf w}(F)\leq k$ such that $I_F=\{j_1,\ldots,j_p\}$ where $p\leq k$ and the following is fulfilled: 
for each $i\in \{1,\ldots,p\}$, there is a $t_i$-subtuple $\Lambda^i=\langle \lambda_1^i,\ldots,\lambda_{t_i}^i\rangle\subseteq \Lambda$ such that (i) and (ii) are fulfilled and 
and $\Lambda\preceq(s-p)\langle \lambda\rangle+\Lambda^1+\ldots+\Lambda^p$. We can assume without loss of generality that 
$t_1\leq\ldots\leq t_p$ as, otherwise, we can reorder the components of $G$.  Clearly, $\langle t_1,\ldots,t_p\rangle\in \mathcal{P}_{p}(t-s)(n)$ and $\Lambda^i\in \mathcal{L}_{t_i}$ for $i\in\{1,\ldots,p\}$. Then by (ii), $(G_{j_i},w,\Lambda^i,\hat{k})$ is a yes-instance of  \probWCGC for $\hat{k}={\sf w}(F\cap E(G_{j_i}))$, that is, $\sum_{i=1}^pc(iG_{j_i})\leq k$. We conclude that 
$M=\{iG_{j_i}\mid i\in\{1,\ldots,p\}\}$ is a matching of cost at most $k$ in $\mathcal{G}$.

Assume now that for some choice of $(\langle t_1,\ldots,t_p\rangle,\langle\Lambda^{1},\ldots,\Lambda^{p}\rangle)$, $\mathcal{G}$ has a matching $M=\{iG_{j_i}\mid i\in\{1,\ldots,p\}\}$  of cost at most $k$. For every $i\in\{1,\ldots,p\}$, $(G_{j_i},w,\Lambda^i,\hat{k})$ is a yes-instance of  \probWCGC for $\hat{k}=c(iG_{j_i})$. Let $F_i$ be a solution for this instance and let $F=\bigcup_{i=1}^pF_i$. Then Claim~A immediately implies that $(G,w,\Lambda,k)$ is a yes-instance of \probWCGC.

Since  the preliminary steps of the algorithm can be done in polynomial time, we have to evaluate the running time of this final part of the algorithm. Observe that because $p\leq k$ and $t-s\leq k$, we consider $2^{\Oh(k\log k)}$ partitions $\langle t_1,\ldots,t_p\rangle\in \mathcal{P}_{p}(t-s+p)$.
Since each $t_i\leq 2k$ and $\mathbf{var}(\Lambda)\leq 3k$, there are $2^{\Oh(k\log k)}$ choices of each $\Lambda^i$ in  $\langle\Lambda^{1},\ldots,\Lambda^{p}\rangle$, and 
$2^{\Oh(k^2\log k)}$ choices of $\langle\Lambda^{1},\ldots,\Lambda^{p}\rangle$. That is, we consider $2^{\Oh(k^2\log k)}$ choices of $(\langle t_1,\ldots,t_p\rangle,\langle\Lambda^{1},\ldots,\Lambda^{p}\rangle)$. Note that we can verify whether  $\Lambda\preceq(s-p)\langle \lambda\rangle+\Lambda^1+\ldots+\Lambda^p$ in polynomial time. The construction of $\mathcal{G}$ together with computing the costs of edges can be done in time $2^{2^{2^{2^{2^{2^{\Oh(k)}}}}}}\cdot n^{\Oh(1)}$ as we solve at most $kps$ instances of \probWCGC for connected graphs for $p\leq k$ and $s\leq n$ using Theorem~\ref{thm:connect}. Then the total running time is  $2^{2^{2^{2^{2^{2^{\Oh(k)}}}}}}\cdot n^{\Oh(1)}$.
\end{proof}

Now we are ready to show the main claim.

\begin{theorem}\label{thm:general}
\probWCGC can be solved in time   $2^{2^{2^{2^{2^{2^{\Oh(k)}}}}}}\cdot n^{\Oh(1)}$.
\end{theorem}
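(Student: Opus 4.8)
The plan is to reduce the general problem to the uniform‑connectivity case of Lemma~\ref{lem:uniform} after a bounded amount of guessing. Let $(G,w,\Lambda,k)$ be an instance with components $G_1,\dots,G_s$ and $\Lambda=\langle\lambda_1,\dots,\lambda_t\rangle$. Deleting edges can only increase the number of components and creates at most $k$ new ones, so we may assume $0\le t-s\le k$, and a solution deletes an edge from at most $k$ of the components. The key structural point is that a component $C$ with $\lambda^{\sf w}(C)>k$ is never touched by a minimal solution: disconnecting it costs more than $k$, and deleting other edges of it only wastes budget, so it survives verbatim. Accordingly I would split the components into the \emph{high} ones ($\lambda^{\sf w}>k$, untouchable) and, for each of the at most $k$ values $\lambda\in\{1,\dots,k\}$ that occurs, the class $\mathcal C^{(\lambda)}$ of components of weighted connectivity exactly $\lambda$. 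Since the classes are pairwise vertex‑disjoint, both the weight budget and the multiset of required connectivities split additively over them.

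Fix a putative solution $F$. It touches $p_\lambda$ components of $\mathcal C^{(\lambda)}$ with $p:=\sum_\lambda p_\lambda\le k$, leaving $u_\lambda:=|\mathcal C^{(\lambda)}|-p_\lambda$ untouched ones, each contributing a component of connectivity $\lambda$ to $G-F$; together with the high components these give a fixed multiset $B$ of ``surviving'' connectivities of size $s-p$. The touched components, after splitting, produce exactly $m:=t-(s-p)=t-s+p\le 2k$ further components, and $B$ together with their connectivities must dominate $\Lambda$. The algorithm therefore: (i) guesses the vector $(p_\lambda)_\lambda$, fixing $B$ ($2^{\Oh(k)}$ choices); (ii) computes a \emph{residual demand} $\Lambda^\ast$ by greedily matching each element of $\Lambda$, from largest to smallest, to the smallest still‑available element of $B$ that dominates it, discarding the guess if the number of unmatched demands exceeds $m$ (otherwise it equals $m$); (iii) guesses a composition of $m$ over the at most $p\le k$ classes with $p_\lambda>0$, a target count $t_\lambda$ per such class, and a partition of the $\le 2k$ entries of $\Lambda^\ast$ into sub‑tuples $\Lambda^{\ast,(\lambda)}$ of sizes $t_\lambda$ ($2^{\Oh(k\log k)}$ choices overall). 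For each class this leaves a \probWCGC instance on $\mathcal C^{(\lambda)}$ — all of whose components have connectivity $\lambda\le k$ — with demand $u_\lambda\langle\lambda\rangle+\Lambda^{\ast,(\lambda)}$; running Lemma~\ref{lem:uniform} for budgets $0,1,\dots,k$ yields the minimum weight needed for that class, and the original instance is a yes‑instance iff for some guess the sum of these minima over all classes is at most $k$.

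Correctness needs both directions. For soundness one glues per‑class solutions $F_\lambda$: using that merging two dominated tuples is again a dominated tuple, and that $B+\Lambda^\ast\succeq\Lambda$ holds by construction (the greedy matching is a perfect domination matching from $B+\Lambda^\ast$ onto $\Lambda$), the graph $G-\bigcup_\lambda F_\lambda$ has exactly $t$ components with connectivities dominating $\Lambda$ and total weight at most $k$. For completeness one reads off from a solution $F$ the true $(p_\lambda)$, $(t_\lambda)$ and the multiset $N_\lambda$ of connectivities of the pieces produced inside $\mathcal C^{(\lambda)}$; the two non‑routine facts are that the greedy residual $\Lambda^\ast$ is dominated by $\bigsqcup_\lambda N_\lambda$ whenever $B+\bigsqcup_\lambda N_\lambda\succeq\Lambda$ (optimality of greedily matching $B$ against $\Lambda$), and that a sorted domination $\bigsqcup_\lambda N_\lambda\succeq\Lambda^\ast$ refines, for a size‑respecting partition of $\Lambda^\ast$, to $N_\lambda\succeq\Lambda^{\ast,(\lambda)}$ for every $\lambda$ (restrict a sorted‑to‑sorted domination matching). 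Some guess coincides with this data, and each class‑$\lambda$ sub‑instance is then a yes‑instance at budget ${\sf w}(F_\lambda)$, with $\sum_\lambda{\sf w}(F_\lambda)\le k$.

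For the running time, there are $2^{\Oh(k\log k)}$ guesses, and for each one we make $\Oh(k)$ calls to Lemma~\ref{lem:uniform} for each of the at most $k+1$ classes, so the total is $2^{\Oh(k\log k)}\cdot 2^{2^{2^{2^{2^{2^{\Oh(k)}}}}}}\cdot n^{\Oh(1)}=2^{2^{2^{2^{2^{2^{\Oh(k)}}}}}}\cdot n^{\Oh(1)}$, as claimed. I expect the main obstacle to lie entirely in the combinatorial bookkeeping of steps~(ii)--(iii): defining the residual‑demand operation so that it is simultaneously necessary and sufficient, and proving the two domination lemmas that make the decomposition over connectivity classes faithful to the global instance. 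Everything else — the trivial cases, the additivity over classes, and the time analysis — is routine.
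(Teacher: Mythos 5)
Your proposal is correct and follows essentially the same route as the paper: after the trivial cases, components with $\lambda^{\sf w}>k$ are observed to be untouchable, the remaining components are grouped by their common connectivity value $\le k$, one brute-forces over the per-class counts and demand sub-tuples ($2^{\Oh(k^2\log k)}$ guesses), solves each class via Lemma~\ref{lem:uniform}, and checks that the budgets sum to at most $k$ and that the merged connectivities dominate $\Lambda$. The only (immaterial) difference is bookkeeping: where you compute a canonical greedy residual $\Lambda^\ast$ and need the two exchange-argument lemmas you flag, the paper instead removes the high-connectivity components by a reduction rule that also deletes the largest demand each satisfies, guesses the per-class sub-tuples of $\Lambda$ directly, and verifies the sorted domination afterwards (its Claim~B), which sidesteps any greedy-optimality argument.
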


\begin{proof}
Let $(G,w,\Lambda,k)$ be an instance \probWCGC, $\Lambda=\langle \lambda_1,\ldots,\lambda_t\rangle$. 

We find the components of $G$ and compute their weighted connectivities using the algorithm of Stoer and Wagner~\cite{StoerW97} for finding minimum cuts. Assume that $G_1,\ldots,G_s$ are the components of $G$ and $\lambda^{\sf w}(G_1)\leq\ldots\leq\lambda^{\sf w}(G_s)$. If $s>t$, then $(G,w,\Lambda,k)$ is a trivial no-instance of  \probWCGC since we cannot reduce the number of components by deleting edges. If $s=t$, then $(G,w,\Lambda,k)$ is a yes-instance if and only if $\Lambda\preceq\langle\lambda^{\sf w}(G_1),\ldots,\lambda^{\sf w}(G_t)\rangle$  and we verify this condition in polynomial time. If $s<t-k$,  then $(G,w,\Lambda,k)$ is no-instance, because by deleting at most $k$ edges it is possible to obtain at most $k$ additional components. In all these cases we return the corresponding answer and stop. From now we assume that $t-k\leq s< t$.

We exhaustively apply the following reduction rule.

\begin{reduction}\label{red:remove}
If there is $i\in \{1,\ldots,s\}$ such that $\lambda^{\sf w}(G_i)>k$, then find the maximum $j\in\{1,\ldots,t\}$ such that $\lambda^{\sf w}(G_i)\geq\lambda_j$ and 
set $G=G-V(G_i)$ and $\Lambda=\Lambda\setminus\{\lambda_j\}$.
\end{reduction}

To see that the reduction rule is safe, that is, it produces an equivalent instance of \probWCGC, assume that  $\lambda^{\sf w}(G_i)>k$ for a component of $G$. Let $G'=G-V(G_i)$ and $\Lambda'=\Lambda\setminus\{\lambda_j\}$ be obtained by the application of the rule. 
For any $S\subseteq E(G_i)$ with ${\sf w}(S)\leq k$, $G_i-S$ is connected. This implies that if $(G,w,\Lambda,k)$ is a yes-instance, then it has a solution $F\subseteq E(G)$ with $F\cap E(G_i)=\emptyset$, that is, $G_i$ is a component of $G-F$.  Observe that by the choice of $\lambda_j$, $\lambda_j$ is the maximum connectivity constraint satisfied by $G_i$. Then 
$(G',w,\Lambda',k)$ is a yes-instance of \probWCGC. For the opposite direction, assume that $F$ is a solution for $(G',w,\Lambda',k)$. Then it is straightforward to verify that $F$ is a solution for $(G,w,\Lambda,k)$ as well.

To simplify notations, assume that $G$ with its components $G_1,\ldots,G_s$ and $\Lambda=\langle \lambda_1,\ldots,\lambda_t\rangle$ is obtained from the original input by the 
 exhaustive application of Reduction Rule~\ref{red:remove}. Note that we still have that  $t-k\leq s< t$. It may happen that $s=0$, that is, $G$ became empty after the application of the rule. In this case $(G,w,\Lambda,k)$ is a trivial no-instance, and we return {\sf NO} and stop. Assume that $s\geq 1$.
Observe that we obtain that $\lambda^{\sf w}(G_i)\leq k$ for $i\in\{1,\ldots,s\}$, because Reduction Rule~\ref{red:remove} cannot be applied any more.  

In exactly the same way as in the proof of Lemma~\ref{lem:uniform} we have that  if $|\mathbf{var}(\Lambda)|>3k$, then $(G,w,\Lambda,k)$ is a no-instance of \probWCGC. If  $|\mathbf{var}(\Lambda)|>3k$, then $\Lambda$ contains at least $2k+1$ elements $\lambda_i>k$ and it is impossible to obtain at least $2k+1$ components from the components with weighted connectivity at most $k$ by deleting a set of at most $k$ edges.  Hence, if $|\mathbf{var}(\Lambda)|>3k$, we return {\sf NO} and stop. Assume that $|\mathbf{var}(\Lambda)|\leq 3k$.

For $i\in\{1,\ldots,k\}$, denote by $H_i$ the graph obtained by taking the union of the components $G_j$ of $G$ of weighted connectivity $\lambda^{\sf w}(G_j)=i$. Note that some of $H_i$ could be empty. Let $s_i$  denote the number of components of $H_i$ for $i\in\{1,\ldots,k\}$. 
Similarly to Claim~A in the proof of Lemma~\ref{lem:uniform} we can observe the following.

\medskip
\noindent 
{\bf Claim~B.} {\it
The instance $(G,w,\Lambda,k)$ is a yes-instance of \probWCGC if and only if there are nonnegative integers $h_1,\ldots,h_k$, $p_1,\ldots,p_k$ and $t_1,\ldots,t_k$, and $t_i$-subtuples $\Lambda^i\subseteq\Lambda$ 
such that 
\begin{itemize}
\item[(i)] $h_1+\ldots+h_k\leq k$,
\item[(ii)] $p_i\leq s_i$, $p_i\leq t_i\leq 2h_i$ for $i\in\{1,\ldots,k\}$,
\item[(iii)]  $\sum_{i=1}^k(t_i-p_i)=t-s$,
\item[(iv)] for each $i\in\{1,\ldots,k\}$, $(H_i,w,(s_i-p_i)\langle i\rangle+\Lambda^i,h_i)$ is a yes-instance of \probWCGC, and 
\end{itemize}
and $\Lambda\preceq\sum_{i=1}^k((s_i-p_i)\langle \lambda\rangle+\Lambda^1+\ldots+\Lambda^p)$.
}

\begin{subproof}[Proof of Claim~B]
Suppose that $F$ is a solution of $(G,w,\Lambda,k)$. We assume that $F$ is inclusion minimal, that is, each edge of $F$ is included in some separator. For $i\in\{1,\ldots,k\}$, let $h_i={\sf w}(F\cap E(H_i))$ and $p_i$ be the number of components of $H_i$ containing edges of $F$. Then the deletion of the edges of $F\cap E(H_i)$ separates $p_i$ components of $H_i$ into $t_i$ components $C_1^i,\ldots,C_{t_i}^i$ for some $t_i\geq 0$. Assume that $\lambda^{\sf w}(C_1^i)\leq\ldots\leq\lambda^{\sf w}(C_{t_i}^i)$. For $j\in\{1,\ldots,t_i\}$, we chose the maximum $\lambda_{j}^i\in\Lambda$ such that $\lambda^{\sf w}(C_j^i)\geq\lambda_{j}^i$. Then we define $\Lambda^i=\langle\lambda_1^i,\ldots,\lambda_{t_i}^i\rangle$. It is straightforward to verify that (i)--(iii) are fulfilled for these values of $h_1,\ldots,h_k$, $p_1,\ldots,p_k$, $t_1,\ldots,t_k$, and $t_i$-tuples $\Lambda^i$ constructed for $i\in\{1,\ldots,k\}$. Also we have that   $\Lambda\preceq\sum_{i=1}^k((s_i-p_i)\langle \lambda\rangle+\Lambda^1+\ldots+\Lambda^p)$.

 Suppose now that there are  nonnegative integers $h_1,\ldots,h_k$, $p_1,\ldots,p_k$ and $t_1,\ldots,t_k$, and $t_i$-subtuples $\Lambda^i\subseteq\Lambda$ 
such that (i)--(iii) are fulfilled and  $\Lambda\preceq\sum_{i=1}^k((s_i-p_i)\langle \lambda\rangle+\Lambda^1+\ldots+\Lambda^p)$. Let $F_i\subseteq E(H_i)$ be a solution for $(H_i,w,(r_i-p_i)\langle i\rangle+\Lambda^i,h_i)$. Let $F=F_1\cup\ldots\cup F_k$. Because of (i), ${\sf w}(F)\leq k$, and by (ii) and (iii), $|\sum_{i=1}^k((s_i-p_i)\langle \lambda\rangle+\Lambda^1+\ldots+\Lambda^p)|=t$. Since $\Lambda\preceq\sum_{i=1}^k((s_i-p_i)\langle \lambda\rangle+\Lambda^1+\ldots+\Lambda^p)$, we conclude that $F$ is a solution for $(G,w,\Lambda,k)$, that is, this is a yes-instance of \probWCGC.
\end{subproof}

We use Claim~B to construct the brute force algorithm for  \probWCGC. We consider all $k$-tuples of nonnegative integers $\mathbf{h}=\langle h_1,\ldots,h_k\rangle$, $\mathbf{p}=\langle p_1,\ldots,p_k\rangle$ and $\mathbf{t}=\langle t_1,\ldots,t_k\rangle$ such that 
\begin{itemize}
\item[(i)] $h_1+\ldots+h_k\leq k$,
\item[(ii)] $p_i\leq s_i$, $p_i\leq t_i\leq 2h_i$ for $i\in\{1,\ldots,k\}$,
\item[(iii)]  $\sum_{i=1}^k(t_i-p_i)=t-s$.
\end{itemize}
Then we construct all possible $k$-tuples $\mathcal{L}=\langle \Lambda^1,\ldots,\Lambda^k\rangle$ such that $\Lambda^i$ is a $t_i$-subtuple of $\Lambda$ for $i\in\{1,\ldots,k\}$ and
$\Lambda\preceq\sum_{i=1}^k((s_i-p_i)\langle \lambda\rangle+\Lambda^1+\ldots+\Lambda^p)$.
For each choice of $(\mathbf{h},\mathbf{p},\mathbf{t},\mathcal{L})$, we verify whether $(H_i,w,(s_i-p_i)\langle i\rangle+\Lambda^i,h_i)$ is a yes-instance of \probWCGC for $i\in\{1,\ldots,k\}$ using Lemma~\ref{lem:uniform}. If it holds, we return {\sf YES} and stop. Otherwise, we discard the current choice of   $(\mathbf{h},\mathbf{p},\mathbf{t},\mathcal{L})$. 
If we do not return {\sf YES} for any choice of  $(\mathbf{h},\mathbf{p},\mathbf{t},\mathcal{L})$, we return {\sf NO} and stop.

Claim~B immediately implies correctness. To evaluate the running time of this part of the algorithm, observe that 
because of (i) and (ii), there are $2^{\Oh(k\log k)}$ choices of $\mathbf{h}$, $\mathbf{p}$ and $\mathbf{t}$. 
Since $\mathbf{var}(\lambda)\leq 3k$, there are $2^{\Oh(k\log k)}$ choices of each $\Lambda^i$ and $2^{\Oh(k^2\log k)}$ choices of each $\mathcal{L}$. 
Clearly,  the condition  $\Lambda\preceq\sum_{i=1}^k((s_i-p_i)\langle \lambda\rangle+\Lambda^1+\ldots+\Lambda^p)$ can be checked in polynomial time.
We verify whether 
 $(H_i,w,(s_i-p_i)\langle i\rangle+\Lambda^i,h_i)$ is a yes-instance of \probWCGC for $i\in\{1,\ldots,k\}$ in time $2^{2^{2^{2^{2^{2^{\Oh(k)}}}}}}\cdot n^{\Oh(1)}$ by Lemma~\ref{lem:uniform}.
Therefore, the total running time is $2^{2^{2^{2^{2^{2^{\Oh(k)}}}}}}\cdot n^{\Oh(1)}$.

Since preliminary steps of our algorithm including the application Reduction Rule~\ref{red:remove} are polynomial, the running time of the algorithm is $2^{2^{2^{2^{2^{2^{\Oh(k)}}}}}}\cdot n^{\Oh(1)}$.
\end{proof}

\section{Conclusion}\label{sec:concl}
We proved that \probCGClong is \classFPT when parameterized by $k$. We obtained this result by applying the recursive understanding technique~\cite{ChitnisCHPP16}. The drawback of this approach is that the dependence of the running time on the parameter is huge and it seems unlikely that using the same approach one could avoid towers of the exponents similar to the function in Theorem~\ref{thm:general}. In particular, we do not see how to avoid using mimicking networks (see~\cite{HagerupKNR98,KhanR14} for the definitions and lower and upper bounds for the size of such networks) whose sizes are double-exponential in the number of terminals.  Hence, our result is qualitative. The natural question would be to ask whether we can get a better running time using different techniques. This question is interesting even for some special cases of \probCGC when the connectivity constraints are bounded by a constant or are the same for all components. 
From the other side, it is natural to ask about lower bounds on the running time. 
For an \classFPT parameterized problem, it is natural to ask whether it admits a polynomial kernel. We observe that it is unlikely that \probWCGC has  a polynomial kernel even  if there are no weights and the maximum connectivity constraint is one,  because it was shown by Cygan et al. in~\cite{CyganKPPW14} that already \textsc{$t$-Cut} parameterized by the solution size $k$ has no polynomial kernel unless $\classNP\subseteq\classCoNP/{\rm poly}$. 
Another direction of research is to consider vertex connectivities instead of edge connectivities.

%
 \newcommand{\bibremark}[1]{\marginpar{\tiny\bf#1}}
  \newcommand{\biburl}[1]{\url{#1}}

\end{document}